\theoremstyle{plain}
\newtheorem{propn}{Proposition}
\newtheorem{cor}{Corollary}
\newtheorem*{cor*}{Corollary}
\newtheorem*{construct*}{Construction}
\newtheorem*{hypoth*}{Hypothesis}
\newtheorem*{lem*}{Lemma}
\theoremstyle{definition}
\newtheorem{defn}{Definition}
\newtheorem*{defn*}{Definition}
\theoremstyle{remark}
\newtheorem*{rmk*}{Remark}
\newtheorem*{rmks*}{Remarks}
\newtheorem*{note*}{Note}
\newtheorem*{obs*}{Observation}
\newtheorem*{cav*}{Caveat}
\newcounter{examplectr}
\newenvironment{example}%
  {\refstepcounter{examplectr}\par\bfseries Example \arabic{examplectr}.\mdseries}%
  {\par}
\DeclareMathOperator{\mean}{mean}
\DeclareMathOperator{\diag}{diag}
\DeclareMathOperator{\rank}{rank}
\DeclareMathOperator{\zeros}{zeros}
\DeclareMathOperator{\avg}{avg}
\DeclareMathOperator{\rms}{rms}
  \colorlet{approx}{yellow!30}
  \colorlet{productive}{green!30}
  \colorlet{nonproductive}{orange!30}
  \colorlet{octavecomment}{red!50!blue}
  \colorlet{first70}{-blue}
  \colorlet{middle100}{green}
  \colorlet{last30}{-red}
  \colorlet{uniform}{gray!50}
  \colorlet{flat}{blue}
  \colorlet{feasible}{blue!15}
  \colorlet{feasible1}{blue!15}
  \colorlet{feasible2}{blue!5!gray!15}
  \colorlet{obtainable}{blue!15}
  \colorlet{obtainable1}{blue!15}
  \colorlet{obtainable2}{blue!5!gray!15}
  \colorlet{boundary}{blue!90!black}
  \colorlet{boundary1}{blue!50!black!75}
  \colorlet{boundary2}{blue!50!black!60}
  \colorlet{connection1}{blue!40!black!30}
  \colorlet{url}{blue!70!black}
  \colorlet{darkred}{red!50!black}
  \colorlet{xcol}{black!30!}
  \colorlet{ecol}{green!70!black}
    \newcolumntype{.}{D{.}{.}{-1}}
\itshape\color{octavecomment}]{\#},
\newcounter{aecount}
\newenvironment{aenum}
  {\begin{list}{\arabic{aecount})}
  { \usecounter{aecount}
    \setlength\labelwidth{3.0ex}
    \setlength\leftmargin{\labelwidth+\labelsep+2.0ex}%
    \setlength\rightmargin{3.0ex}%
    \setlength\parskip{-1.5ex}%
    \setlength\itemsep{-1.0ex}%
  }}%
  {\end{list}}
\newenvironment{adescript}
  {\begin{list}{}
  {
    \setlength\labelwidth{50pt}%
    \setlength\leftmargin{\labelwidth+\labelsep+3.0ex}%
    \setlength\rightmargin{3.0ex}%
    \setlength\parskip{-1.5ex}%
    \setlength\itemsep{-1.5ex}%
  }}%
  {\end{list}}
\newcommand{\ticker}[1]{\texttt{#1}}
\newcommand{\BigFig}[1]{\parbox{12pt}{\Huge #1}}
\newcommand{\BigZero}{\BigFig{0}}
\newcounter{saveeqn}
\begin{document}

\tolerance=550  
\emergencystretch=2ex 

\captionsetup{justification=centering}  

\thispagestyle{empty}
\begin{center}

\vspace*{0.5in}
\parbox{4.5in}{\LARGE
\centering
An algorithm for the orthogonal decomposition of financial
return data
}

\vspace*{0.5in}
Vic Norton\\
Department of Mathematics and Statistics\\
Bowling Green State University\\
\url{mailto:vic@norton.name}\\
\url{http://vic.norton.name}

\vspace*{0.25in}
12-Jun-2012\\
revised 18-May-2013

\vspace*{1.0in}
ABSTRACT\\[2.5ex]
\begin{minipage}{4.1in}
We present an algorithm for the decomposition of periodic financial
return data into orthogonal factors of expected return and ``systemic'',
``productive'', and ``nonproductive'' risk. Generally, when the number
of funds does not exceed the number of periods, the expected return of a
portfolio is an affine function of its productive risk.

\vspace*{1.0in}
\textbf{Key Words:}
portfolio selection, mean-variance analysis, principal components of risk
\end{minipage}

\end{center}

\clearpage


\section*{Preface}
\thispagestyle{empty}

This is a paper about our \textbf{rtndecomp} algorithm, an algorithm for
decomposing financial return data into expected returns and principal
components of risk. A complete listing of the algorithm appears in
Appendix \ref{octave-listing}. Section \ref{rtndecomp_function}
describes exactly what the algorithm does. The rest of the paper is
background---more or less.

The paper is accompanied by three ancillary text files:
\begin{itemize}
  \item rtndecomp.m -- The GNU Octave function.
  \item GPLv3.txt -- The GNU General Public License governing the
  use of the \mbox{rtndecomp.m} code.
  \item AdjustedClosingPrices\_2010-2011.csv --
  The adjusted closing prices, in tab-separated-value (spreadsheet)
  format, of 22 iShares exchange traded funds on the 505 market days
  from 2009-12-31 to 2011-12-30 inclusive. These prices are normalized
  at 100.000 on 2010-12-31. This means that the proportions in a
  notional portfolio $\mathbf{p}=(p_1,\ldots,p_{22})$ represent the
  proportions of the 22 securities in an actual investment portfolio at
  the close of 2010-12-31. The security proportions in the same
  investment portfolio are typically different at the close of any one
  of the other 504 market days under consideration.
\end{itemize}

Section \ref{output-examples}, \textbf{Examples of output}, illustrates
the application of the algorithm to real world data. All computations in
this section are based on the adjusted closing prices in
``AdjustedClosingPrices\_2010-2011.csv.''


\newpage
\thispagestyle{empty}
\tableofcontents


\newpage
\setcounter{page}{1}
\section{The standard model}
\label{stdmodel}

We start with a synopsis of the ``standard mean-variance
portfolio selection model'' (\cite[pp. 3--5]{Markowitz:1987wd}) for
ex post return data.

Given an $M \times n$ matrix $R = [\mathbf{r}_1,\ldots,\mathbf{r}_n]$ of
successive periodic returns ($M$ returns for each of $n$ securities), an
investor is to choose the proportions $\mathbf{p} = [p_1,\ldots,p_n]^T$
invested in each security, the proportions being subject to the
constraints $p_j\ge0~ (j=1,\ldots,n),~ \sum_{j=1}^n p_j = 1.$ We assume
that the periodic returns, $\mathbf{r}_\textbf{p}\in\mathds{R}^M$, of the
corresponding \emph{investment portfolio} satisfy the \emph{linear
hypothesis}
\begin{equation}\label{linear_hypothesis}
  \mathbf{r}_\textbf{p} = \sum_{j=1}^n \mathbf{r}_j p_j = R \mathbf{p}.
\end{equation}
We also assume that \emph{expected} periodic return is a linear function
of periodic return or, in other words, the expected periodic return of
security $j$ is given by $e_j=\bm{\omega}^T\bm{r}_j$ for $j=1,\dots,n$.
Here the weight vector $\bm{\omega}\in\mathds{R}^M$ should satisfy
$\omega_i > 0~ (i=1,\dots,M)$ and $\sum_{i=1}^M\omega_i = 1.$

Under these assumptions, the expected periodic return of
the investment portfolio corresponding to $\mathbf{p}$ is
\begin{equation}\label{expected_return}
  e_\textbf{p} = \sum_{j=1}^n e_j p_j = E \mathbf{p},
\end{equation}
with~ $E = [e_1,\ldots,e_n] = \bm{\omega}^T R$, and the \emph{variance}
of portfolio return is
\begin{equation}\label{var1}
  v_\textbf{p} =
  \sum_{j=1}^n\sum_{k=1}^n v_{jk} p_j p_k = \mathbf{p}^T V \mathbf{p},
\end{equation}
where the $n\times n$ \emph{covariance} matrix ~$V = [v_{jk}]$~ is given
by
\begin{equation}\label{cov1}
  v_{jk}= \sum_{i=1}^M \omega_i z_{ij}z_{ik}
  \quad (j, k = 1,\ldots n),
\end{equation}
the deviation or ``risk'' vectors~
$\mathbf{z}_j\in\mathds{R}^M~ (j=1,\ldots,n)$~
being defined by
\begin{equation}\label{risk-vectors}
  \mathbf{z}_j = \mathbf{r}_j - \mathbf{1}_M e_j,
\end{equation}
with $\mathbf{1}_M\in\mathds{R}^M$ representing the constant return
vector of all 1's.\\

\vspace*{1.0ex}
\begin{cav*}
  If the periodic returns in $R$ are \emph{normalized linear returns},
  then the normalized linear returns of each investment portfolio in the
  $n$ securities satisfy the linear hypothesis \eqref{linear_hypothesis}
  with respect to some $\mathbf{p}=[p_1,\ldots,p_n]^T$, and all of the
  above statements follow (\cite{Norton:2011fk}). More typically, when
  compound periodic returns are used, the linear hypothesis cannot be
  satisfied by any $\mathbf{p}$, and the arguments of this paper do not
  apply.
\end{cav*}


\section{Geometry}
\label{geometry}

We will consider the ex post standard model from a geometric
standpoint. The a priori weights, $\bm{\omega}$, of section
\ref{stdmodel} induce a Euclidean metric on the space of consecutive
periodic returns, $\mathds{R}^M$:
\begin{equation}\label{inner-product}
  \langle\mathbf{x},\mathbf{y}\rangle_\omega
    = \sum_{i=1}^M\omega_i x_i y_i, \quad
  \|\mathbf{x}\|_\omega
    = \sqrt{\langle\mathbf{x},\mathbf{x}\rangle_\omega}\,,
    \quad\text{for}\quad
  \mathbf{x},\mathbf{y}\in\mathds{R}^M.
\end{equation}
Two return vectors $\mathbf{x}$ and $\mathbf{y}$ are \emph{orthogonal}
(perpendicular to each other) if
$\langle\mathbf{x},\mathbf{y}\rangle_\omega=0$.

The vector of all ones, $\mathbf{1}_M$, is a unit vector in this
Euclidean space since $\sum_{i=1}^M\omega_i=1$. The expected
return axis, the $E$-axis, points in the $\mathbf{1}_M$-direction. The
$E$-coordinate of any periodic return vector $\mathbf{r}\in\mathds{R}^M$,
\begin{equation}\label{e-coordinate}
  e=\langle\mathbf{1}_M,\mathbf{r}\rangle_\omega=\sum_{i=1}^M\omega_i r_i,
\end{equation}
is its expected return.

\begin{minipage}{2.0in}
  \captionof{figure}{$\mathbf{r}=\mathbf{z}+\mathbf{1}_Me$~~~}%
  \label{rze}
\begin{tikzpicture}[scale=1, >={angle 60}]
  \draw[->] (0,-0.5) -- (0,2.8) node[above]{$E$};
  \draw (-0.8,0) -- (3.5,0) node[right]{$\mathcal{Z}$};
  \draw[->, ultra thick] (0,0) -- (0,1.3);
  \draw (-0.2,1.3) node[left]{1} -- (0,1.3);
  \draw (0,0.5) node[left]{$\mathbf{1}_M$};
  \draw (0,0) rectangle (0.3,0.3);
  \draw[->, ultra thick] (0,0) -- (3,0);
  \draw (1.5,0) node[above]{$\mathbf{z}$};
  \draw (-0.2,2.2) node[left]{$e$} -- (0,2.2);
  \draw[dashed] (0,2.2) -- (3.0,2.2) -- (3.0,0);
  \fill[fill=black] (3.0,2.2) circle[radius=0.10] node[above=2.5] {$\mathbf{r}$};
  \draw (1.5,-0.02) node[below]{$\sigma(\mathbf{r})=\|\mathbf{z}\|_\omega$};
  \draw[->](0.3,-0.4) -- (0,-0.4);
  \draw[->](2.7,-0.4) -- (3.0,-0.4);
  \draw (3.0,0) -- (3.0,-0.5);
\end{tikzpicture}
\end{minipage}
\hfill\begin{minipage}{3.8in}
Each periodic return vector, $\mathbf{r}$, has an orthogonal
decomposition into its (scalar) expected-return component, $e$, and its
(vector) risk component,
\begin{equation}\label{zre}
  \mathbf{z} = \mathbf{r}-\mathbf{1}_M e,
\end{equation}
with expected return zero.
The standard deviation of periodic return
is simply the length or norm of the risk component,
\begin{equation}\label{stdv(r)}
  \sigma(\mathbf{r})=\|\mathbf{z}\|_\omega,
\end{equation}
and the variance of periodic return is its square norm,
\begin{equation}\label{var2}
  v(\mathbf{r})=\|\mathbf{z}\|_\omega^2.
\end{equation}
\end{minipage}

The covariance matrix $V=[v_{jk}]$ of \eqref{cov1} is the Gram matrix of
inner products of the security risk vectors
$Z=[\mathbf{z}_1,\ldots,\mathbf{z}_n]$ of \eqref{risk-vectors}:
\begin{equation}\label{cov2}
  v_{jk}=\langle\mathbf{z}_j,\mathbf{z}_j\rangle_\omega\quad
  (j,k=1,\ldots,n).
\end{equation}


\section{Linear subspaces and flats}\label{spaces-flats}

We are concerned with notional portfolios in $n$ specific securities.
The return vectors of these portfolios lie in the the linear subspace
$\mathcal{L}(R)$ of $\mathds{R}^M$ spanned by the return vectors,
$R=[\mathbf{r}_1,\ldots,\mathbf{r}_n]$, of the individual securities:
\begin{equation*}
  \mathcal{L}(R)=\{~\sum_{j=1}^n\mathbf{r}_jt_j: t_j\in\mathds{R}~\}.
\end{equation*}
The risk components of portfolio return vectors lie in the linear
subspace $\mathcal{L}(Z)$ of $\mathds{R}^M$ spanned by the risk
components, $Z=[\mathbf{z}_1,\ldots,\mathbf{z}_n]$, of the
$\mathbf{r}_j$:
\begin{equation*}
  \mathcal{L}(Z)=\{~\sum_{j=1}^n\mathbf{z}_jt_j: t_j\in\mathds{R}~\}.
\end{equation*}

Since the proportions of the securities in a notional portfolio must sum
to 1, portfolio return vectors and their risk components are contained
in the flats (affine subspaces of $\mathds{R}^M$) defined by
\begin{align*}
  \mathcal{F}(R)&=\{~\sum_{j=1}^n\mathbf{r}_jt_j: \sum_{j=1}^nt_j=1~\}\\
  \intertext{and}
  \mathcal{F}(Z)&=\{~\sum_{j=1}^n\mathbf{z}_jt_j: \sum_{j=1}^nt_j=1~\},
\end{align*}
respectively. We will refer to these as the $R$- and $Z$-flats.

Finally, we will be concerned with differences in periodic return
vectors, and the corresponding differences in their risk components,
from one notional portfolio to another. Such difference vectors reside
in the tangent spaces
\begin{align*}
  \mathcal{T}(R)&=\{~\sum_{j=1}^n\mathbf{r}_jt_j: \sum_{j=1}^nt_j=0~\}\\
  \intertext{and}
  \mathcal{T}(Z)&=\{~\sum_{j=1}^n\mathbf{z}_jt_j: \sum_{j=1}^nt_j=0~\}
\end{align*}
of the $R$- and $Z$-flats.\\

\begin{propn}\label{isomorph}
   If ~$\mathbf{1}_M\notin\mathcal{T}(R)$, then the risk component
   mapping $\mathbf{r}\mapsto\mathbf{z}$ defined by \eqref{e-coordinate}
   and \eqref{zre} restricts to a linear isomorphism of $\mathcal{T}(R)$
   onto $\mathcal{T}(Z)$.
\end{propn}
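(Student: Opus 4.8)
The plan is to recognize the risk-component map $\mathbf{r}\mapsto\mathbf{z}$ of \eqref{e-coordinate}--\eqref{zre} as a single linear operator on all of $\mathds{R}^M$, and then to verify three things about its restriction to $\mathcal{T}(R)$: that the image lands in $\mathcal{T}(Z)$, that it is onto $\mathcal{T}(Z)$, and that it is one-to-one. Substituting $e=\langle\mathbf{1}_M,\mathbf{r}\rangle_\omega$ into \eqref{zre} gives $\mathbf{z}=\mathbf{r}-\langle\mathbf{1}_M,\mathbf{r}\rangle_\omega\,\mathbf{1}_M$; call this operator $P$. Since $\mathbf{1}_M$ is a unit vector in the $\omega$-metric, $P$ is precisely orthogonal projection onto the hyperplane $\mathbf{1}_M^{\perp}$, and in particular $P$ is linear with kernel $\ker P=\{\,c\,\mathbf{1}_M:c\in\mathds{R}\,\}$, the line spanned by $\mathbf{1}_M$. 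These two facts---linearity and the explicit kernel---are what the whole argument rests on.

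First I would check that $P$ carries $\mathcal{T}(R)$ onto $\mathcal{T}(Z)$. Take any $\mathbf{r}=\sum_{j=1}^n\mathbf{r}_jt_j$ with $\sum_j t_j=0$. Using $\mathbf{z}_j=\mathbf{r}_j-\mathbf{1}_Me_j$ and $e_j=\langle\mathbf{1}_M,\mathbf{r}_j\rangle_\omega$, a short computation gives $P\mathbf{r}=\mathbf{r}-\bigl(\sum_j e_j t_j\bigr)\mathbf{1}_M=\sum_{j=1}^n\mathbf{z}_jt_j$, which lies in $\mathcal{T}(Z)$ because the same coefficients $t_j$ still sum to zero. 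Reading this identity in reverse shows every element of $\mathcal{T}(Z)$ is the image of the corresponding element of $\mathcal{T}(R)$, so $P\bigl(\mathcal{T}(R)\bigr)=\mathcal{T}(Z)$ and the restricted map is surjective.

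It remains to prove injectivity, which is where the hypothesis $\mathbf{1}_M\notin\mathcal{T}(R)$ enters and which I expect to be the only real obstacle. Suppose $\mathbf{r}\in\mathcal{T}(R)$ with $P\mathbf{r}=0$. Then $\mathbf{r}\in\ker P$, so $\mathbf{r}=c\,\mathbf{1}_M$ for some scalar $c$. The point to get right is that $\mathcal{T}(R)$ is a linear subspace: if $c\neq0$ then $\mathbf{1}_M=c^{-1}\mathbf{r}\in\mathcal{T}(R)$, contradicting the hypothesis; hence $c=0$ and $\mathbf{r}=0$. Equivalently, the hypothesis says exactly that $\ker P\cap\mathcal{T}(R)=\{0\}$, which is precisely the condition making the projection injective on $\mathcal{T}(R)$. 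Combining surjectivity with injectivity, $P$ restricts to a linear isomorphism of $\mathcal{T}(R)$ onto $\mathcal{T}(Z)$, as claimed.
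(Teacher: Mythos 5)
Your proof is correct and takes essentially the same approach as the paper: both arguments come down to the single key step that if a vector of $\mathcal{T}(R)$ has zero risk component then it is a scalar multiple of $\mathbf{1}_M$, and the hypothesis $\mathbf{1}_M\notin\mathcal{T}(R)$ together with $\mathcal{T}(R)$ being a linear subspace forces that scalar to vanish. Your packaging of the map as the global orthogonal projection onto $\mathbf{1}_M^{\perp}$ with kernel $\mathrm{span}(\mathbf{1}_M)$ is a slightly more structural way of phrasing what the paper computes in the coordinates $t_j$, but the underlying mechanism is identical.
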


\begin{proof}
The mapping from $\mathcal{T}(R)$ onto $\mathcal{T}(Z)$ can be expresses
as $\bm{\Delta}\mathbf{r}\mapsto\bm{\Delta}\mathbf{z}$ with
\begin{equation*}
  \bm{\Delta}\mathbf{r}=\bm{\Delta}\mathbf{z}
  +\mathbf{1}_M\sum_{j=1}^ne_jt_j,\quad
  \bm{\Delta}\mathbf{z}=\sum_{j=1}^n\mathbf{z}_jt_j,
  \quad\text{and}\quad
  \sum_{j=1}^nt_j=0.
\end{equation*}
To show that this mapping is a linear isomorphism, we need to show that
$\bm{\Delta}\mathbf{r}=\mathbf{0}_M$ whenever
$\bm{\Delta}\mathbf{z}=\mathbf{0}_M$. But, if
$\bm{\Delta}\mathbf{z}=\mathbf{0}_M$, then
$\bm{\Delta}\mathbf{r}=\mathbf{1}_M\sum_{j=1}^ne_jt_j$. And then,
since $\mathbf{1}_M\notin\mathcal{T}(R)$, 
$\sum_{j=1}^ne_jt_j=0$ and $\bm{\Delta}\mathbf{r}=\mathbf{0}_M$.
\end{proof}

\vspace*{2.0ex}
\begin{cor*}
  If ~$\mathbf{1}_M\notin\mathcal{T}(R)$, then
  \begin{equation}\label{z_mapsto_r}
    \sum_{j=1}^n\mathbf{z}_jt_j\mapsto
    \sum_{j=1}^n\mathbf{r}_jt_j\quad\text{for}\quad
    \sum_{j=1}^nt_j=1
  \end{equation}
  is a well-defined mapping of $\mathcal{F}(Z)$ onto $\mathcal{F}(R)$.
  It is the inverse of the risk component mapping from
  $\mathcal{F}(R)$ onto $\mathcal{F}(Z)$.
\end{cor*}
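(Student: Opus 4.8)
The plan is to recognize that the only substantive thing to verify is \emph{well-definedness}: once the assignment $\sum_j\mathbf{z}_jt_j\mapsto\sum_j\mathbf{r}_jt_j$ is known to be single-valued, the remaining claims come almost for free. Surjectivity onto $\mathcal{F}(R)$ is immediate, since every element of $\mathcal{F}(R)$ already has the form $\sum_j\mathbf{r}_jt_j$ with $\sum_j t_j=1$ and is thus the image of $\sum_j\mathbf{z}_jt_j$. First I would record that the risk component mapping $\mathbf{r}\mapsto\mathbf{z}=\mathbf{r}-\mathbf{1}_Me$ is linear in $\mathbf{r}$, so its restriction to $\mathcal{F}(R)$ carries the affine combination $\sum_j\mathbf{r}_jt_j$ (with $\sum_j t_j=1$) to $\sum_j\mathbf{z}_jt_j$; this uses only linearity of the $E$-coordinate \eqref{e-coordinate}, which gives $\sum_j\mathbf{1}_Me_jt_j=\mathbf{1}_M\sum_je_jt_j$. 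Hence \eqref{z_mapsto_r} is exactly the reverse of the risk component mapping on the flats, \emph{provided} it is well-defined.

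For well-definedness, I would suppose a point of $\mathcal{F}(Z)$ admits two representations, $\sum_j\mathbf{z}_jt_j=\sum_j\mathbf{z}_js_j$ with $\sum_j t_j=\sum_j s_j=1$. Setting $u_j=t_j-s_j$ yields $\sum_j u_j=0$ together with $\sum_j\mathbf{z}_ju_j=\mathbf{0}_M$. The condition $\sum_j u_j=0$ places $\sum_j\mathbf{r}_ju_j$ in $\mathcal{T}(R)$ and $\sum_j\mathbf{z}_ju_j$ in $\mathcal{T}(Z)$, and the latter vanishes. This is exactly where Proposition \ref{isomorph} does all the work: the risk component mapping is a linear \emph{isomorphism} of $\mathcal{T}(R)$ onto $\mathcal{T}(Z)$, hence injective, so $\sum_j\mathbf{z}_ju_j=\mathbf{0}_M$ forces $\sum_j\mathbf{r}_ju_j=\mathbf{0}_M$, i.e. $\sum_j\mathbf{r}_jt_j=\sum_j\mathbf{r}_js_j$. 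Thus the value of \eqref{z_mapsto_r} is independent of the chosen representation.

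With well-definedness established, I would finish by composing the two maps each way: for any $\sum_j\mathbf{r}_jt_j\in\mathcal{F}(R)$ and any $\sum_j\mathbf{z}_jt_j\in\mathcal{F}(Z)$ the round trips return the starting point, so the risk component mapping $\mathcal{F}(R)\to\mathcal{F}(Z)$ and the map \eqref{z_mapsto_r} are mutually inverse. I expect the well-definedness step to be the main (indeed the only) obstacle, and the key insight is that it is not a genuinely new argument but precisely the injectivity already packaged in Proposition \ref{isomorph}; everything else is bookkeeping with the constraint $\sum_j t_j=1$ and the linearity of the risk component mapping.
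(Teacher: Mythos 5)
Your proof is correct and follows exactly the route the paper intends: the paper states this corollary without a separate proof, treating it as an immediate consequence of Proposition \ref{isomorph}, and your argument---reducing well-definedness to the injectivity of the tangent-space isomorphism via the difference vector $u_j=t_j-s_j$ with $\sum_j u_j=0$---is precisely that deduction. The surrounding bookkeeping (surjectivity, linearity over affine combinations, and the two-sided inverse check) is also handled correctly.
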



\section{Components of portfolio risk}
\label{risk-components}

The \emph{total variance} of return of the periodic returns in
$R=[\mathbf{r}_1,\ldots,\mathbf{r}_n]$ is the sum of the variances of
return of the individual securities:
\begin{equation}\label{totalvar1}
  v_\text{T} = \sum_{j=1}^nv_{jj}
    = \sum_{j=1}^n\|\mathbf{z}_j\|_\omega^2.
\end{equation}
This is a measure of the volatility of the return data as a whole, of
the spread of the periodic returns in
$R=[\mathbf{r}_1,\ldots,\mathbf{r}_n]$ away from their expected values
$E=[e_1,\ldots,e_n]$.

Given a unit risk vector $\mathbf{u}\in\mathcal{L}(Z)$, the variance of
return of security $j$ \emph{in the $\mathbf{u}$-direction} is the
square of the $\mathbf{u}$-coordinate of its risk vector,
$\langle\mathbf{u},\mathbf{z}_j\rangle_\omega^2$. The total
variance of return in the $\mathbf{u}$-direction is the sum of the
$\mathbf{u}$-directional variances:
\begin{equation}\label{totalvaru}
  v_\textbf{u} = \sum_{j=1}^n\langle\mathbf{u},\mathbf{z}_j\rangle_\omega^2.
\end{equation}

If~ $\mathcal{U}\subset\mathcal{L}(Z)$ is an orthonormal basis for
$\mathcal{L}(Z)$ (a pairwise-orthogonal set of unit vectors that span
$\mathcal{L}(Z)$), then
\begin{equation}\label{var3}
  \|\mathbf{z}_j\|_\omega^2 =
    \sum_{\mathbf{u}\in\mathcal{U}}
    \langle\mathbf{u},\mathbf{z}_j\rangle_\omega^2 \quad
    (j=1,\ldots,n).
\end{equation}
Consequently
\begin{equation}\label{totalvar2}
  v_\text{T} = \sum_{j=1}^n\|\mathbf{z}_j\|_\omega^2 =
    \sum_{j=1}^n\sum_{\mathbf{u}\in\mathcal{U}}
    \langle\mathbf{u},\mathbf{z}_j\rangle_\omega^2 =
    \sum_{\mathbf{u}\in\mathcal{U}}\sum_{j=1}^n
    \langle\mathbf{u},\mathbf{z}_j\rangle_\omega^2 =
    \sum_{\mathbf{u}\in\mathcal{U}}v_\textbf{u}.
\end{equation}

For \emph{principal component analysis} (\cite{Wikipedia:2011rr}) one
attempts to choose the orthogonal basis $\mathcal{U}$ (orthogonal
coordinate system if you will) so that the sum of
$\mathbf{u}$-directional-total-variances on the right side of
\eqref{totalvar2} decomposes or ``explains'' the total variance, $v_\text{T}$,
in a particularly meaningful way. We are aiming for such a decomposition
of the total variance of return in this paper. Our idea of a
``particularly meaningful way'' will be defined in this section.


\subsection{Systemic risk}\label{systemic-risk}

\begin{lem*}
  Let $\mathbf{z}_0$ denote the point in the $Z$-flat
  that is closest to the origin:
  \begin{equation*}
    \|\mathbf{z}_0\|_\omega = \min\{~\|\mathbf{z}\|_\omega
    : \mathbf{z}\in\mathcal{F}(Z)~ \}.
  \end{equation*}
  Then
  \begin{equation}\label{z0perpZ-flat}
    \langle\mathbf{z}_0,\mathbf{z}-\mathbf{z}_0\rangle_\omega=0
    \quad\text{for all}\quad\mathbf{z}\in\mathcal{F}(Z)
  \end{equation}
\end{lem*}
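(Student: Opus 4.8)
The plan is to use the standard variational (perturbation) characterization of the nearest point on an affine subspace. The key geometric fact I will rely on is that $\mathcal{F}(Z)$ is a flat, so whenever $\mathbf{z}_0,\mathbf{z}\in\mathcal{F}(Z)$ the \emph{entire} line $\{\,\mathbf{z}_0+s(\mathbf{z}-\mathbf{z}_0):s\in\mathds{R}\,\}$ also lies in $\mathcal{F}(Z)$: writing $\mathbf{z}_0=\sum_j\mathbf{z}_ja_j$ and $\mathbf{z}=\sum_j\mathbf{z}_jb_j$ with $\sum_ja_j=\sum_jb_j=1$, the coefficients $a_j+s(b_j-a_j)$ still sum to $1$ for every real $s$.

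First I would fix an arbitrary $\mathbf{z}\in\mathcal{F}(Z)$ and define the scalar function
\begin{equation*}
  f(s)=\|\mathbf{z}_0+s(\mathbf{z}-\mathbf{z}_0)\|_\omega^2
      =\|\mathbf{z}_0\|_\omega^2
      +2s\,\langle\mathbf{z}_0,\mathbf{z}-\mathbf{z}_0\rangle_\omega
      +s^2\|\mathbf{z}-\mathbf{z}_0\|_\omega^2,
\end{equation*}
where the expansion uses only the bilinearity of $\langle\cdot,\cdot\rangle_\omega$. Because each point $\mathbf{z}_0+s(\mathbf{z}-\mathbf{z}_0)$ lies in $\mathcal{F}(Z)$ and $\mathbf{z}_0$ minimizes $\|\cdot\|_\omega$ over the whole flat, the value $s=0$ is a global minimizer of $f$ on all of $\mathds{R}$.

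I would then extract the orthogonality directly from the optimality of $s=0$. Since $f$ is a (quadratic, hence differentiable) function attaining its minimum at the interior point $s=0$, we must have $f'(0)=0$; computing $f'(0)=2\,\langle\mathbf{z}_0,\mathbf{z}-\mathbf{z}_0\rangle_\omega$ gives $\langle\mathbf{z}_0,\mathbf{z}-\mathbf{z}_0\rangle_\omega=0$, which is exactly \eqref{z0perpZ-flat}. As $\mathbf{z}$ was arbitrary, the identity holds for every $\mathbf{z}\in\mathcal{F}(Z)$; equivalently, since $\mathbf{z}-\mathbf{z}_0$ ranges over all of $\mathcal{T}(Z)$, the statement says $\mathbf{z}_0\perp\mathcal{T}(Z)$.

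I do not expect a genuine obstacle, as this is just the projection theorem for a closed affine set. The one point requiring care is confirming that the perturbation stays inside $\mathcal{F}(Z)$ for \emph{all} real $s$ and not merely $s\in[0,1]$: this is what makes $s=0$ an interior minimizer and forces the full derivative (rather than a one-sided derivative) to vanish. That is precisely where the affine structure of $\mathcal{F}(Z)$ enters, via the coefficient-sum observation recorded above.
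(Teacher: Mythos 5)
Your proposal is correct and follows essentially the same route as the paper: parametrize the line $\mathbf{z}_0 + s(\mathbf{z}-\mathbf{z}_0)$, observe it stays in $\mathcal{F}(Z)$ by the affine (coefficient-sum) structure, and set the derivative of the squared norm to zero at the interior minimizer $s=0$. Your version merely spells out two details the paper leaves implicit — the coefficient check that the whole line lies in the flat, and the explicit quadratic expansion of $f(s)$ — which is fine.
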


\begin{proof} Given $\mathbf{z}\in\mathcal{F}(Z)$,
  $\mathbf{z}(t)=\mathbf{z}_0+(\mathbf{z}-\mathbf{z}_0)t$
  is in $\mathcal{F}(Z)$ for all $t\in\mathds{R}$. By definition
  $\|\mathbf{z}(t)\|_\omega^2$ achieves its minimum value
  of $\|\mathbf{z}_0\|_\omega^2$ at $t=0$. Consequently
  \begin{equation*} \frac{1}{2}
    \left.\frac{d}{dt}\right|_{t=0}\|\mathbf{z}(t)\|_\omega^2
    = \langle\mathbf{z}_0,\mathbf{z}-\mathbf{z}_0\rangle_\omega=0.
  \end{equation*}
\end{proof}

\begin{propn}\label{vjkhat}
  Let $f_0=\|\mathbf{z}_0\|_\omega$. Then
\begin{equation}\label{cov3}
  v_{jk}=f_0^2+\hat{v}_{jk}\quad\text{with}\quad \hat{v}_{jk}=
    \langle\mathbf{z}_j-\mathbf{z}_0,\mathbf{z}_k-\mathbf{z}_0
    \rangle_\omega\quad (j,k=1,\ldots,n).
\end{equation}
\end{propn}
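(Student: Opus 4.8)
The plan is to expand the inner product $v_{jk}=\langle\mathbf{z}_j,\mathbf{z}_k\rangle_\omega$ of \eqref{cov2} by inserting $\mathbf{z}_0$ as a basepoint and then killing the cross terms with the orthogonality relation \eqref{z0perpZ-flat} supplied by the preceding Lemma. The whole argument reduces to one bilinear expansion plus one observation about where the $\mathbf{z}_j$ live.

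The observation I would establish first is that each security risk vector $\mathbf{z}_j$ lies in the $Z$-flat $\mathcal{F}(Z)$. Indeed, choosing the coefficient sequence $t_k=\delta_{jk}$ gives $\sum_{k=1}^n t_k=1$, so $\mathbf{z}_j=\sum_{k=1}^n\mathbf{z}_k t_k\in\mathcal{F}(Z)$. This is the step that licenses the Lemma: since both $\mathbf{z}_j$ and $\mathbf{z}_k$ belong to $\mathcal{F}(Z)$, the relation $\langle\mathbf{z}_0,\mathbf{z}-\mathbf{z}_0\rangle_\omega=0$ of \eqref{z0perpZ-flat} applies to each of them.

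Next I would write $\mathbf{z}_j=\mathbf{z}_0+(\mathbf{z}_j-\mathbf{z}_0)$ and likewise $\mathbf{z}_k=\mathbf{z}_0+(\mathbf{z}_k-\mathbf{z}_0)$, and expand $\langle\mathbf{z}_j,\mathbf{z}_k\rangle_\omega$ by bilinearity into four terms. The two cross terms $\langle\mathbf{z}_0,\mathbf{z}_k-\mathbf{z}_0\rangle_\omega$ and $\langle\mathbf{z}_j-\mathbf{z}_0,\mathbf{z}_0\rangle_\omega$ each vanish by \eqref{z0perpZ-flat} (using symmetry of the inner product for the second). The surviving diagonal term is $\langle\mathbf{z}_0,\mathbf{z}_0\rangle_\omega=f_0^2$, and the remaining term is precisely the $\hat v_{jk}=\langle\mathbf{z}_j-\mathbf{z}_0,\mathbf{z}_k-\mathbf{z}_0\rangle_\omega$ of the statement. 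Collecting gives $v_{jk}=f_0^2+\hat v_{jk}$ for all $j,k$.

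The only real obstacle is the membership claim $\mathbf{z}_j\in\mathcal{F}(Z)$; once that is noted, the proposition is immediate, since the Lemma is exactly the Pythagorean orthogonality needed to split $v_{jk}$ into a common ``systemic'' piece $f_0^2$ and the residual covariance $\hat v_{jk}$ measured from the foot of the perpendicular $\mathbf{z}_0$. Everything after that is a routine expansion, so I would keep it brief.
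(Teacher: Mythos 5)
Your proposal is correct and is essentially the paper's own proof: write $\mathbf{z}_j=(\mathbf{z}_j-\mathbf{z}_0)+\mathbf{z}_0$ and $\mathbf{z}_k=(\mathbf{z}_k-\mathbf{z}_0)+\mathbf{z}_0$, expand $\langle\mathbf{z}_j,\mathbf{z}_k\rangle_\omega$ bilinearly, and annihilate the two cross terms with the lemma \eqref{z0perpZ-flat}, leaving $f_0^2+\hat{v}_{jk}$. The only difference is that you explicitly verify $\mathbf{z}_j\in\mathcal{F}(Z)$ (via $t_k=\delta_{jk}$) before invoking the lemma, a step the paper leaves implicit; this is a minor tightening, not a different argument.
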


\begin{proof}
\begin{alignat*}{3}
  v_{jk} &= \langle\mathbf{z}_j,\mathbf{z}_k\rangle_\omega
    &\quad& \eqref{cov2} \\
    &=\langle(\mathbf{z}_j-\mathbf{z}_0)+\mathbf{z}_0,
      (\mathbf{z}_k-\mathbf{z}_0)+\mathbf{z}_0\rangle_\omega\\
    &=\langle\mathbf{z}_j-\mathbf{z}_0,\mathbf{z}_k-\mathbf{z}_0\rangle_\omega
      + \langle\mathbf{z}_0,\mathbf{z}_0\rangle_\omega \\
    &\quad +\langle\mathbf{z}_j-\mathbf{z}_0,\mathbf{z}_0\rangle_\omega
      +\langle\mathbf{z}_0,\mathbf{z}_k-\mathbf{z}_0\rangle_\omega
    &\quad& \text{(bilinear expansion)}\\
    &=\langle\mathbf{z}_j-\mathbf{z}_0,\mathbf{z}_k-\mathbf{z}_0\rangle_\omega
      + \langle\mathbf{z}_0,\mathbf{z}_0\rangle_\omega + 0 + 0
    &\quad& \text{(by the lemma)}\\
    &=\hat{v}_{jk}+f_0^2.
\end{alignat*}
\end{proof}

\begin{cor}
  \begin{equation}\label{totalvar3}
  v_\textnormal{T}=n f_0^2+\hat{v}_\text{T}\quad\text{where}\quad
  \hat{v}_\textnormal{T} = \sum_{j=1}^n\hat{v}_{jj}
   = \sum_{j=1}^n\|\mathbf{z}_j-\mathbf{z}_0\|_\omega^2.
  \end{equation}
\end{cor}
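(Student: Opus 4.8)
The plan is to derive the decomposition directly from Proposition \ref{vjkhat} by specializing it to the diagonal entries and summing; no new ideas are required beyond what the proposition already supplies. First I would recall the definition of total variance, \eqref{totalvar1}, namely $v_\text{T} = \sum_{j=1}^n v_{jj}$, which shows that the corollary is entirely a statement about the diagonal covariances $v_{jj}$ and their reorganization.

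Next I would set $k = j$ in the identity \eqref{cov3}. This immediately gives $v_{jj} = f_0^2 + \hat{v}_{jj}$, where
\begin{equation*}
  \hat{v}_{jj} = \langle\mathbf{z}_j-\mathbf{z}_0,\mathbf{z}_j-\mathbf{z}_0\rangle_\omega = \|\mathbf{z}_j-\mathbf{z}_0\|_\omega^2,
\end{equation*}
the second equality being just the definition of the norm induced by $\langle\cdot,\cdot\rangle_\omega$ in \eqref{inner-product}.

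Finally I would sum this diagonal identity over $j = 1,\ldots,n$. The constant term $f_0^2$ is independent of $j$, so it contributes $n f_0^2$, while the remaining terms collect into $\hat{v}_\text{T} = \sum_{j=1}^n \hat{v}_{jj} = \sum_{j=1}^n \|\mathbf{z}_j-\mathbf{z}_0\|_\omega^2$, yielding $v_\text{T} = n f_0^2 + \hat{v}_\text{T}$ exactly as claimed.

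Because this is an immediate corollary of the preceding proposition, I do not anticipate any genuine obstacle. The only point meriting a moment's care is recognizing that the off-diagonal structure of \eqref{cov3} plays no role here: only the $j = k$ case enters, and it is precisely the summation of the single constant $f_0^2$ over the $n$ securities that produces the factor $n$ distinguishing this result from the bookkeeping in Proposition \ref{vjkhat}.
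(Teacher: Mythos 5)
Your proof is correct and is exactly the paper's argument: the paper dispatches this corollary with the one line ``This follows from \eqref{totalvar1} and \eqref{cov3},'' which is precisely your route of setting $k=j$ in \eqref{cov3} and summing over $j$ against the definition \eqref{totalvar1}. Your write-up merely makes the implicit bookkeeping (the constant $f_0^2$ summing to $nf_0^2$) explicit.
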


This follows from \eqref{totalvar1} and \eqref{cov3}.
Then, by rewriting \eqref{cov3} in matrix form,
we see that\\

\begin{cor}
\begin{equation}\label{VeqV0+f0}
  V=f_0^2+\widehat{V} 
  \quad\text{where}\quad \widehat{V}=[\hat{v}_{jk}]
  ~(j,k=1,\ldots,n).
\end{equation}
\end{cor}
\vspace*{-2.0ex}
Here we adopt the convention that the sum of a scalar and a matrix is
the original matrix with the scalar
added to its every coefficient.

\setcounter{cor}{0}

\vspace*{1.0ex}
We refer to $f_0$ as the \emph{systemic portfolio risk}, $f_0^2$ is the
\emph{systemic portfolio variance}, and $n f_0^2$ is the \emph{total
systemic variance} of the system. The variance of return of any notional
portfolio $\mathbf{p}$ decomposes into its systemic and nonsystemic
parts:
\begin{equation}\label{var4}
  v_\textbf{p} = \mathbf{p}^T V \mathbf{p}
      = f_0^2 + \mathbf{p}^T \widehat{V} \mathbf{p}.
\end{equation}
The second equation follows from \eqref{VeqV0+f0} and
$\sum_{j=1}^np_j$ = 1.

 A \emph{minimum-variance portfolio} is a
notional portfolio $\mathbf{p}$ whose variance is less than or equal to
the variance of any other notional portfolio $\mathbf{q}$ with the same
expected return. Minimum-variance portfolios play a crutial role in
Markowitz's mean-variance analysis (\cite{Markowitz:1987wd}). By
\eqref{var4}
\begin{equation*}
  \mathbf{p}^TV\mathbf{p}\le\mathbf{q}^TV\mathbf{q}
  \quad\text{if and only if}\quad
  \mathbf{p}^T\widehat{V}\mathbf{p}\le\mathbf{q}^T\widehat{V}\mathbf{q}
\end{equation*}
for notional portfolios $\mathbf{p}$ and $\mathbf{q}$. Consequently, the
collection of all minimum-variance portfolios is completely determined
by the singular, \emph{nonsystemic covariance matrix} $\widehat{V}$ and
the expected return matrix $E$.

If $f_0=\|\mathbf{z}_0\|_\omega\neq0$, we will take
$\mathbf{u}_0=\mathbf{z}_0/f_0$ to be the first vector in our
orthonormal basis $\mathcal{U}$ for $\mathcal{L}(Z)$. This is the
\emph{direction of systemic risk}. Every vector
$\mathbf{z}\in\mathcal{F}(Z)$ has the same $\mathbf{u}_0$-coordinate,
$\langle\mathbf{u}_0,\mathbf{z}\rangle_\omega=f_0$, as can be
seen from the expansion
\begin{alignat*}{3}
  \langle\mathbf{u}_0,\mathbf{z}\rangle_\omega
  &=\langle\mathbf{u}_0,\mathbf{z}_0
    +(\mathbf{z}-\mathbf{z}_0)\rangle_\omega \\
  &=\langle\mathbf{u}_0,\mathbf{z}_0\rangle_\omega
    +\langle\mathbf{u}_0,\mathbf{z}-\mathbf{z}_0\rangle_\omega
    &\quad&\text{(linear expansion)} \\
  &=\langle\mathbf{u}_0,\mathbf{u}_0 f_0\rangle_\omega + 0
    &\quad&\text{(definition of $\mathbf{u}_0$
    and \eqref{z0perpZ-flat})} \\
  &=f_0.
\end{alignat*}

We will assume until further notice that
$\mathbf{1}_M\notin\mathcal{T}(R)$. Then
the mapping
$\mathbf{z}\mapsto\mathbf{r}$ from the $Z$-flat onto the $R$-flat is
well-defined by the corollary to Proposition \ref{isomorph}, and
$\mathbf{z}_0\in\mathcal{F}(Z)$ is the risk component of a unique
$\mathbf{r}_0\in\mathcal{F}(R)$. We will refer to
\begin{equation}\label{systemic-return}
  e_0=\langle\mathbf{1}_M,\mathbf{r}_0\rangle_\omega\
\end{equation}
as the
\emph{systemic portfolio return} of our system. Note that $e_0$ may not
be the expected return of any notional portfolio $\mathbf{p}$, all of
whose coefficients must be nonnegative.


\subsection{Productive risk}\label{productive-risk}

Equation \eqref{z0perpZ-flat} shows that the tangent space
$\mathcal{T}(Z)$ is the orthogonal complement of
$\mathbf{z}_0$ in $\mathcal{L}(Z)$.
Indeed $\mathcal{T}(Z)$ is spanned by the difference vectors
~$\mathbf{z}_j-\mathbf{z}_0~ (j=1,\ldots,n)$, and the nonsystemic
covariance matrix, $\widehat{V}$, is the Gram matrix of these
difference vectors. We will select the remaining orthonormal basis vectors
$\mathbf{u}_i~ (i=1,\ldots,m;~ m<n)$ from $\mathcal{T}(Z)$.
Then the \emph{total nonsystemic variance} $\hat{v}_\text{T}$ of
\eqref{totalvar3}) will decompose as the sum of the squares of the
$\mathbf{u}_i$-coordinates of the $\mathbf{z}_j-\mathbf{z}_0$,
\begin{equation}\label{total-nonsys-var}
  \hat{v}_\text{T}=\sum_{j=1}^n\|\mathbf{z}_j-\mathbf{z}_0\|_\omega^2
    =\sum_{i=1}^m\sum_{j=1}^n\langle\mathbf{u}_i,
      \mathbf{z}_j-\mathbf{z}_0\rangle_\omega^2
    =\sum_{i=1}^m\hat{v}_{\textbf{u}_i}\,,
\end{equation}
and $\widehat{V}$ will factor as $\widehat{V} = F^TF$, with the
coefficients of the $m\times n$ factor matrix $F$ given by
\begin{equation}\label{factor-matrix}
  f_{ij}=\langle\mathbf{u}_i,\mathbf{z}_j-\mathbf{z}_0\rangle_\omega
  \quad(i=1,\ldots,m;j=1,\ldots,n).
\end{equation}

\vspace*{1.0ex}
We will continue to assume that $\mathbf{1}_M\notin\mathcal{T}(R)$, so
that $\mathbf{r}\mapsto\mathbf{z}$ is a bijection of $\mathcal{F}(R)$
onto $\mathcal{F}(Z)$, and further suppose that the $n$ securities do
not all have the same expected return. Under these assumptions the
orthogonal projection of $\mathbf{1}_M$ onto $\mathcal{T}(R)$ is neither
$\mathbf{0}_M$ nor $\mathbf{1}_M$ itself.

\begin{minipage}{2.2in}
  
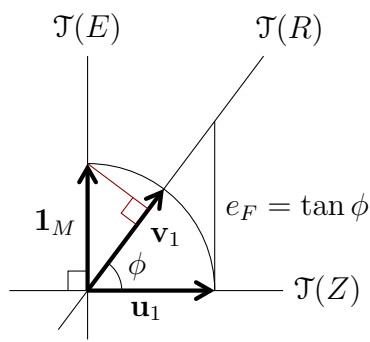
\captionof{figure}{The $(\mathbf{u}_1,\mathbf{1}_M)$-plane~~~}
  \label{u1-1m}
\begin{tikzpicture}[scale=1.3, >={angle 60}]
  \draw (0,-0.5) -- (0,2.4) node[above]{$\mathcal{T}(E)$};
  \draw (-0.8,0) -- (2.0,0) node[right]{$\mathcal{T}(Z)$};
  \draw[->, ultra thick] (0,0) -- (0,1.3);
  \draw (0,0.7) node[left]{$\mathbf{1}_M$};
  \draw[->, ultra thick] (0,0) -- (1.3,0);
  \draw (0.6,0) node[below]{$\mathbf{u}_1$};
  \draw[->, ultra thick] (0,0) -- (0.78,1.04);
  \draw (0.80,0.55) node{$\mathbf{v}_1$};
  \draw (1.3,0) arc (0:90:1.3);
  \draw (0.35,0) arc (0:53.13:0.35);
  \draw (0.30,0.25) node[right] {$\phi$};
  \draw (0,0.2) -- (-0.2,0.2) -- (-0.2,0);
  \draw (-0.3,-0.4) -- (1.8,2.4) node[above=10,right=-7]{$\mathcal{T}(R)$};
  \draw[darkred] (0,1.3) -- (0.624,0.832);
  \draw[darkred] (0.464,0.952) -- (0.344,0.792) -- (0.504,0.672);
  \draw (1.3,0) -- (1.3,1.733);
  \draw (1.3,0.85) node[right]{$e_F=\tan\phi$};
\end{tikzpicture}
\end{minipage}
\hfill\begin{minipage}{3.4in}
Let $\mathbf{v}_1\in\mathcal{T}(R)$ denote the unit vector in the
direction of the orthogonal projection of $\mathbf{1}_M$ onto
$\mathcal{T}(R)$, as shown in Figure \ref{u1-1m}. Then $\mathbf{v}_1$ is
the direction of steepest increase of expected return in the $R$-flat.
Changes in expected return depend only on changes of periodic return in
the $\mathbf{v}_1$ direction in the sense that
\begin{align}\notag
  \Delta e &= <\mathbf{1}_M,\bm{\Delta}\mathbf{r}>_\omega\\
    &= <\mathbf{1}_M,\mathbf{v}_1>_\omega
    <\mathbf{v}_1,\bm{\Delta}\mathbf{r}>_\omega
    \label{dedr}
\end{align}
for all $\bm{\Delta}\mathbf{r}\in\mathcal{T}(R)$.
\end{minipage}

Now set
\begin{align}\label{u1defn}
  \mathbf{u}_1&=\frac{\mathbf{v}_1
  -\mathbf{1}_M\langle\mathbf{1}_M,\mathbf{v}_1\rangle_\omega
  }{\|\mathbf{v}_1
  -\mathbf{1}_M\langle\mathbf{1}_M,\mathbf{v}_1\rangle_\omega
  \|_\omega}\in\mathcal{T}(Z)\\
\intertext{so that}\label{anglephi}
  \mathbf{v}_1&=\mathbf{u}_1\cos\phi+\mathbf{1}_M\sin\phi
  \quad\text{with}\quad 0<\phi<\frac{\pi}{2}
\end{align}
as shown in Figure \ref{u1-1m}.\\

\begin{propn}\label{deltae-u1}
  Let
  \begin{align}\notag
    \Delta e&=\langle\mathbf{1}_M,\bm{\Delta}\mathbf{r}\rangle_\omega,\\
    \bm{\Delta}\mathbf{z}&=\bm{\Delta}\mathbf{r} \notag
    - \mathbf{1}_M\Delta e,\\
    \intertext{for $\Delta\mathbf{r}\in\mathcal{T}(R)$. Then}
    \Delta e&= \label{deltae-deltaz}
    e_F\langle\mathbf{u}_1,\bm{\Delta}\mathbf{z}\rangle_\omega,
  \end{align}
  with the $\mathbf{u}_1$ of \eqref{u1defn} and $e_F=\tan\phi$ as in
  Figure \ref{u1-1m}.
\end{propn}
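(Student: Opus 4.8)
The plan is to start from the factorization \eqref{dedr} of $\Delta e$ that the text has already established, namely $\Delta e = \langle\mathbf{1}_M,\mathbf{v}_1\rangle_\omega\langle\mathbf{v}_1,\bm{\Delta}\mathbf{r}\rangle_\omega$, and to rewrite the two inner products on the right using the decomposition $\mathbf{v}_1=\mathbf{u}_1\cos\phi+\mathbf{1}_M\sin\phi$ of \eqref{anglephi}. First I would record that taking the inner product of this decomposition with $\mathbf{1}_M$, and using $\mathbf{u}_1\in\mathcal{T}(Z)$ (so that $\langle\mathbf{1}_M,\mathbf{u}_1\rangle_\omega=0$) together with $\|\mathbf{1}_M\|_\omega=1$, gives $\langle\mathbf{1}_M,\mathbf{v}_1\rangle_\omega=\sin\phi$. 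Thus \eqref{dedr} becomes $\Delta e=\sin\phi\,\langle\mathbf{v}_1,\bm{\Delta}\mathbf{r}\rangle_\omega$.

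The second step is to expand $\langle\mathbf{v}_1,\bm{\Delta}\mathbf{r}\rangle_\omega$ via the same decomposition, obtaining $\cos\phi\,\langle\mathbf{u}_1,\bm{\Delta}\mathbf{r}\rangle_\omega+\sin\phi\,\langle\mathbf{1}_M,\bm{\Delta}\mathbf{r}\rangle_\omega$, where the second term is exactly $\sin\phi\,\Delta e$ by the definition of $\Delta e$. The crucial bridge to the statement is to replace $\langle\mathbf{u}_1,\bm{\Delta}\mathbf{r}\rangle_\omega$ by $\langle\mathbf{u}_1,\bm{\Delta}\mathbf{z}\rangle_\omega$: since $\bm{\Delta}\mathbf{z}=\bm{\Delta}\mathbf{r}-\mathbf{1}_M\Delta e$ and $\langle\mathbf{u}_1,\mathbf{1}_M\rangle_\omega=0$ (again because $\mathbf{u}_1\in\mathcal{T}(Z)$), the two inner products coincide. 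Substituting everything back yields the single scalar equation $\Delta e=\sin\phi\cos\phi\,\langle\mathbf{u}_1,\bm{\Delta}\mathbf{z}\rangle_\omega+\sin^2\phi\,\Delta e$.

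The one point that needs a little care --- the main obstacle, such as it is --- is that $\Delta e$ now appears on both sides, so the final step is to collect the $\Delta e$ terms and invoke $1-\sin^2\phi=\cos^2\phi$. This gives $\cos^2\phi\,\Delta e=\sin\phi\cos\phi\,\langle\mathbf{u}_1,\bm{\Delta}\mathbf{z}\rangle_\omega$, and dividing by $\cos^2\phi$ (legitimate since $0<\phi<\pi/2$ forces $\cos\phi\neq0$) produces $\Delta e=\tan\phi\,\langle\mathbf{u}_1,\bm{\Delta}\mathbf{z}\rangle_\omega=e_F\langle\mathbf{u}_1,\bm{\Delta}\mathbf{z}\rangle_\omega$, which is \eqref{deltae-deltaz}. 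An equivalent and perhaps more transparent route avoids the self-reference entirely: decompose $\bm{\Delta}\mathbf{r}=a\mathbf{v}_1+\mathbf{w}$ with $\mathbf{w}\in\mathcal{T}(R)$ orthogonal to $\mathbf{v}_1$; then $\mathbf{w}$ is orthogonal to $\mathbf{1}_M$ (as $\mathbf{1}_M-\langle\mathbf{1}_M,\mathbf{v}_1\rangle_\omega\mathbf{v}_1\perp\mathcal{T}(R)$) and hence also to $\mathbf{u}_1$, which by \eqref{u1defn} is a combination of $\mathbf{v}_1$ and $\mathbf{1}_M$. One reads off $\Delta e=a\sin\phi$ and, since $\bm{\Delta}\mathbf{z}=a\mathbf{u}_1\cos\phi+\mathbf{w}$, also $\langle\mathbf{u}_1,\bm{\Delta}\mathbf{z}\rangle_\omega=a\cos\phi$; multiplying the latter by $\tan\phi$ recovers the former.
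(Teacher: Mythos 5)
Your proposal is correct and follows essentially the same route as the paper's proof: both start from \eqref{dedr}, use $\langle\mathbf{1}_M,\mathbf{v}_1\rangle_\omega=\sin\phi$ and the expansion of $\langle\mathbf{v}_1,\bm{\Delta}\mathbf{r}\rangle_\omega$ via \eqref{anglephi}, replace $\langle\mathbf{u}_1,\bm{\Delta}\mathbf{r}\rangle_\omega$ by $\langle\mathbf{u}_1,\bm{\Delta}\mathbf{z}\rangle_\omega$, and then solve the resulting equation for $\Delta e$. The only difference is that you spell out the final algebra (collecting $\Delta e$ terms and dividing by $\cos^2\phi$) that the paper leaves implicit, which is a genuine improvement in completeness rather than a different method.
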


\begin{proof}
  The result follows from \eqref{dedr} and
  \begin{align*}
    \langle\mathbf{1}_M,\mathbf{v}_1\rangle_\omega&=\sin\phi, \\
    \langle\mathbf{v}_1,\bm{\Delta}\mathbf{r}\rangle_\omega&=
      \langle\mathbf{u}_1,\bm{\Delta}\mathbf{r}\rangle_\omega\cos\phi +
      \langle\mathbf{1}_M,\bm{\Delta}\mathbf{r}\rangle_\omega\sin\phi \\ &=
      \langle\mathbf{u}_1,\bm{\Delta}\mathbf{z}\rangle_\omega\cos\phi +
      \Delta e\sin\phi.
  \end{align*}
\end{proof}

Under the assumption $\mathbf{1}_M\notin\mathcal{T}(R)$, the change in
expected return, $\Delta e$, from one notional portfolio to another
depends only the change in risk component, $\bm{\Delta}\mathbf{z}$,
of the respective return vectors. This is a consequence of Proposition
\ref{isomorph}. Proposition \ref{deltae-u1} now shows that such a
change in expected return depends only on the change in the risk
component in the $\mathbf{u}_1$-direction. For this reason we refer to
the $\mathbf{u}_1$-direction of the Z-flat as the \emph{direction of
productive risk}. Changes in portfolio risk vectors in directions
orthogonal to the $\mathbf{u}_1$-direction have no effect on expected
reward. Such changes are \emph{nonproductive} in this sense.\\

\begin{cor}\label{cor-euz}
  \begin{equation}\label{euz}
    e=e_0+e_F\langle\mathbf{u}_1,\mathbf{z}\rangle_\omega\quad
    \text{for all}\quad
    \mathbf{r}=\mathbf{z}+\mathbf{1}_Me\in\mathcal{F}(R).
  \end{equation}
\end{cor}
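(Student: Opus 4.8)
The plan is to reduce the corollary to Proposition~\ref{deltae-u1} by measuring everything relative to the systemic return vector $\mathbf{r}_0$ as a base point. First I would set $\bm{\Delta}\mathbf{r}=\mathbf{r}-\mathbf{r}_0$. Because $\mathbf{r}$ and $\mathbf{r}_0$ both lie in $\mathcal{F}(R)$, the coefficient sums defining them are each $1$, so the coefficients of $\bm{\Delta}\mathbf{r}$ sum to $0$ and hence $\bm{\Delta}\mathbf{r}\in\mathcal{T}(R)$. With this choice the quantity $\Delta e=\langle\mathbf{1}_M,\bm{\Delta}\mathbf{r}\rangle_\omega$ of Proposition~\ref{deltae-u1} is exactly $e-e_0$, and the associated risk increment is
\begin{equation*}
  \bm{\Delta}\mathbf{z}=\bm{\Delta}\mathbf{r}-\mathbf{1}_M\Delta e
  =(\mathbf{r}-\mathbf{1}_Me)-(\mathbf{r}_0-\mathbf{1}_Me_0)
  =\mathbf{z}-\mathbf{z}_0.
\end{equation*}

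Next I would invoke Proposition~\ref{deltae-u1} directly with this $\bm{\Delta}\mathbf{r}$, which yields
\begin{equation*}
  e-e_0=e_F\langle\mathbf{u}_1,\mathbf{z}-\mathbf{z}_0\rangle_\omega.
\end{equation*}
To finish I need only discard the $\mathbf{z}_0$ term, i.e.\ show $\langle\mathbf{u}_1,\mathbf{z}_0\rangle_\omega=0$. This is where the systemic-risk lemma does its work: by \eqref{u1defn} the productive-risk direction satisfies $\mathbf{u}_1\in\mathcal{T}(Z)$, and \eqref{z0perpZ-flat} states precisely that $\mathbf{z}_0$ is orthogonal to every vector of the form $\mathbf{z}-\mathbf{z}_0$, hence to all of $\mathcal{T}(Z)$. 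Thus $\langle\mathbf{u}_1,\mathbf{z}_0\rangle_\omega=0$, the previous display collapses to $e-e_0=e_F\langle\mathbf{u}_1,\mathbf{z}\rangle_\omega$, and rearranging gives \eqref{euz}.

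The only real subtlety, and the step I would flag as the main obstacle, is the orthogonality $\langle\mathbf{u}_1,\mathbf{z}_0\rangle_\omega=0$; everything else is bookkeeping with the decomposition $\mathbf{r}=\mathbf{z}+\mathbf{1}_Me$. One must also keep track of the fact that the well-definedness of the base point $\mathbf{r}_0$ rests on the standing assumption $\mathbf{1}_M\notin\mathcal{T}(R)$ (via Proposition~\ref{isomorph} and its corollary), which guarantees that $\mathbf{z}_0$ is the risk component of a unique $\mathbf{r}_0\in\mathcal{F}(R)$ and that $e_0$ in \eqref{systemic-return} is meaningful. Granting that, the argument is short and purely formal.
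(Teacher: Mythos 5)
Your proof is correct and follows essentially the same route as the paper's: setting $\bm{\Delta}\mathbf{r}=\mathbf{r}-\mathbf{r}_0$ in Proposition~\ref{deltae-u1} and then killing the $\langle\mathbf{u}_1,\mathbf{z}_0\rangle_\omega$ term via the orthogonality of $\mathbf{z}_0$ to $\mathcal{T}(Z)$. The extra bookkeeping you supply (verifying $\bm{\Delta}\mathbf{r}\in\mathcal{T}(R)$, computing $\bm{\Delta}\mathbf{z}=\mathbf{z}-\mathbf{z}_0$, and noting that the existence of $\mathbf{r}_0$ rests on $\mathbf{1}_M\notin\mathcal{T}(R)$) is left implicit in the paper but is exactly the right justification.
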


\begin{proof}
Set $\bm{\Delta}\mathbf{r}=\mathbf{r}-\mathbf{r}_0$ in Proposition
\ref{deltae-u1}.
Then
\begin{align*}
  e-e_0&=e_F\langle\mathbf{u}_1,\mathbf{z}-\mathbf{z}_0\rangle_\omega\\
    &=e_F(\langle\mathbf{u}_1,\mathbf{z}\rangle_\omega
    -\langle\mathbf{u}_1,\mathbf{z}_0\rangle_\omega)\\
    &=e_F\langle\mathbf{u}_1,\mathbf{z}\rangle_\omega.
\end{align*}
Here $\langle\mathbf{u}_1,\mathbf{z}_0\rangle_\omega=0$
since $\mathbf{u}_1\in\mathcal{T}(Z)$ and $\mathbf{z}_0$
is orthogonal to $\mathcal{T}(Z)$.
\end{proof}

\vspace*{2.0ex}
\begin{cor}\label{cor_e0z_*}
  \begin{equation}\label{e0z_*}
    e_0 = e_*-e_F\langle\mathbf{u}_1,\mathbf{z}_*\rangle_\omega
  \end{equation}
  for any convenient
  $\mathbf{r}_*=\mathbf{z}_*+\mathbf{1}_Me_*\in\mathcal{F}(R)$.
\end{cor}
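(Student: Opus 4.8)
The plan is to obtain this identity as an immediate consequence of the preceding Corollary \ref{cor-euz}, which asserts that
\begin{equation*}
  e = e_0 + e_F\langle\mathbf{u}_1,\mathbf{z}\rangle_\omega
\end{equation*}
holds for \emph{every} return vector $\mathbf{r} = \mathbf{z} + \mathbf{1}_M e$ in the $R$-flat $\mathcal{F}(R)$. Since $\mathbf{r}_* = \mathbf{z}_* + \mathbf{1}_M e_*$ is by hypothesis an element of $\mathcal{F}(R)$, I would simply instantiate that universally quantified identity at the point $\mathbf{r}_*$, reading off $e = e_*$ and $\mathbf{z} = \mathbf{z}_*$ from the orthogonal decomposition.

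First I would write down the specialization $e_* = e_0 + e_F\langle\mathbf{u}_1,\mathbf{z}_*\rangle_\omega$. Then the only remaining step is to solve this scalar equation for $e_0$, subtracting the productive-risk term from both sides to arrive at $e_0 = e_* - e_F\langle\mathbf{u}_1,\mathbf{z}_*\rangle_\omega$, which is exactly \eqref{e0z_*}.

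There is essentially no obstacle here: the content of the statement is already packaged in Corollary \ref{cor-euz}, and what remains is a one-line substitution followed by a trivial algebraic rearrangement. The only point worth emphasizing is conceptual rather than technical, namely that \eqref{euz} was established for \emph{all} of $\mathcal{F}(R)$, so it legitimately applies to an arbitrary ``convenient'' reference point $\mathbf{r}_*$; this is what makes the formula a practical recipe for computing the systemic return $e_0$ from any single known point of the $R$-flat rather than from the distinguished point $\mathbf{r}_0$.
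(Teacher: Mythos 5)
Your proof is correct and matches the paper's intent exactly: the paper states this corollary without a separate proof precisely because it is the immediate specialization of Corollary \ref{cor-euz} (equation \eqref{euz}) at $\mathbf{r}_*\in\mathcal{F}(R)$, followed by solving for $e_0$. Nothing is missing.
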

\setcounter{cor}{0}

\vspace*{1.0ex}
Let us now define the \emph{productive risk} of the system, $\tau_1$, as
\begin{equation}\label{defn-productive-risk}
  \tau_1=\sqrt{\sum_{j=1}^n
  \langle\mathbf{u}_1,\mathbf{z}_j-\mathbf{z}_0\rangle_\omega^2}
  =\sqrt{\sum_{j=1}^n\langle\mathbf{u}_1,\mathbf{z}_j\rangle_\omega^2}
\end{equation}
with $\tau_1^2$ being the \emph{productive variance}. We
include the middle, $\mathbf{z}_0$ expression in this definition to
emphasize that the productive risk is coming from the tangent space
$\mathcal{T}(Z)$, which is spanned by the $\mathbf{z}_j-\mathbf{z}_0$.
The middle expression collapses to the last expression because
$\langle\mathbf{u}_1,\mathbf{z}_0\rangle_\omega=0$.


\subsection{Nonproductive risk}\label{nonproductive-risk}

Each notional portfolio $\mathbf{p}$ has a corresponding risk
vector $\mathbf{z}=Z\mathbf{p}$.
By \eqref{euz} the expected return of the portfolio
is completely determined by the $\mathbf{u}_1$ coordinate of
$\mathbf{z}$ and the parameters $e_0$ and $e_F$.
However, the sum of the systemic and productive variances,
$f_0^2+\langle\mathbf{u_1},\mathbf{z}\rangle_\omega^2$, is just a
part of the portfolio variance. The remaining variance
is nonproductive, having no effect on the expected return
of the portfolio.\\

\vspace*{1.0ex}
\begin{defn}
\label{principal-npr}
We now define the \emph{principal nonproductive risks},
$\tau_i>0$, and the corresponding \emph{principal directions of
nonproductive risk}, $\mathbf{u}_i\in\mathcal{T}(Z)$
($\|\mathbf{u}_i\|_\omega=1$), for $i=2,\ldots,m$, where $m$ is the
dimension of $\mathcal{T}(Z)$, the rank of $\widehat{V}$. The definition
proceeds by induction:\\[1.0ex]
\hspace*{2.0ex}
for $i = 2,\ldots,m,$
\begin{align*}
  \hspace*{3.0ex}\tau_i^2 &=
  \sum_{j=1}^n\langle\mathbf{u}_i,\mathbf{z}_j\rangle_\omega^2 \\
  &= \max\left\{
  \sum_{j=1}^n\langle\mathbf{u},\mathbf{z}_j\rangle_\omega^2
  :\mathbf{u}\in\mathcal{T}(Z),\|\mathbf{u}\|_\omega=1,
  \langle\mathbf{u}_k,\mathbf{u}\rangle_\omega=0~
  (k=1,\ldots,i-1)\right\}
\end{align*}
\end{defn}

\begin{rmk*}
The $\tau_i=\sqrt{\tau_i^2}$ are uniquely determined, and, in the
generic case, when $\tau_2>\tau_3>\ldots>\tau_m$, the principal
directions of nonproductive risk are unique upto multiplication by $-$1.
We will assume this case to simplify the discussion. The
\textbf{rtndecomp} algorithm presented in Appendix \ref{octave-listing}
makes no such assumption.\\
\end{rmk*}

\begin{cor}\label{cor-total-variance-decomp}
The total nonsystemic variance can be decomposed into its
productive and nonproductive parts as
\begin{equation}\label{nonsys-variance-decomp}
  \hat{v}_\textnormal{T}=\tau_1^2+\sum_{i=2}^m\tau_i^2
\end{equation}
\end{cor}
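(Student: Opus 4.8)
The plan is to recognize that \eqref{nonsys-variance-decomp} is nothing more than the Parseval-type identity already recorded in \eqref{total-nonsys-var}, with each summand reinterpreted through the definitions of the $\tau_i$. The whole argument reduces to matching up the directional variances $\hat{v}_{\mathbf{u}_i}$ against the productive and nonproductive risks and then splitting off the first term.

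First I would verify that $\{\mathbf{u}_1,\ldots,\mathbf{u}_m\}$ is an orthonormal basis for $\mathcal{T}(Z)$. By \eqref{u1defn} the vector $\mathbf{u}_1$ is a unit vector in $\mathcal{T}(Z)$, and the inductive construction in Definition \ref{principal-npr} produces unit vectors $\mathbf{u}_2,\ldots,\mathbf{u}_m$ in $\mathcal{T}(Z)$ that are mutually orthogonal and orthogonal to $\mathbf{u}_1$; since $m=\dim\mathcal{T}(Z)$, these $m$ orthonormal vectors span $\mathcal{T}(Z)$. With this orthonormal basis in hand, \eqref{total-nonsys-var} gives directly
\begin{equation*}
  \hat{v}_\text{T}=\sum_{i=1}^m\sum_{j=1}^n
    \langle\mathbf{u}_i,\mathbf{z}_j-\mathbf{z}_0\rangle_\omega^2
  =\sum_{i=1}^m\hat{v}_{\mathbf{u}_i}.
\end{equation*}

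Next I would identify each term $\hat{v}_{\mathbf{u}_i}$ with $\tau_i^2$. For $i=1$ this is precisely the defining equation \eqref{defn-productive-risk} of the productive risk. For $i=2,\ldots,m$, Definition \ref{principal-npr} gives $\tau_i^2=\sum_{j=1}^n\langle\mathbf{u}_i,\mathbf{z}_j\rangle_\omega^2$; since $\mathbf{u}_i\in\mathcal{T}(Z)$ is orthogonal to $\mathbf{z}_0$ (as already noted in the proof of Corollary \ref{cor-euz}), one has $\langle\mathbf{u}_i,\mathbf{z}_j\rangle_\omega=\langle\mathbf{u}_i,\mathbf{z}_j-\mathbf{z}_0\rangle_\omega$, so $\tau_i^2=\hat{v}_{\mathbf{u}_i}$. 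Substituting and separating the $i=1$ term from the rest then yields $\hat{v}_\text{T}=\tau_1^2+\sum_{i=2}^m\tau_i^2$, as claimed.

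The only genuinely delicate point—and the step I would treat as the main obstacle—is the claim that the inductively defined $\{\mathbf{u}_i\}$ actually exhaust $\mathcal{T}(Z)$, i.e.\ that the successive maximizations in Definition \ref{principal-npr} terminate with a full orthonormal basis rather than stopping short. This is the standard fact that the positive semidefinite quadratic form $\mathbf{u}\mapsto\sum_{j=1}^n\langle\mathbf{u},\mathbf{z}_j\rangle_\omega^2$ on the $m$-dimensional space $\mathcal{T}(Z)$ admits an orthonormal eigenbasis; once that is granted, the $\hat{v}_{\mathbf{u}_i}$ are exactly its eigenvalues and the corollary is immediate from Parseval. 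Everything else is the routine bilinear bookkeeping above.
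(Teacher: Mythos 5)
Your proof is correct and takes essentially the same route as the paper, which simply remarks that this corollary ``follows immediately from the preceding discussion''---namely the Parseval-type identity \eqref{total-nonsys-var} together with \eqref{defn-productive-risk} and Definition \ref{principal-npr}, exactly the ingredients you assemble. Your extra care in checking that $\mathbf{u}_1,\ldots,\mathbf{u}_m$ form a full orthonormal basis of $\mathcal{T}(Z)$ and that $\langle\mathbf{u}_i,\mathbf{z}_0\rangle_\omega=0$ lets each directional variance collapse to $\tau_i^2$ only makes explicit what the paper leaves implicit.
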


\begin{cor}\label{cor-nonsyst-cov-factor}
  The nonsystemic covariance matrix $\widehat{V}$ factors as
  $\widehat{V}=F^TF$, where the coefficients of the $m\times n$
  factor matrix $F$ are given by
\begin{equation}\label{nonsyst-cov-factor}
  f_{ij}=\langle\mathbf{u}_i,\mathbf{z}_j-\mathbf{z}_0\rangle_\omega
  =\langle\mathbf{u}_i,\mathbf{z}_j\rangle_\omega\quad
  (i=1,\ldots,m;~ j=1,\ldots,n)
\end{equation}
\end{cor}

\begin{cor}\label{cor-portf-var-decomp}
The variance of return of any notional portfolio $\mathbf{p}$ can be
decomposed into its systemic, productive, and nonproductive parts as
\begin{equation}\label{portf-var-decomp}
  v_\textnormal{\textbf{p}}=f_0^2
    +(\sum_{j=1}^nf_{1j}p_j)^2+
    \sum_{i=2}^m(\sum_{j=1}^nf_{ij}p_j)^2,
\end{equation}
with the $f_{ij}$ of \eqref{nonsyst-cov-factor}.
\end{cor}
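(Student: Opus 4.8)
The plan is to reduce everything to two facts already in hand: the systemic/nonsystemic split of the portfolio variance in \eqref{var4}, and the factorization $\widehat{V}=F^TF$ recorded in Corollary \ref{cor-nonsyst-cov-factor}. Once these are invoked, the result is a one-line expansion of a quadratic form, so the work is bookkeeping rather than analysis.

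First I would invoke \eqref{var4} to write $v_\mathbf{p}=f_0^2+\mathbf{p}^T\widehat{V}\mathbf{p}$, which already peels off the systemic contribution $f_0^2$ and leaves the nonsystemic quadratic form $\mathbf{p}^T\widehat{V}\mathbf{p}$ to be handled. Next I would substitute the factorization $\widehat{V}=F^TF$ from Corollary \ref{cor-nonsyst-cov-factor}, so that $\mathbf{p}^T\widehat{V}\mathbf{p}=\mathbf{p}^TF^TF\mathbf{p}=(F\mathbf{p})^T(F\mathbf{p})$. Since this is the ordinary Euclidean inner product of the $m$-vector $F\mathbf{p}$ with itself, it equals $\sum_{i=1}^m(F\mathbf{p})_i^2$, and the $i$-th component of $F\mathbf{p}$ is exactly $\sum_{j=1}^n f_{ij}p_j$ with the coefficients $f_{ij}$ of \eqref{nonsyst-cov-factor}.

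Finally I would separate the $i=1$ summand from the rest. The first row of $F$ records the $\mathbf{u}_1$-coordinates, and $\mathbf{u}_1$ is the direction of productive risk singled out in Section \ref{productive-risk}; the remaining rows $i=2,\ldots,m$ record the coordinates along the nonproductive directions $\mathbf{u}_2,\ldots,\mathbf{u}_m$ of Definition \ref{principal-npr}. Splitting the sum accordingly yields
\[
  v_\mathbf{p}=f_0^2+\Bigl(\sum_{j=1}^n f_{1j}p_j\Bigr)^2+\sum_{i=2}^m\Bigl(\sum_{j=1}^n f_{ij}p_j\Bigr)^2,
\]
which is \eqref{portf-var-decomp}.

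I do not expect a genuine obstacle: every ingredient is already proved, and the computation is a direct expansion. The only point requiring care is the labeling of the rows of $F$ — one must be sure the singled-out $i=1$ term really is the productive one and that the residual sum collects precisely the nonproductive directions. This is guaranteed by the ordering of the orthonormal basis $\mathbf{u}_1,\mathbf{u}_2,\ldots,\mathbf{u}_m$ fixed in the preceding subsections, so it is a matter of reading off the construction rather than proving anything new.
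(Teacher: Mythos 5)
Your proposal is correct and matches the paper's intended argument: the paper states that this corollary ``follows immediately from the preceding discussion,'' which is precisely the combination you use --- the systemic split \eqref{var4} plus the factorization $\widehat{V}=F^TF$ of Corollary \ref{cor-nonsyst-cov-factor}, followed by expanding $\|F\mathbf{p}\|^2$ and separating the $i=1$ (productive) row from the rows $i=2,\ldots,m$. Nothing is missing; your write-up simply makes explicit the bookkeeping the paper leaves to the reader.
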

\setcounter{cor}{0}

These three corollaries of Definition \ref{principal-npr} follow
immediately from the preceding discussion. We refer to the factor matrix
$F$ of Corollary \ref{cor-nonsyst-cov-factor} as the \emph{nonsystemic risk
matrix}.


\subsection{Mean-variance analysis}\label{mv-analysis}

Let
\[
  \Phi:\mathds{R}^n\to\mathds{R}^2,\quad
  \mathbf{p}\mapsto(e_\textbf{p},v_\textbf{p}),
\]
denote the \emph{mean-variance mapping} defined by
\eqref{expected_return} and \eqref{var1}. In view of the preceding
discussion $\Phi$ can be factored as
\begin{equation}\label{factor_Phi}
  \Phi:\mathds{R}^n\xrightarrow{F}\mathds{R}^m\to\mathds{R}^2,\quad
  \mathbf{p}\mapsto\textbf{f}_\textbf{p}\mapsto(e_\textbf{p},v_\textbf{p}),
\end{equation}
with
\begin{equation}\label{e_of_x}
  e_\textbf{p} = e_0 + e_F x_\textbf{p},
\end{equation}
\begin{equation}\label{v_of_f}
  v_\textbf{p} = f_0^2 + \|\mathbf{f}_\textbf{p}\|^2 =
    f_0^2 + x_\textbf{p}^2 + \|\mathbf{y}_\textbf{p}\|^2,
\end{equation}
where
\begin{equation}\label{fpeqFp}
  \mathbf{f}_\textbf{p} =
  \begin{bmatrix}
    x_\textbf{p} \\ \mathbf{y}_\textbf{p}
  \end{bmatrix} =
  \begin{bmatrix}
    F(1, :) \\ F(2\!:\!m, :)
  \end{bmatrix} \mathbf{p} =
  F \mathbf{p}.
\end{equation}
Here $F(1, :)$ and $F(2\!:\!m, :)$ denote the productive and nonproductive
rows of the $m\times n$ nonsystemic risk matrix $F$, respectively.

We are primarily interested in the image, $\Phi(\bm{\Delta})$, of the
notional portfolio simplex
\[
  \bm{\Delta}=\{~\mathbf{p}\in\mathds{R}^n:
    \sum_{j=1}^np_j=1~\text{and}~p_j\ge0~\text{for}~j=1,\ldots,n\}.
\]
\cite{Markowitz:1987wd} refers to this image as the \emph{obtainable
$EV$ set}.\\

\begin{rmk*}
We apologize for the reuse of notation here. We have been using
$\bm{\Delta}$ to indicate a difference vector. Now $\bm{\Delta}$ is the
standard $(n-1)$-simplex in $\mathds{R}^n$. In the future we hope the
meaning of $\bm{\Delta}$ will be clear by its context.
\end{rmk*}

Example \ref{rd_example1} and the corresponding Figure
\ref{rd_factorize1} illustrate the factorization
\eqref{factor_Phi}--\eqref{v_of_f} in the $n=3, m=2$ case.

\begin{example}\label{rd_example1}
\begin{equation*}
  F =
  \left[\begin{array}{rrr}
    -4 &  2 &  \hspace{1.7ex}4 \\
     2 & -2 &  3 \\
  \end{array}\right]\quad\text{and}\quad \widehat{V} = F^TF=
  \left[\begin{array}{rrr}
    20 & -12 & -10 \\
   -12 &   8 &   2 \\
   -10 &   2 &  25 \\
  \end{array}\right].
\end{equation*}
\end{example}

\begin{figure}[ht]
  \centering
  \caption{\label{rd_factorize1}%
    \vspace{-3.0ex}%
    Factorization of the mean-variance mapping $\Phi$
  }
\begin{tikzpicture}[scale=0.9,>={angle 60}]
\small
\begin{scope}   
\begin{scope}
  \filldraw[fill=obtainable1,draw=boundary1] (0.0000,0.0000)
  -- (1.0000,-1.7321)
  -- (2.8000,-2.0785)
  -- (4.0000,0.0000) -- cycle;
  \filldraw[fill=obtainable2,draw=boundary2] (1.0000,-1.7321)
  -- (2.0000,-3.4641)
  -- (2.8000,-2.0785) -- cycle;
  \draw[->] (0.4553,-1.6272) -- (3.2500,-2.1651) node[right] {$x$};
  \draw[->] (1.2105,-2.7348) -- (2.1053,0.3646) node[above] {$y$};
  \draw[very thick] (0.0000,0.0000)
  -- (1.0000,-1.7321)
  -- (2.8000,-2.0785)
  -- (4.0000,0.0000);
  \fill (0.0000,0.0000) circle (2.5pt) node[above] {\textbf{A}(1, 0, 0)};
  \fill (2.0000,-3.4641) circle (2.5pt) node[below] {\textbf{B}(0, 1, 0)};
  \fill (4.0000,0.0000) circle (2.5pt) node[above] {\textbf{C}(0, 0, 1)};
  \filldraw[fill=red] (1.0000,-1.7321) circle (2.25pt) node[right=5,above=2] {\textbf{P}};
  \filldraw[fill=red] (2.8000,-2.0785) circle (2.25pt) node[left=5,above=1] {\textbf{Q}};
  \filldraw[fill=green] (1.4737,-1.8232) circle (2.25pt) node[right=6,below=2] {\textbf{E}};
\end{scope}
\begin{scope}[xshift=11.0cm,yshift=-3.5cm]
  \filldraw[fill=obtainable1,draw=boundary1] (-2.0000,3.7500)
  .. controls (-1.5000,1.7500) and (-1.0000,0.5625) .. (-0.5000,0.1875)
  .. controls (0.1333,-0.2875) and (0.7667,0.1400) .. (1.4000,1.4700)
  .. controls (1.6000,1.8900) and (1.8000,2.9625) .. (2.0000,4.6875)
  .. controls (0.6667,0.3125) and (-0.6667,0.0000) .. (-2.0000,3.7500);
  \filldraw[fill=obtainable2,draw=boundary2] (-0.5000,0.1875)
  .. controls (0.0000,-0.1875) and (0.5000,0.2500) .. (1.0000,1.5000)
  .. controls (1.1333,1.2000) and (1.2667,1.1900) .. (1.4000,1.4700)
  .. controls (0.7667,0.1400) and (0.1333,-0.2875) .. (-0.5000,0.1875);
  \draw[very thick] (-2.0000,3.7500)
  .. controls (-1.5000,1.7500) and (-1.0000,0.5625) .. (-0.5000,0.1875)
  .. controls (0.1333,-0.2875) and (0.7667,0.1400) .. (1.4000,1.4700)
  .. controls (1.6000,1.8900) and (1.8000,2.9625) .. (2.0000,4.6875);
  \fill (-2.0000,3.7500) circle (2.5pt) node[above=2.5] {\textbf{A}};
  \fill (1.0000,1.5000) circle (2.5pt) node[right=4,above=2] {\textbf{B}};
  \fill (2.0000,4.6875) circle (2.5pt) node[above=2] {\textbf{C}};
  \filldraw[fill=red] (-0.5000,0.1875) circle (2.25pt) node[left=8,below=0] {\textbf{P}};
  \filldraw[fill=red] (1.4000,1.4700) circle (2.25pt) node[right=2] {\textbf{Q}};
  \draw[->] (0,0) -- (0,-0.5) -- (1.0,-0.5);
  \draw (0.78,-0.5) node[above=8,left=-24] {efficient};
  \filldraw[fill=green] (0.0000,0.0000) circle (2.25pt) node[above=2] {\textbf{E}};
\end{scope}
\end{scope}  
\begin{scope}[xshift=2cm, yshift=-9.3cm]  
{ \normalsize
  \node at (4.40,9.0) {$\Phi$};
  \node at (4.75,8.5) {$\mathbf{p}\mapsto(e,v)$};
  \node at (0,4.6) {$\mathbf{p}$};
  \node at (-0.5,4) {$F$};
  \node[rotate=-90] at (0,4) {$\mapsto$};
  \node at (0,3.3) {$(x,y)$};
  \node at (4.75,2.5) {$(x,y)\mapsto(x,|y|)$};
  \node at (9,4.6) {$(e,v)$};
  \node[rotate=90] at (9,4) {$\mapsto$};
  \node at (9,3.3) {$(x,|y|)$};
  \draw[-] (4.00,7.8) -- (5.5,7.8);
  \node at (4.75,7.2) {$e=e_0+e_Fx$};
  \node at (4.75,6.6) {$v=f_0^2+x^2+\|\mathbf{y}\|^2$};
  \draw[<->] (7.0,5.5) node[above] {$v$} -- (7.0,5.0) -- (7.5,5.0) node[right] {$e$};
}
\begin{scope}[scale=0.5]
  \draw[step=1.0cm,help lines] (-4, -2) grid (4, 3);
  \filldraw[fill=obtainable1,draw=boundary1] (-4,2)
  -- (-1,0) -- (2.8,0) -- (4,3) -- cycle;
  \filldraw[fill=obtainable2,draw=boundary2] (-1,0)
  -- (2,-2) -- (2.8,0) -- cycle;
  \draw[->] (-4.5,0) -- (5.0,0) node[right] {$x$};
  \draw[->] (0,-2.5) -- (0,4.0) node[above] {$y$};
  \draw[very thick] (-4,2) -- (-1,0) -- (2.8,0) -- (4,3);
  \fill (-4,2) circle (5.0pt) node[left=1.5] {\textbf{A}};
  \fill (2,-2) circle (5.0pt) node[below=2] {\textbf{B}};
  \fill (4,3) circle (5.0pt) node[above=2] {\textbf{C}};
  \filldraw[fill=red] (-1,0) circle (4.55pt) node[right=5,above=1.5] {\textbf{P}};
  \filldraw[fill=red] (2.8,0) circle (4.55pt) node[left=5,above=0.5] {\textbf{Q}};
  \filldraw[fill=green] (0,0) circle (4.55pt) node[right=7.5,above=1.5] {\textbf{E}};
\end{scope}
\begin{scope}[scale=0.5,xshift=18cm]
  \draw[step=1.0cm,help lines] (-4, 0) grid (4, 3);
  \filldraw[fill=obtainable1,draw=boundary1] (-4,2)
  -- (-1,0) -- (2.8,0) -- (4,3) -- cycle;
  \filldraw[fill=obtainable2,draw=boundary2] (-1,0)
  -- (2,2) -- (2.8,0) -- cycle;
  \draw[->] (-4.5,0) -- (5.0,0) node[right] {$x$};
  \draw[->] (0,-0.0) -- (0,4.0) node[above] {$|y|$};
  \draw[very thick] (-4,2) -- (-1,0) -- (2.8,0) -- (4,3);
  \fill (-4,2) circle (5.0pt) node[left=1.5] {\textbf{A}};
  \fill (2,2) circle (5.0pt) node[right=1.5] {\textbf{B}};
  \fill (4,3) circle (5.0pt) node[above=2] {\textbf{C}};
  \filldraw[fill=red] (-1,0) circle (4.55pt) node[left=5,below=1.5] {\textbf{P}};
  \filldraw[fill=red] (2.8,-0) circle (4.55pt) node[right=5,below=1.5] {\textbf{Q}};
  \filldraw[fill=green] (0,0) circle (4.55pt) node[below=1.5] {\textbf{E}};
\end{scope}
\end{scope}   
\end{tikzpicture}
\end{figure}
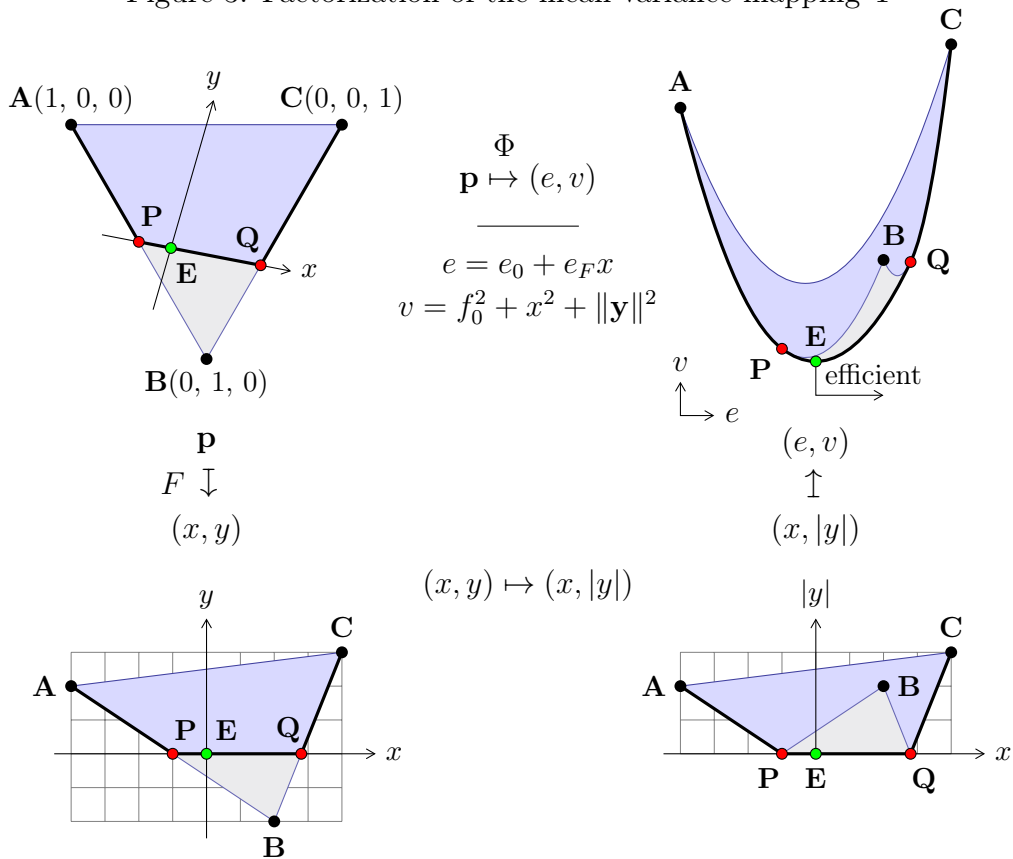

In Figure \ref{rd_factorize1} the set of minimum-variance
portfolios is specified by the piecewise linear path \textbf{APQC}
though the simplex $\bm{\Delta}$. The portfolios
\begin{align*}
  \textbf{P}&=50\%\,\textbf{A}+50\%\,\textbf{B}\\[-5.0ex]
\intertext{\vspace{-2.0ex}and}
  \textbf{Q}&=60\%\,\textbf{B}+40\%\,\textbf{C}
\end{align*}
are called \emph{corner portfolios} for obvious reasons.
Equation \eqref{v_of_f} implies that the portfolio \textbf{E} of absolute
minimum variance corresponds to $(x,y)=(0,0)$. Since $\textbf{P}$
and $\textbf{Q}$ have $(x,y)$-representations (-1, 0) and (2.8, 0),
respectively, we must have
\begin{equation*}
  \textbf{E}=\frac{14}{19}\,\textbf{P}+\frac{5}{19}\,\textbf{Q}.
\end{equation*}
The parameters $~e_0,~ e_F\ge0,$ and $f_0\ge0$ are inconsequential. The
set of minimum-variance portfolios is independent of these parameters.

Portfolios on the piecewise linear path \textbf{EQC} in $\bm{\Delta}$
are \emph{efficient}: besides having minimum variance for their expected
return, they have maximum expected return for their variance.

The $x$- and $y$-axes through the upper-left-hand simplex of
Figure \ref{rd_factorize1} are the preimages of the respective axes on
the lower-left $xy$-plane via the mapping $\mathbf{p}\mapsto
(x,y)=F\mathbf{p}$. The image axes are perpendicular to each other, but
the preimage axes are not. The preimage of the $x$-axis is the
\emph{critical line} of the mapping $\Phi|\{\sum_{j=1}^3p_j=1\}$. The
derivative of $\Phi|\{\sum_{j=1}^3p_j=1\}$ has rank 1 along this line
and rank 2 everywhere else. In effect $\Phi$ folds the
$\sum_{j=1}^3p_j=1$ plane over this critical line.

Returning to the general situation let us point out that the
factorization of the mean variance mapping in
\eqref{factor_Phi}--\eqref{v_of_f} leads to a natural, geometric
characterization of minimum-variance portfolios. First note that the
portfolio simplex $\bm{\Delta}$ is mapped onto a convex polytope
$P=F(\bm{\Delta})$ in $(x,\mathbf{y})$-space $\mathds{R}^m$ due to the
linearity of $\mathbf{p}\mapsto F\mathbf{p}$. If the $x$-axis (the
$\mathbf{u}_1$-axis) passes through this polytope, then every point in
the intersection of the $x$-axis and the polytope is the image of a
minimum-variance portfolio $\mathbf{p}$---simply because
$\mathbf{y}_\textbf{p}=\mathbf{0}$ and $v_\textbf{p}$ can't get any
smaller than $v_\textbf{p}=f_0^2+x_\textbf{p}^2$. More generally, let
$x_\text{min}$ and $x_\text{max}$ be the minimum and maximum values of
the coefficients in the ``productive'' row, $X = F(1, :)$, of $F$. Given
$x_*$ between $x_\text{min}$ and $x_\text{max}$, suppose
$\mathbf{y}_*\in\mathds{R}^{m-1}$ satisfies
\begin{equation}\label{y*}
  \|\mathbf{y}_*\| = \min\left\{\|\mathbf{y}\|:
  (x_*,\mathbf{y})\in P\cap\{x=x_*\}\right\}.
\end{equation}
Then any portfolio $\mathbf{p}\in\bm{\Delta}$ that
$F$ maps onto $(x_*,\mathbf{y}_*)$ (and there is at least one) is a
minimum-variance portfolio.


\subsection{Relaxing assumptions}\label{relaxing}

Since Section \ref{productive-risk} we have been assuming that the $n$
given securities do not all have the same expected return; however there
is no problem if the returns are identical. Then there is no productive
risk, all nonsystemic risk is nonproductive:
$\mathbf{u}_1,\ldots,\mathbf{u}_m$ are the principal directions of
nonproductive risk. This case is signalled by $e_F=0$, and \eqref{euz}
still holds with $e_0$ being the common expected return.

The case when $\mathbf{1}_M$ parallels the R-flat, when
$\mathbf{1}_M\in\mathcal{T}(R)$, is more problematic. This situation
typically arises when there are more securities than periods. Then there
is no unambiguous systemic return, $e_0$, and no well-defined gradient
of expected return, $\mathbf{g}=\mathbf{u}_1e_F\in\mathcal{T}(Z)$.

To handle the $\mathbf{1}_M\in\mathcal{T}(R)$ case we anchor ourselves
at the mean risk component,
$\bar{\mathbf{z}}=\frac{1}{n}\sum_{j=1}^n\mathbf{z}_j$, with mean
expected return, $\bar{e}=\frac{1}{n}\sum_{j=1}^ne_j$. Our
approximate gradient,
$\mathbf{g}=\sum_{j=1}^n(\mathbf{z}_j-\bar{\mathbf{z}})g_j$, is the
least-squares solution of \eqref{deltae-deltaz} in the form
\begin{equation}\label{lsq-deltae-deltaz}
  e_k-\bar{e}=\sum_{j=1}^n
  \langle\mathbf{z}_k-\bar{\mathbf{z}},\mathbf{z}_j-\bar{\mathbf{z}}
  \rangle_\omega\, g_j\quad(k=1,\ldots,n).
\end{equation}
We use this $\mathbf{g}$ to define the parameters 
\begin{equation}\label{param-approx}
  e_F=\|\mathbf{g}\|_\omega,\quad
  \mathbf{u}_1=\mathbf{g}/e_F,\quad
  e_0=\bar{e}-\langle\mathbf{g},\bar{\mathbf{z}}\rangle_\omega,
\end{equation}
for the approximate version of \eqref{euz}:
\begin{equation}\label{euz-approx}
    e\approx e_0+e_F\langle\mathbf{u}_1,\mathbf{z}\rangle_\omega.
\end{equation}
Note that the definition of $e_0$ is essentially \eqref{e0z_*}
with $e_*=\bar{e}$ and $\mathbf{z}_*=\bar{\mathbf{z}}$.

The remaining orthogonal directions of risk are defined inductively by
Definition \ref{principal-npr}, and the corollaries of that definition
continue to hold.


\subsection{Scaling output}\label{scaling}

Up to this point expected returns, $e_0$ and $E=[e_1,\ldots,e_n]$, and
risk coefficients, $f_0$ and $F=[f_{ij}]~ (i=1,\ldots,m;j=1,\ldots,n)$,
have been measured in the same percent-per-period units. While days or
weeks may be used for computational purposes, annualized,
percent-per-year output is usually preferred to daily or weekly
percentages.

To compensate for this preference we add a periods-per-unit-of-time
parameter $\rho$ to the periodic returns $R$ and weights $\omega$
required by our algorithm. Then, at the end of the computations,
percent-per-period expected returns and risks are scaled to
percent-per-unit-of-time units as follows:
\begin{equation*}
\begin{array}{ccccc}
  \text{percent}&&&&\text{percent}\\
  \text{per-unit-of-time}&&&&\text{per-period}\\
  E   &\leftarrow&\rho&\times& E   \\
  e_0 &\leftarrow&\rho&\times& e_0 \\
  e_F &\leftarrow&\sqrt{\rho}&\times& e_F \\
  F   &\leftarrow&\sqrt{\rho}&\times& F \\
  f_0 &\leftarrow&\sqrt{\rho}&\times& f_0 
\end{array}
\end{equation*}

The idea behind this scaling is statistical. Assume, for example, that
daily returns are independent random variables from one market-day to
the next and there are (typically) $\rho=252$ market-days per year. The
annual return is the sum of $\rho$ daily returns; so the expected value
of annual return is $\rho$ times the daily expected value. This accounts
for the $\rho$ multipliers above. The variance of annual return is
$\rho$ times the daily variance due to the independence assumption, but
risk or standard deviation is the square root of variance; consequently
$\sqrt{\rho}$ is the appropriate multiplier of $F$ and $f_0$. Finally
$e_F$ is the rate of change of expected return to risk; so
$\sqrt{\rho}=\rho/\sqrt{\rho}$~ is the appropriate multiplier.


\section{The rtndecomp function -- arguments and relationships}
\label{rtndecomp_function}

The \href{http://www.gnu.org/software/octave/}{GNU Octave}
listing of the \textbf{rtndecomp} function appears in
Appendix \ref{octave-listing}. In this section we give the function
header and describe its arguments. The relationships between the
output arguments were derived in the last section.

\makebox[1.0in][l]{\textbf{function:}}%
$[E, F, f_0, e_0, e_F] = \textbf{rtndecomp}(R, \bm{\omega}, \rho)$

\makebox[1.0in][l]{\textbf{purpose}}%
\parbox[t]{4.5in}{%
To decompose financial return data into orthogonal risk-factors.}

\textbf{input}
\begin{adescript}
  \item[R] $M\times n$ matrix of periodic returns.
  \item[\bm{\omega}] $M$-vector of positive weights or a scalar.\\
    If $\bm{\omega}$ is a scalar or if $R$ is the only input argument,\\
    then $\bm{\omega}$ defaults to\\
    \hspace*{5.0ex}$\omega_i = 1/M$ ~for~ $i = 1,\ldots,M$.
  \item[\rho] periods per unit time. ($\rho\ge1,$ default: $\rho = 1$)\\
    e.g., $\rho=252$ market-days per year.
\end{adescript}

\textbf{output}
\begin{adescript}
  \item[E] $1\times n$ matrix of expected returns.
  \item[F] $m\times n$ matrix of risk coefficients.\\
    $\rank(F)=m$ unless $F=\zeros(1,n)$.
  \item[f_0] systemic risk. ($f_0\ge0$)
  \item[e_0] systemic expected return.
  \item[e_F] expected return per unit of productive risk.\\
    ($e_F\ge0$; if $e_F=0$ there is no productive risk)
\end{adescript}

\textbf{global output}
\begin{adescript}
  \item[\textit{eflag}] \textbf{true} if and only if a nonzero, constant
  $M$-vector is\\parallel to the returns flat, $\mathcal{F}(R)$, or,
  said another way,\\if and only if $\mathbf{1}_M\in\mathcal{T}(R)$.
\end{adescript}

\textbf{relationships}
\begin{aenum}
  \item $E = \rho\,\bm{\omega}^T R.$
  \item The $n\times n$ covariance of returns matrix, $V,$ is given by\\
    \hspace*{3.0ex}$V = Z^T\diag(\bm{\omega}\rho)\,Z,$ where
    $Z = R - \mathbf{1}_M\bm{\omega}^T R.$
  \item $V = f_0^2 + F^TF.$ (the scalar $f_0^2$ is added to each
    coefficient of $F^TF$)
  \item $E = e_0 + e_F X$ unless \textit{eflag} is \textbf{true}.\\
    Here $X = F(1, 1:n)$ denotes the first row of $F,$ and, in the
    equation, $e_0$ is added to each coefficient of $e_F X.$ When
    \textit{eflag} is \textbf{true}, the equation is an approximation,
    but\\
    \hspace*{3.0ex}$\mean(E) = e_0 + e_F\mean(X)$\\remains true.
  \item $v_T = \sum_{j=1}^n V(j. j)
    = nf_0^2 + \tau_1^2 + \sum_{i=2}^m\tau_i^2,$
    with $\tau_i = \sqrt{F(i,1:n) F(i,1:n)^T}$\\for
    $i=1,\ldots,m.$
    The right-hand side of the equation for $v_T$ represents the
    decomposition of total variance into systemic, productive, and
    nonproductive parts (though $\tau_1^2$ is nonproductive
    if $e_F=0$).
  \item $\tau_2\ge\ldots\ge\tau_m>0.$
    (principal nonproductive risks)
  \item $F(2:m,1:n)F(2:m, 1:n)^T = \diag(\tau_2^2,\ldots,\tau_m^2).$
  \item If $e_F=0$ and $F\ne[0,\ldots,0]$,
    then $\tau_1\ge\ldots\ge\tau_m>0$\\
    and
    $F(1:m,1:n)F(1:m, 1:n)^T = \diag(\tau_1^2,\ldots,\tau_m^2).$
\end{aenum}


\section{Examples of output}
\label{output-examples}


\subsection{Five large ETFs -- 2010}
\label{5largeETFs}

In our first example we apply the \textbf{rtndecomp} algorithm (listed
in Appendix \ref{octave-listing}) to 2010 daily returns from the five
iShares exchange traded funds (ETFs)
\begin{compactenum}\label{fivefunds}
  \item \ticker{IEF} -- iShares Barclays 7-10 Year Treasury Bond Fund
  \item \ticker{IWB} -- iShares Russell 1000 Index Fund
  \item \ticker{IWM} -- iShares Russell 2000 Index Fund
  \item \ticker{EFA} -- iShares MSCI EAFE Index Fund
  \item \ticker{EEM} -- iShares MSCI Emerging Markets Index Fund
\end{compactenum}
Figure \ref{growth5_2010} shows the growth of these securities in
2010. These are plots of adjusted closing prices against time. The
adjusted closing prices are normalized at 100 on 2010-12-31; thus
notional portfolios specify the closing proportions of actual investment
portfolios at the end of 2010 (\cite{Norton:2011fk}). If ex post
analysis were to deem a certain notional portfolio $\mathbf{p}$ as
optimal, one would buy
\begin{equation}\label{shares}
  s_j = 100\,p_j/a_j \quad
\end{equation}
shares of fund $j$, per \$100 invested, to invest in the optimal
portfolio at the end of 2010. Here $a_j$ is the 2010-12-31
closing price of fund $j~ (j=1,\dots,n;~ n=5)$.
\begin{figure}[H]
  \centering
  \caption{\label{growth5_2010}%
    2010 adjusted closing prices of five large ETFs\\
    prices normalized at 100 on 2010-12-31}
  \includegraphics{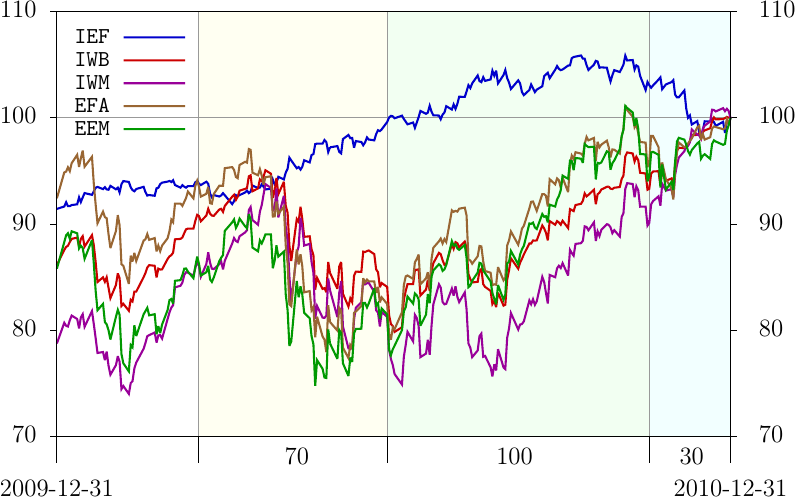}
\end{figure}


\subsubsection{Algorithm input}\label{input}

The graphs in Figure \ref{growth5_2010} correspond to a $253\times5$
matrix $A$ of adjusted closing prices for these 5 ETFs on the 253
market days from 2009-12-31 thru 2010-12-31 inclusive. The normalized,
linear, daily returns for these funds are given by the $252\times5$
matrix $R=\Delta A$, where $\Delta$ is the $252\times253$
difference operator
\begin{equation*}
  \Delta = \left [
\begin{array}{rrrrr}
    -1 & 1 & & \multicolumn{2}{c}{}\\
     & -1 & 1 & \multicolumn{2}{c}
     {\raisebox{1.5ex}[0pt]{\BigZero}}\\
     & & \ddots & \ddots &\\
     \multicolumn{2}{c}{\raisebox{1.5ex}[0pt]{\BigZero}}
     & & -1 & 1
\end{array}
  \right ].
\end{equation*}
We won't use all of $R$ in this example, just the last $M=200$ rows,
just the last 200 market-day returns---which correspond to the
colored-background portion of Figure \ref{growth5_2010}.

In addition to the $M\times n$ return matrix $R$, the
\mbox{\textbf{rtndecomp}} algorithm requires an $M$-vector of weights,
$\bm{\omega}$, and a scaling factor, $\rho$, that specifies the number
of periods per unit time. We use $\rho=252$ market-days per year
throughout this paper.

To see how the weights affect output we will consider two systems
of weights: gray, uniform weights, where each of the $M=200$
market days has the same importance, $\omega_i=1/200$, and the more
colorful \emph{late-heavy} weight system pictured in Figure
\ref{weights-figure}.

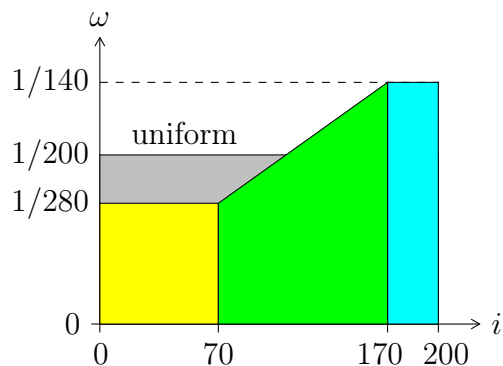
\begin{floatingfigure}[r]{7.0cm}
  \caption{Return weights for 200 
  market days. Late-heavy weights colored.}
  \label{weights-figure}\vspace{-3.0ex}
  \centering
\begin{tikzpicture}[scale=2.25, >={angle 60}]
  \draw [dashed] (0.0,1.429) -- (2.0,1.429);
  \draw[->] (0.0,-0.05) node [below=8,left=-7.5] {0}
    -- (0.0,1.7) node [above] {$\omega$};
  \draw[->] (-0.05,0.0) node [left] {0}
    -- (2.25,0.0) node [right] {$i$};
  \filldraw [fill = uniform] (0.0,0.0) rectangle (2.0,1.0);
  \draw (0.5,1.0) node [above] {uniform};
  \draw (0.0,1.0) node [left] {$1/200$};
  \filldraw [fill = first70] (0.0,0.0) rectangle (0.7,0.714);
  \draw (0.0,0.714) node[left] {$1/280$};
  \draw (0.7,0.0) -- (0.7,-0.05) node [below=8,left=-10] {70};
  \filldraw [fill = last30] (1.7,0.0) rectangle (2.0,1.429);
  \draw (0.0, 1.429) node[left] {$1/140$};
  \draw (1.7,0.0) -- (1.7,-0.05) node [below=8,left=-10] {170};
  \draw (2.0,0.0) -- (2.0,-0.05) node [below=8,right=-10] {200};
  \filldraw [fill = middle100] (0.7,0.0) -- (1.7,0.0) -- (1.7,1.429)
    -- (0.7,0.714) -- cycle;
\end{tikzpicture}
\end{floatingfigure}

The colors of the late-heavy system correspond to the colored regions of
Figure \ref{growth5_2010}. The returns of the last 30 market days
of 2010 are weighted by 1/140 each; returns for the first 70 days count
half as much as these, or 1/280 per day; and the weights for the middle
100 days,
\begin{equation*}
  \omega_{i+70} = \frac{1+i/101}{280}~~ \text{for}~~
  i=1,\ldots,100,
\end{equation*}
increase uniformly between these two extremes. The sum of these
late-heavy weights is the sum of the yellow, green, and blue areas in
Figure
\ref{weights-figure}:
\begin{equation*}
  70\times\frac{1}{280}+100\times\frac{1}{2}\times
  \left(\frac{1}{280}+\frac{1}{140}\right)
  +30\times\frac{1}{140}=1,
\end{equation*}
as required.

The idea behind late-heavy weighting is simple. Think of investing at the
end of a 200 market-day period. The recent performance of a group of
securities may be more important than their performance further back in the
past---%
as a predictor of their near-future performance. Thus an investment analysis
might weight the recent performance more heavily.


\subsubsection{Algorithm output}\label{algorithm-output}

The output arguments of the \textbf{rtndecomp} function
\begin{equation}\label{rtndecomp}
  [E, F, f_0, e_0, e_F] = \textbf{rtndecomp}(R,\bm{\omega},\rho)
\end{equation}
were described in the Section \ref{rtndecomp_function}. Tables
\ref{decomp5_2010L200} and \ref{decomp5_2010U200} show the output when
the function is applied to the $200\times5$ matrix of daily returns
described above. The late-heavy weights of Figure \ref{weights-figure}
were used for Table \ref{decomp5_2010L200} and uniform weights for Table
\ref{decomp5_2010U200}.

\begin{table}[H]
  \centering
  \caption{\label{decomp5_2010L200}%
    Decomposition of return data -- 5 large ETF universe\\
    last 200 market-days of 2010 -- late-heavy weights
  }
{ \small
  \newcolumntype{.}{D{.}{.}{2}}
  \newcolumntype{W}{|>{\columncolor{white}}c|}
$
\begin{array}{|c|.....||r.|}
\cline{1-6}\rule{0mm}{4mm}
  \text{fund}
  & \multicolumn{1}{r}{\parbox{6.5ex}{\hfill\ticker{IEF}\hspace*{0.2ex}}}
  & \multicolumn{1}{r}{\parbox{7.0ex}{\hfill\ticker{IWB}\hspace*{0.2ex}}}
  & \multicolumn{1}{r}{\parbox{7.0ex}{\hfill\ticker{IWM}\hspace*{0.2ex}}}
  & \multicolumn{1}{r}{\parbox{7.0ex}{\hfill\ticker{EFA}\hspace*{0.2ex}}}
  & \multicolumn{1}{r|}{\parbox{7.0ex}{\hfill\ticker{EEM}\hspace*{0.8ex}}} \\
\hline\rule[-1.5mm]{0mm}{6mm}
  E &   2.86 &  18.77 &  27.12 &  13.84 &  21.79
  & \multicolumn{2}{c|}{\widehat{V}_\text{T}} \\
\hline\rowcolor{productive}
  \multicolumn{1}{W}{\rule{0mm}{4mm}}
  & -2.79 & 7.80 & 13.36 & 4.52 & 9.81 & 364 & 24.8\% \\
\cline{2-8}
  \rowcolor{nonproductive}\multicolumn{1}{W}{\rule{0mm}{4mm}}
  & 4.14 & -11.81 & -13.48 & -19.57 & -16.81 & 1004 & 68.4\% \\
  \rowcolor{nonproductive}\multicolumn{1}{W}{\raisebox{1.5ex}[0pt]{$F$}}
  & 2.30 & -2.15 & -5.06 & 0.01 & 6.12 & 73 & 5.0\% \\
  \rowcolor{nonproductive}\multicolumn{1}{W}{}
  & -4.14 & 2.02 & -2.26 & -0.89 & 0.40 & 27 & 1.9\% \\
\hhline{|=|=====#==|}\rule{0mm}{4mm}
  & \multicolumn{1}{r@{\hspace*{1.1ex}}}{47}
  & \multicolumn{1}{r@{\hspace*{1.1ex}}}{209}
  & \multicolumn{1}{r@{\hspace*{1.1ex}}}{391}
  & \multicolumn{1}{r@{\hspace*{1.1ex}}}{404}
  & \multicolumn{1}{r@{\hspace*{1.1ex}}||}{416}
  & \multicolumn{1}{r}{1468}
  & \multicolumn{1}{r|}{100\%\hspace*{-0.8ex}} \\
\cline{8-8} \raisebox{1.6ex}[0pt]{$\widehat{V}_\text{T}$}
  & 3.2\% & 14.2\% & 26.6\% & 27.5\% & 28.4\%
  & \multicolumn{1}{r|}{100\%\hspace*{-0.5ex}}\\
\cline{1-7}
  \multicolumn{7}{c}{\rule{0mm}{5mm}
    \text{with}~f_0 = 5.07,~ e_0 = 7.05,~\text{and}~e_F = 1.502
  }
\end{array}
$
}
\end{table}

\begin{table}[H]
  \centering
  \caption{\label{decomp5_2010U200}%
    Decomposition of return data -- 5 large ETF universe\\
    last 200 market-days of 2010 -- uniform weights
  }
{ \small
  \newcolumntype{.}{D{.}{.}{2}}
  \newcolumntype{W}{|>{\columncolor{white}}c|}
$
\begin{array}{|c|.....||r.|}
\cline{1-6}\rule{0mm}{4mm}
  \text{fund}
  & \multicolumn{1}{r}{\parbox{6.5ex}{\hfill\ticker{IEF}\hspace*{0.2ex}}}
  & \multicolumn{1}{r}{\parbox{7.0ex}{\hfill\ticker{IWB}\hspace*{0.2ex}}}
  & \multicolumn{1}{r}{\parbox{7.0ex}{\hfill\ticker{IWM}\hspace*{0.2ex}}}
  & \multicolumn{1}{r}{\parbox{7.0ex}{\hfill\ticker{EFA}\hspace*{0.2ex}}}
  & \multicolumn{1}{r|}{\parbox{7.0ex}{\hfill\ticker{EEM}\hspace*{0.8ex}}} \\
\hline\rule[-1.5mm]{0mm}{6mm}
  E &   7.79 &  11.62 &  17.47 &  8.12 & 17.27
  & \multicolumn{2}{c|}{\widehat{V}_\text{T}} \\
\hline\rowcolor{productive}
  \multicolumn{1}{W}{\rule{0mm}{4mm}}
  & 0.00 & 3.02 & 7.62 & 0.27 & 7.46 & 123 & 7.4\% \\
\cline{2-8}
  \rowcolor{nonproductive}\multicolumn{1}{W}{\rule{0mm}{4mm}}
  & 5.41 & -15.08 & -18.63 & -20.92 & -19.15 & 1408 & 85.3\% \\
  \rowcolor{nonproductive}\multicolumn{1}{W}{\raisebox{1.5ex}[0pt]{$F$}}
  & -1.74 & 2.31 & 6.80 & -2.94 & -5.71 & 96 & 5.8\% \\
  \rowcolor{nonproductive}\multicolumn{1}{W}{}
  & -3.33 & 2.09 & -1.48 & -2.17 & 1.22 & 24 & 1.4\% \\
\hhline{|=|=====#==|}\rule{0mm}{4mm}
  & \multicolumn{1}{r@{\hspace*{1.1ex}}}{43}
  & \multicolumn{1}{r@{\hspace*{1.1ex}}}{246}
  & \multicolumn{1}{r@{\hspace*{1.1ex}}}{453}
  & \multicolumn{1}{r@{\hspace*{1.1ex}}}{451}
  & \multicolumn{1}{r@{\hspace*{1.1ex}}||}{456}
  & \multicolumn{1}{r}{1651}
  & \multicolumn{1}{r|}{100\%\hspace*{-0.8ex}} \\
\cline{8-8} \raisebox{1.6ex}[0pt]{$\widehat{V}_\text{T}$}
  & 2.6\% & 14.9\% & 27.5\% & 27.3\% & 27.7\%
  & \multicolumn{1}{r|}{100\%\hspace*{-0.5ex}}\\
\cline{1-7}
  \multicolumn{7}{c}{\rule{0mm}{5mm}
    \text{with}~f_0 = 4.86,~ e_0 = 7.78,~\text{and}~e_F = 1.271
  }
\end{array}
$
}
\end{table}

The $E, e_0, F$, and $f_0$ coefficients in these tables are in
percent-per-year units, the slope $e_F$ is unitless,
and the nonsystemic variance totals in the
$\widehat{V}_\text{T}$ sections are in
percent-per-year-squared units.

As described in Section \ref{risk-components}, the coefficients of each
risk matrix $F$ are the coordinates of the nonsystemic risk components
of the individual securities with respect to an orthonormal basis,
$\{\mathbf{u}_1,\ldots,\mathbf{u}_m\}$, for the
tangent space of the $Z$-flat. The sum of the squares of these
coefficients is the \emph{total nonsystemic variance} of the system. The
$\widehat{V}_\text{T}$-row of either table shows how this nonsystemic
variance is distributed among the individual funds. As one might expect,
the nonsystemic variance of the bond fund, \texttt{IEF}, is substantially
less than that of any of the equity funds---under either system of weights.

The green row, $X=[x_1,\ldots,x_n]=F(1,:)$, of each $F$ contains the
\emph{productive} risk coefficients, the $\mathbf{u}_1$-coordinates, of
the security risk vectors. Changes in productive risk from one notional
portfolio to another produce corresponding changes in expected return.
If there is no change in productive risk, there is no change in expected
return.

The orange rows, $F(2\!:\!m, :)$, of each $F$ contain the risk
coefficients in the principal directions of nonproductive risk. These
rows are pairwise orthogonal (up to roundoff error). The first orange
row, $Y=[y_1,\ldots,y_n]=F(2,:)$, represents the most significant or
major direction of nonproductive risk. It contains the
$\mathbf{u}_2$-coordinates of the security risk vectors. The
$\widehat{V}_\text{T}$-column of each table shows how the total
nonsystemic variance is decomposed into productive and nonproductive
components.

In addition to its nonsystemic variance each fund has a
\emph{systemic variance} of $f_0^2$ so that the \emph{total variance}
of the system is decomposed into its systemic, productive, and
nonproductive parts as
\begin{equation}\label{var5}
  v_\text{T} = n f_0^2 + \sum_{j=1}^nf_{1j}^2 +
  \sum_{i=2}^m\sum_{j=1}^nf_{ij}^2.
\end{equation}
Table \ref{totalvar5_2010} shows the total variance decompositions
corresponding to Tables
\ref{decomp5_2010L200} and \ref{decomp5_2010U200}.

\begin{table}[H]
  \centering
  \caption{\label{totalvar5_2010}%
    Decomposition of total variance\\
    5 large ETFs -- last 200 market-days of 2010
  }
  { \small
\begin{tabular}{|l|rr|rr|}
\cline{2-5}
  \multicolumn{1}{l|}{\rule{0mm}{4mm}}
    & \multicolumn{2}{c|}{late-heavy} & \multicolumn{2}{c|}{uniform} \\
  \multicolumn{1}{l|}{}
    & \multicolumn{2}{c|}{weights} & \multicolumn{2}{c|}{weights} \\
\hline\rule{0mm}{4mm}%
  systemic variance ($n f_0^2$) & 128 & 8.0\% & 118 & 6.7\% \\
  productive variance ($\sum x^2$) & 364 & 22.8\% & 123 & 7.0\% \\
  major nonproductive variance ($\sum y^2$) & 1004 & 62.9\% & 1408 & 79.6\% \\
  other nonproductive variance & 100 & 6.3\% & 120 & 6.8\% \\
\hline\rule{0mm}{4mm}%
  total variance & 1596 & 100.0\% & 1769 & 100.0\% \\
\hline
\end{tabular}
  }
\end{table}


\subsubsection{The \textit{XE}-plane}\label{xe-plane}

We will let
\begin{equation}\label{x-coordinate}
  x = X\mathbf{p} = F(1, :)\mathbf{p}
    = \sum_{j=1}^nf_{1j}p_j
\end{equation}
denote the productive risk coordinate of
$\mathbf{z}=\sum_{j=1}^n\mathbf{z}_jp_j$ and
\begin{equation}\label{E-coordinate}
  e = E\mathbf{p} = \sum_{j=1}^ne_jp_j
\end{equation}
be the corresponding expected return coordinate.
Figure \ref{xe-plane-1} shows the graphs of
\begin{equation}\label{ex-equation}
  e = e_0 + e_F x,
\end{equation}
corresponding to the late-heavy and uniform weight systems.
The plotted security points realize the \,$X$\, and \,$E$\, rows
of Tables \ref{decomp5_2010L200} and \ref{decomp5_2010U200}.
The grid scale is $5\times5$.

\begin{minipage}{3.0in}
  
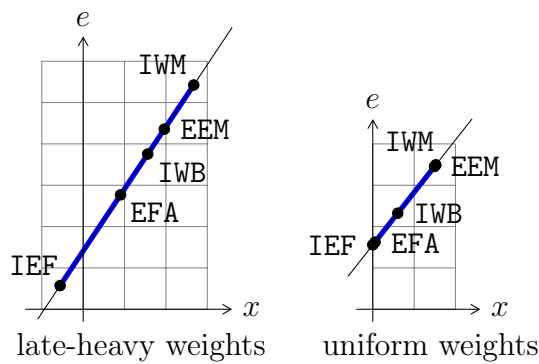
\captionof{figure}{\label{xe-plane-1}
    expected return as an\\
    affine function of productive risk\\
    last 200 market days of 2010
  }%
\begin{tikzpicture}[scale=0.11, >={angle 60}]
  \draw[help lines] (-5,0) grid[step=5] (15,30);
  \draw[line width=2pt,color=flat] (-2.79,2.86) -- (13.36,27.12);
  \draw (-5.5,-1.21) -- (18.0,34.09);
  \draw[->] (-7,0) -- (18,0) node[right] {$x$};
  \draw[->] (0, -1.5) -- (0,33) node[above] {$e$};
  \filldraw (-2.79,2.86) circle (18pt) node[left=10,above=0]{\texttt{IEF}};
  \filldraw (7.80,18.77) circle (18pt) node[right=13,below=-1]{\texttt{IWB}};
  \filldraw (13.36,27.12) circle (18pt) node[left=12,above=0]{\texttt{IWM}};
  \filldraw (4.52,13.84) circle (18pt) node[right=13,below=-1]{\texttt{EFA}};
  \filldraw (9.81,21.79) circle (18pt) node[right=2]{\texttt{EEM}};
  \draw (7.0,0) node[below=5]{late-heavy weights};
  \pgftransformxshift{35cm}
  \draw[help lines] (0,0) grid[step=5] (10,20);
  \draw[line width=2pt,color=flat] (0.00,7.79) -- (7.62,17.47);
  \draw (-3,3.97) -- (12,23.03);
  \draw[->] (-2,0) -- (13,0) node[right] {$x$};
  \draw[->] (0,-1.5) -- (0,23) node[above] {$e$};
  \filldraw (0.00,7.79) circle (18pt) node[left=2]{\texttt{IEF}};
  \filldraw (3.02,11.62) circle (18pt) node[right=2]{\texttt{IWB}};
  \filldraw (7.62,17.47) circle (18pt) node[left=10,above=1]{\texttt{IWM}};
  \filldraw (0.27,8.12) circle (18pt) node[below=0.5,right=2]{\texttt{EFA}};
  \filldraw (7.46,17.27) circle (18pt) node[right=2]{\texttt{EEM}};
  \draw (7,0) node[below=5]{uniform weights};
\end{tikzpicture}
\end{minipage}
\hfill
\begin{minipage}{3.0in}
These are 2-dimensional slices of the 6-dimensional $ZE$-spaces,
$\mathcal{L}(Z) \times \mathds{R}$, corresponding to the two weight
systems we are considering. Each periodic return vector,
$\mathbf{r}=\mathbf{z}+\mathbf{1}_Me$, corresponds to a point
$(\mathbf{z},e)$ in $ZE$-space. Figure \ref{xe-plane-1} shows the
orthogonal projection, $(x, e),~
x=\langle\mathbf{u}_1,\mathbf{z}\rangle_\omega$, of these points onto
the respective $XE$-planes.\\[-1.3ex]

In each picture the blue segment connecting the security points is the
projection of the portfolio polytope in $ZE$-space onto the $XE$-plane.
The line through this blue segment is the projection of the entire
$R$-flat.
\end{minipage}

Figure \ref{xe-plane-1} shows how the relationship between productive
risk and expected return can vary with weight system. Productive risk
accounts for 22.8\% of the total variance under the late-heavy system,
but only 7.0\% under the uniform system. This difference shows up in
the extra width of the late-heavy picture.

Looking at the vertical spread of Figure \ref{xe-plane-1} one notices
that the expected returns of the stock funds are higher and the bond
fund return lower under the late-heavy system. This is because the
negative returns of the stock funds in the first 70 days of the 200
market-day sample count more in the uniform system, and the negative
returns of the bond fund in the last 30 days count more in the
late-heavy system. (Figure \ref{growth5_2010})


\subsubsection{The \textit{eflag} flag}\label{the-eflag}

When the matrix equation $E = e_0 + e_F X$ holds exactly, the global variable
\textit{eflag} is 0 or \textbf{false}. This is typically the case when
there are many more periods than funds, as in the five-fund,
200-market-day examples just considered. Figure \ref{xe-plane-1}
displays this relationship graphically.

Rather than increase the number of funds to illustrate the
$\textit{eflag}=\textbf{true}$ condition let us decrease the number
of periods from $M=200$ market days to $M=3$ quarters. Now
$\rho=4$ (quarters per year), and we will use late-heavy weights,
$\bm{\omega}=[\omega_1,\omega_2,\omega_3]^T$, comparable to those
in Figure \ref{weights-figure}.

Here is the complete data.

\begin{table}[H]
  \centering
  \caption{\label{adjclose5_2010L3Q}%
    Quarter-ending adjusted closing prices
  }
  { \small
\begin{tabular}{|l|rrrrr|}
  \hline\rule{0mm}{4mm}
  date
  & \ticker{IEF}\hspace*{1.0ex} & \ticker{IWB}\hspace*{1.0ex}
  & \ticker{IWM}\hspace*{1.0ex} & \ticker{EFA}\hspace*{1.0ex}
  & \ticker{EEM}\hspace*{1.0ex} \\
  \hline\rule{0mm}{4mm}%
  2010-03-31 &  92.925 &  91.157 &  85.766 &  93.619 &  87.102 \\
  2010-06-30 & 100.196 &  80.605 &  77.316 &  79.141 &  77.686 \\
  2010-09-30 & 104.498 &  89.958 &  85.884 &  93.452 &  93.194 \\
  2010-12-31 & 100.000 & 100.000 & 100.000 & 100.000 & 100.000 \\
  \hline
\end{tabular}
  }
\end{table}

\vspace*{-5.0ex}
\begin{center}
\begin{minipage}{5.0in}
\begin{table}[H]
  \centering
  \caption{\label{rtn5_2010L3Q}%
    Quarterly returns and late-heavy weights
  }
    \newcolumntype{.}{D{.}{.}{3}}
  { \small
  $
\begin{array}{|c|.....|c|}
  \hline\rule{0mm}{4mm}
  \text{quarter}
  & \multicolumn{1}{r}{\ticker{IEF}\hspace*{1.0ex}} & \multicolumn{1}{r}{\ticker{IWB}\hspace*{1.0ex}}
  & \multicolumn{1}{r}{\ticker{IWM}\hspace*{1.0ex}} & \multicolumn{1}{r}{\ticker{EFA}\hspace*{1.0ex}}
  & \multicolumn{1}{r|}{\ticker{EEM}\hspace*{1.0ex}}
  & \text{weights} \\
  \hline\rule{0mm}{4mm}
  2 &  7.271 & -10.552 &  -8.450 & -14.478 &  -9.416 & 2/9 \\
  3 &  4.302 &   9.353 &   8.568 &  14.311 &  15.508 & 3/9 \\
  4 & -4.498 &  10.042 &  14.116 &   6.548 &   6.806 & 4/9 \\
  \hline
\end{array}
  $
  }
\end{table}
\end{minipage}\\[1.0ex]
{\footnotesize quarterly returns = adjusted closing price differences}
\end{center}

\vspace*{2.0ex}
Consider the return vector matrix,
$R = [\mathbf{r}_1,\ldots,\mathbf{r}_5]$, in Table \ref{rtn5_2010L3Q}.
It is easy to see that the return-flat tangent space
\begin{equation*}
  \mathcal{T}(R)=\left\{\sum_{j=1}^5\mathbf{r}_jt_j:
    \sum_{j=1}^5t_j=0\right\}
\end{equation*}
is all of $\mathds{R}^3$; in particular, the constant return vector,
$\mathbf{1}_3$, is contained in $\mathcal{T}(R)$. This is the
$\textit{eflag}=\textbf{true}$ condition implying that
equation \eqref{ex-equation} is not exact.

The \textbf{rtndecomp} output corresponding to the data of Table
\ref{rtn5_2010L3Q} is shown in Table \ref{decomp5_2010L3}. The yellow
row shows the approximate $e$-values that result from applying equation
\eqref{ex-equation} to the $x$-values of the green, productive risk row.

\begin{table}[H]
  \centering
  \caption{\label{decomp5_2010L3}%
    Decomposition of return data -- 5 large ETF universe\\
    last three quarters of 2010 -- late-heavy weights
  }
{ \small
  \newcolumntype{.}{D{.}{.}{2}}
  \newcolumntype{W}{|>{\columncolor{white}}c|}
$
\begin{array}{|c|.....||r.|}
\cline{1-6}\rule{0mm}{4mm}
  \text{fund}
  & \multicolumn{1}{r}{\parbox{7.0ex}{\hfill\ticker{IEF}\hspace*{0.2ex}}}
  & \multicolumn{1}{r}{\parbox{7.0ex}{\hfill\ticker{IWB}\hspace*{0.2ex}}}
  & \multicolumn{1}{r}{\parbox{7.0ex}{\hfill\ticker{IWM}\hspace*{0.2ex}}}
  & \multicolumn{1}{r}{\parbox{7.0ex}{\hfill\ticker{EFA}\hspace*{0.2ex}}}
  & \multicolumn{1}{r|}{\parbox{7.0ex}{\hfill\ticker{EEM}\hspace*{0.8ex}}} \\
\hline\rule[-1.5mm]{0mm}{6mm}
  E &   4.20 &  20.94 &  29.01 &  17.85 & 24.41
  & \multicolumn{2}{c|}{} \\
\cline{1-6}\rowcolor{approx}
  \text{approx} & 4.84 & 24.15 & 26.64 & 21.75 & 19.03
  & \multicolumn{2}{W}{\raisebox{1.0ex}[0pt]{$\widehat{V}_\text{T}$}} \\
\hline\rowcolor{productive}
  \multicolumn{1}{W}{\rule{0mm}{4mm}}
  & -10.13 & 12.42 & 15.34 & 9.63 & 6.45 & 626 & 42.2\% \\
\cline{2-8}\rowcolor{nonproductive}\multicolumn{1}{W}{\raisebox{1.5ex}[0pt]{$F$}}
  & -0.78 & -11.44 & -8.37 & -19.06& -17.07 & 856 & 57.8\% \\
\hhline{|=|=====#==|}\rule{0mm}{4mm}
  & \multicolumn{1}{r@{\hspace*{1.1ex}}}{103}
  & \multicolumn{1}{r@{\hspace*{1.1ex}}}{285}
  & \multicolumn{1}{r@{\hspace*{1.1ex}}}{305}
  & \multicolumn{1}{r@{\hspace*{1.1ex}}}{456}
  & \multicolumn{1}{r@{\hspace*{1.1ex}}||}{333}
  & \multicolumn{1}{r}{1483}
  & \multicolumn{1}{r|}{100\%\hspace*{-0.8ex}} \\
\cline{8-8} \raisebox{1.6ex}[0pt]{$\widehat{V}_\text{T}$}
  & 3.2\% & 14.2\% & 26.6\% & 27.5\% & 28.4\%
  & \multicolumn{1}{r|}{100\%\hspace*{-0.5ex}}\\
\cline{1-7}
  \multicolumn{7}{c}{\rule{0mm}{5mm}
    \text{~with}~f_0 = 0.005,~ e_0 = 13.51,~\text{and}~e_F = 0.856
  }
\end{array}
$
}
\end{table}

\begin{minipage}{9.3cm}
Figure \ref{xe-plane-2} shows the projection, $(x,y,e)\mapsto(x,e)$, of
the Table \ref{decomp5_2010L3} data onto the $XE$-plane. Here $y$ is the
nonproductive risk variable represented by the orange row of $F$. The
blue polygon is the image of the portfolio polyhedron.\\

The line through the blue polygon is the graph of the (approximate)
expected return function $e = e_0 + e_F x$ \eqref{ex-equation}. The
\textbf{rtndecomp} algorithm guarantees that this graph passes through
the mean XE-point, $(\bar{x},\bar{e})=(6.74,19.28)$ in this example.\\

Figure \ref{xe-plane-2} is comparable to the late-heavy side of Figure
\ref{xe-plane-1}. The $e$-values have the same order in both pictures,
but the $x$-values of the middle funds, \texttt{EFA}, \texttt{IWB}, and
\texttt{EEM}, are permuted from one picture to the other.
\end{minipage}
\hfill
\begin{minipage}{5.0cm}
\centering
\captionsetup{justification=centering}

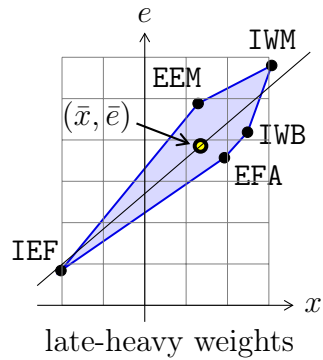
\captionof{figure}{\label{xe-plane-2}
expected return\\
as an approximate function of productive risk\\
last three quarters of 2010
}
\begin{tikzpicture}[scale=0.11, >={angle 60}]
  \fill[fill=feasible] (-10.13,4.20) -- (9.63,17.85)
  -- (12.42,20.94) -- (15.34,29.01) -- (6.45,24.41) -- cycle;
  \draw[color=boundary,line width=0.75pt] (-10.13,4.20) -- (9.63,17.85)
  -- (12.42,20.94) -- (15.34,29.01) -- (6.45,24.41) -- cycle;
  \filldraw (-10.13,4.20) circle (18pt) node[left=10,above=0]{\texttt{IEF}};
  \filldraw (12.42,20.94) circle (18pt) node[right=13,below=-7]{\texttt{IWB}};
  \filldraw (15.34,29.01) circle (18pt) node[above=2]{\texttt{IWM}};
  \filldraw (9.63,17.85) circle (18pt) node[right=13,below=-1]{\texttt{EFA}};
  \filldraw (6.45,24.41) circle (18pt) node[left=8,above=2]{\texttt{EEM}};
  \draw[help lines] (-10,0) grid[step=5] (15,30);
  \draw[->] (-13,0) -- (18,0) node[right] {$x$};
  \draw[->] (0,-2) -- (0,33) node[above] {$e$};
  \filldraw[fill=yellow,line width=1.5pt] (6.74,19.28) circle (20pt);
  \draw[->, thick] (-1,22) node[left=15,above=-8]{$(\bar{x},\bar{e})$} -- (5.5,19.5);
  \draw (-13,2.38) -- (20,30.63);
  \draw (3,0) node[below=5]{late-heavy weights};
\end{tikzpicture}
\end{minipage}


\subsubsection{The \textit{XY-} and \textit{EV-} planes}\label{xy-plane5}

Let us now return to the late-heavy weight output in Table
\ref{decomp5_2010L200}. Given a portfolio $\mathbf{p}$ let
\begin{alignat*}{4}
  x &= x_\textbf{p} &\,= X\mathbf{p} &= F(1, :)\mathbf{p},\\
  y &= y_\textbf{p} &\,= Y\mathbf{p} &= F(2, :)\mathbf{p},
\end{alignat*}
denote the productive and major nonproductive risk coordinates of
$\mathbf{p}$ (or really of $\mathbf{z}_\textbf{p} = Z\mathbf{p}$),
respectively, and let
\[
  \mathbf{y}=\mathbf{y}_\textbf{p}=F(2:m, :)\mathbf{p},\quad
  m=4,
\]
denote
the full vector of nonproductive risk corresponding to $\mathbf{p}$.
As noted in Section \ref{mv-analysis},
\begin{alignat}{4}
  \tag{\ref{e_of_x}}
  e &= e_\textbf{p} &&\,= e_0 + e_F x_\textbf{p},\\
  \tag{\ref{v_of_f}}
  v &= v_\textbf{p} &&\,=
    f_0^2 + x_\textbf{p}^2 + \|\mathbf{y}_\textbf{p}\|^2,
\end{alignat}

Figure \ref{rd_tikz5} shows the images, $(X,Y)(\bm{\Delta})$ and
$\Phi(\bm{\Delta})$, of the portfolio simplex $\bm{\Delta}$ in the $XY$- and
$EV$-planes, respectively. The $x$ and $y$ grid lines are 5 units apart.
Since the $(x,\mathbf{y})$-tuples are coordinate vectors with respect to
an orthonormal basis, the $XY$-image is the perpendicular projection of
the four-dimensional polytope, $F(\bm{\Delta})$, onto the $XY$-plane.

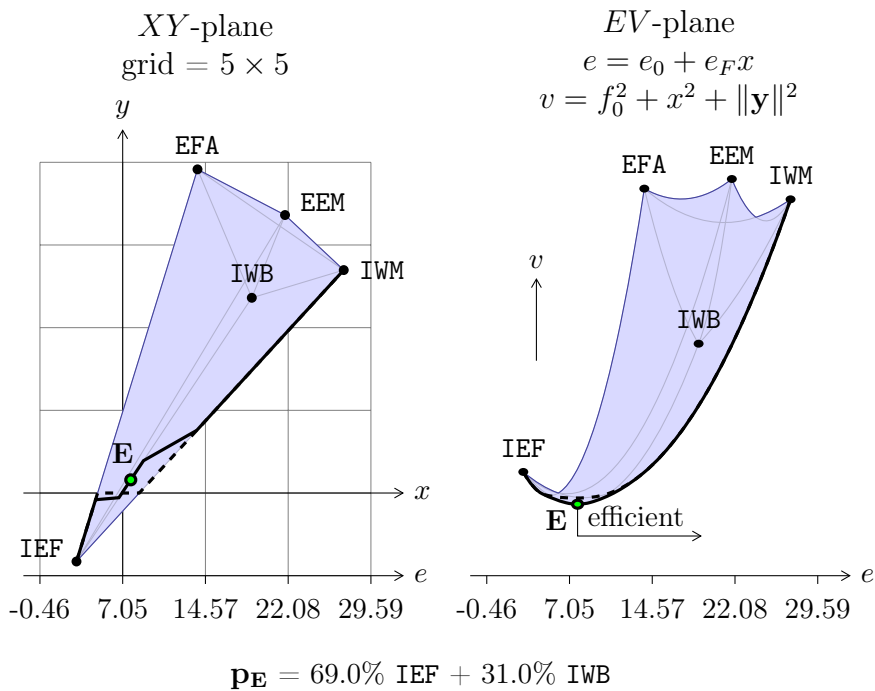
\begin{figure}[ht]
  \centering
  \caption{\label{rd_tikz5}
  Planar representations of return data -- 5 large ETFs\\ 
  last 200 market-days of 2010 -- late-heavy weights}

\begin{tikzpicture}[scale=0.22,>={angle 60}]

  \node at (-15,-2) {\parbox{5cm}{\centering%
    $XY$-plane\\ 
    grid = $5\times5$\\
    ~
  }};
  \node at (13,-2) {\parbox{5cm}{\centering%
    $EV$-plane\\
    $e = e_0 + e_F x$\\
    $v = f_0^2 + x^2 + \|\mathbf{y}\|^2$ 
  }};
\small
  \node at (-2,-39) {{\normalsize$\mathbf{p}_\textbf{E}$}
    = 69.0\% \ticker{IEF} + 31.0\% \ticker{IWB}};
  \draw[->] (5,-20) -- (5,-15) node[above] {$v$};

\begin{scope}[xshift=-20cm,yshift=-28cm]
  \draw[help lines] (-5,-5) grid[step=5] (15,20);
  \draw[->] (-6,0) -- (17,0) node[right] {$x$};
  \draw[->] (-6,-5) -- (17,-5) node[right] {$e$};
  \draw[->] (0,-5) -- (0,22) node[above] {$y$};
  \draw (-5,-5) -- (-5,-5.5) node[below=2] {-0.46};
  \draw (0,-5) -- (0,-5.5) node[below=2] { 7.05};
  \draw (5,-5) -- (5,-5.5) node[below=2] {14.57};
  \draw (10,-5) -- (10,-5.5) node[below=2] {22.08};
  \draw (15,-5) -- (15,-5.5) node[below=2] {29.59};
  \fill[obtainable1]
  (-2.7922,-4.1412) --  
  (4.5205,19.5736) --  
  (9.8099,16.8115) --  
  (13.3578,13.4828) --  
  cycle;
  \draw[connection1] (-2.7922,-4.1412) -- (7.8018,11.8061);  
  \draw[connection1] (-2.7922,-4.1412) -- (13.3578,13.4828);  
  \draw[connection1] (-2.7922,-4.1412) -- (4.5205,19.5736);  
  \draw[connection1] (-2.7922,-4.1412) -- (9.8099,16.8115);  
  \draw[connection1] (7.8018,11.8061) -- (13.3578,13.4828);  
  \draw[connection1] (7.8018,11.8061) -- (4.5205,19.5736);  
  \draw[connection1] (7.8018,11.8061) -- (9.8099,16.8115);  
  \draw[connection1] (13.3578,13.4828) -- (4.5205,19.5736);  
  \draw[connection1] (13.3578,13.4828) -- (9.8099,16.8115);  
  \draw[connection1] (4.5205,19.5736) -- (9.8099,16.8115);  
  \draw[boundary1] (1.0026,0)
  -- (-2.7922,-4.1412)
  -- (4.5205,19.5736)
  -- (9.8099,16.8115)
  -- (13.3578,13.4828);
  \draw[very thick,dashed] (-2.7922,-4.1412)
  -- (-1.5153,0.0000)
  -- (1.0026,-0.0000)
  -- (13.3578,13.4828);
  \draw[very thick] (-2.7922,-4.1412)
  -- (-1.6403,-0.4054)
  -- (-0.2300,-0.2842)
  -- (1.2579,1.9555)
  -- (4.4744,3.7886)
  -- (13.3578,13.4828);
  \fill (-2.7922,-4.1412) circle (0.30) node[above=5,left=1] {\ticker{IEF}};  
  \fill (7.8018,11.8061) circle (0.30) node[above=2] {\ticker{IWB}};  
  \fill (13.3578,13.4828) circle (0.30) node[right=2] {\ticker{IWM}};  
  \fill (4.5205,19.5736) circle (0.30) node[above=2] {\ticker{EFA}};  
  \fill (9.8099,16.8115) circle (0.30) node[above=5,right=2] {\ticker{EEM}};  
  \filldraw[fill=green,very thick] (0.4882,0.7969) circle (0.30) node[left=3,above=2] {\textbf{E}};
\end{scope}  

\begin{scope}[xshift=7cm,yshift=-28cm]
  \draw[->] (-6,-5) -- (17,-5) node[right] {$e$};
  \draw (-5,-5) -- (-5,-5.5) node[below=2] {-0.46};
  \draw (0,-5) -- (0,-5.5) node[below=2] { 7.05};
  \draw (5,-5) -- (5,-5.5) node[below=2] {14.57};
  \draw (10,-5) -- (10,-5.5) node[below=2] {22.08};
  \draw (15,-5) -- (15,-5.5) node[below=2] {29.59};
\end{scope}

\begin{scope}[xshift=7cm,yshift=-29cm,yscale=0.8]
  \fill[obtainable1]
  (-2.7922,2.8443) .. controls (-2.4082,1.9786) and (-2.0242,1.4264) ..
  (-1.6403,1.1878) .. controls (-1.1702,0.8957) and (-0.7001,0.6730) ..
  (-0.2300,0.5198) .. controls (0.2660,0.3582) and (0.7619,0.3640) ..
  (1.2579,0.5372) .. controls (2.3301,0.9116) and (3.4022,1.6575) ..
  (4.4744,2.7748) .. controls (7.4356,5.8605) and (10.3967,12.7531) ..
  (13.3578,23.4526) .. controls (12.6611,22.8410) and (11.9643,22.3892) ..
  (11.2675,22.0972) .. controls (10.7816,22.5356) and (10.2958,23.4995) ..
  (9.8099,24.9888) .. controls (8.0468,23.2542) and (6.2836,23.0116) ..
  (4.5205,24.2610) .. controls (2.7881,10.2079) and (1.0557,2.5366) ..
  (-0.6768,1.2470) .. controls (-1.3819,1.5635) and (-2.0871,2.0960) ..
  (-2.7922,2.8443);
  \draw[connection1] (-2.7922,2.8443) .. controls (0.7391,-2.4122) and (4.2704,0.8191) .. (7.8018,12.5383);  
  \draw[connection1] (-2.7922,2.8443) .. controls (2.5911,-2.8685) and (7.9745,4.0009) .. (13.3578,23.4526);  
  \draw[connection1] (-2.7922,2.8443) .. controls (-0.3546,-2.6511) and (2.0829,4.4878) .. (4.5205,24.2610);  
  \draw[connection1] (-2.7922,2.8443) .. controls (1.4085,-2.4356) and (5.6092,4.9459) .. (9.8099,24.9888);  
  \draw[connection1] (7.8018,12.5383) .. controls (9.6538,14.9674) and (11.5058,18.6055) .. (13.3578,23.4526);  
  \draw[connection1] (7.8018,12.5383) .. controls (6.7080,14.7610) and (5.6143,18.6685) .. (4.5205,24.2610);  
  \draw[connection1] (7.8018,12.5383) .. controls (8.4712,14.6859) and (9.1405,18.8361) .. (9.8099,24.9888);  
  \draw[connection1] (13.3578,23.4526) .. controls (10.4121,20.8672) and (7.4663,21.1367) .. (4.5205,24.2610);  
  \draw[connection1] (13.3578,23.4526) .. controls (12.1752,20.8516) and (10.9925,21.3637) .. (9.8099,24.9888);  
  \draw[connection1] (4.5205,24.2610) .. controls (6.2836,23.0116) and (8.0468,23.2542) .. (9.8099,24.9888);  
  \draw[boundary1]
  (13.3578,23.4526) .. controls (12.6611,22.8410) and (11.9643,22.3892) ..
  (11.2675,22.0972) .. controls (10.7816,22.5356) and (10.2958,23.4995) ..
  (9.8099,24.9888) .. controls (8.0468,23.2542) and (6.2836,23.0116) ..
  (4.5205,24.2610) .. controls (2.7881,10.2079) and (1.0557,2.5366) ..
  (-0.6768,1.2470) .. controls (-1.3819,1.5635) and (-2.0871,2.0960) ..
  (-2.7922,2.8443);
  \draw[very thick,dashed]
  (-2.7922,2.8443) .. controls (-2.3666,1.8846) and (-1.9409,1.3103) ..
  (-1.5153,1.1212) .. controls (-0.6760,0.8855) and (0.1633,0.8122) ..
  (1.0026,0.9013) .. controls (5.1210,1.0545) and (9.2394,8.5716) ..
  (13.3578,23.4526);
  \draw[very thick]
  (-2.7922,2.8443) .. controls (-2.4082,1.9786) and (-2.0242,1.4264) ..
  (-1.6403,1.1878) .. controls (-1.1702,0.8957) and (-0.7001,0.6730) ..
  (-0.2300,0.5198) .. controls (0.2660,0.3582) and (0.7619,0.3640) ..
  (1.2579,0.5372) .. controls (2.3301,0.9116) and (3.4022,1.6575) ..
  (4.4744,2.7748) .. controls (7.4356,5.8605) and (10.3967,12.7531) ..
  (13.3578,23.4526);
  \fill (-2.7922,2.8443) circle (0.30) node[above=2] {\ticker{IEF}};  
  \fill (7.8018,12.5383) circle (0.30) node[above=2] {\ticker{IWB}};  
  \fill (13.3578,23.4526) circle (0.30) node[above=2] {\ticker{IWM}};  
  \fill (4.5205,24.2610) circle (0.30) node[above=2] {\ticker{EFA}};  
  \fill (9.8099,24.9888) circle (0.30) node[above=2] {\ticker{EEM}};  
  \draw[->] (0.4882,0.4028) -- (0.4882,-2) -- (8,-2);
  \node[right] at (0.4882,-0.4831) {efficient};
  \filldraw[fill=green,very thick] (0.4882,0.4028) circle (0.30) node[left=8,below=-2] {\textbf{E}};
\end{scope} 

\end{tikzpicture}

\end{figure}

In Figure \ref{rd_tikz5} the $y$-coordinates of the securities are
actually the negatives of the $Y$ coefficients in Table
\ref{decomp5_2010L200}. This sign change makes the comparison of the
$XY$ and $EV$ images more natural, but it has no effect on our
analysis---a principal direction of nonproductive risk is, at most,
determined up to a reflection through the origin. On the other hand, it
is important to note that the stock funds are all in the first quadrant
and the bond fund, \ticker{IEF}, is in the third quadrant of $XY$-side
of Figure \ref{rd_tikz5}. This corresponds to the fact that the stock
funds are positively correlated with each other and negatively
correlated with the bond fund. This is also why the bond fund,
\ticker{IEF}, is a component of every minimum-variance portfolio other
than single security portfolio of maximum expected return, \ticker{IWM}.

The solid black path through either image corresponds to the set of
minimum-variance portfolios. 
As noted at the end of Section \ref{mv-analysis}, a minimum-variance
portfolio at a particular x = x* must minimize the value
$\|\mathbf{y}\|$ on the polytope $F(\bm{\Delta})\cap\{x = x*\}$. Points
on the dotted path in either image approximate this criterion. They
correspond to portfolios that minimize $|y|$ rather that
$\|\mathbf{y}\|$.

The point \textbf{E} in either figure is the image of the portfolio
$\mathbf{p}_\textbf{E}$ of absolute minimum variance.
$\mathbf{p}_\textbf{E}$ is an efficient portfolio. The solid black path
to the right of \textbf{E} is the image of the other efficient
portfolios. These efficient portfolios, in total, make up the piecewise
linear path in $\bm{\Delta}$ that goes from $\mathbf{p}_\textbf{E}$
through the ``corner portfolios'' of Table \ref{efficient5_2010L200} to
\ticker{IWM}. The corner portfolios show up as the corners above the
$x$-axis in the $XY$-image of the minimum-variance path.

\begin{table}[H]
  \centering
  \caption{\label{efficient5_2010L200}%
    Optimal portfolio paths -- 5 large ETF universe \\
    last 200 days of 2010 -- late-heavy weights
  }
{ \small
  \begin{tabular}{|c|rrrr|rrr|}
  \cline{2-8}
  \multicolumn{1}{l|}{\rule{0mm}{4mm}}
    & \multicolumn{4}{c|}{minimum-$\|\mathbf{y}\|$ (efficient) path}
    & \multicolumn{3}{c|}{minimum-$|y|$ path} \\
  \cline{2-8}
  \multicolumn{1}{l|}{\rule{0mm}{2mm}}
    & & \multicolumn{2}{c}{corner} & & \multicolumn{1}{c}{$x_\textbf{E}$} & \multicolumn{1}{c}{corner} & \\
  \multicolumn{1}{l|}{}
    & \multicolumn{1}{c}{ \raisebox{1.2ex}[0pt]{\normalsize$\mathbf{p}_\textbf{E}$}}
    & \multicolumn{2}{c}{portfolios}
    & \multicolumn{1}{c|}{ \raisebox{1.2ex}[0pt]{\ticker{IWM}}}
    & \multicolumn{1}{c}{portf}
    & \multicolumn{1}{c}{portf}
    & \multicolumn{1}{c|}{ \raisebox{1.2ex}[0pt]{\ticker{IWM}}} \\
  \hline\rule[-1.5mm]{0mm}{6mm}%
    \ticker{IEF} & 0.690 & 0.618 &  0.550 & \multicolumn{1}{c|}{0}
      & 0.777 & 0.765 & \multicolumn{1}{c|}{0} \\
    \ticker{IWB} & 0.310 & 0.382 & \multicolumn{1}{c}{0} & \multicolumn{1}{c|}{0}
      & \multicolumn{1}{c}{0} & \multicolumn{1}{c}{0} & \multicolumn{1}{c|}{0} \\
    \ticker{IWM} & \multicolumn{1}{c}{0} & \multicolumn{1}{c}{0} & 0.450 & 1.000
      & 0.187 & 0.235 & 1.000 \\
    \ticker{EFA} & \multicolumn{1}{c}{0} & \multicolumn{1}{c}{0} & \multicolumn{1}{c}{0} & \multicolumn{1}{c|}{0}
      & 0.036 & \multicolumn{1}{c}{0} & \multicolumn{1}{c|}{0} \\
    \ticker{EEM} & \multicolumn{1}{c}{0} & \multicolumn{1}{c}{0} & \multicolumn{1}{c}{0} & \multicolumn{1}{c|}{0}
      & \multicolumn{1}{c}{0} & \multicolumn{1}{c}{0} & \multicolumn{1}{c|}{0} \\
  \hline\rule[-1.5mm]{0mm}{6mm}%
    $x$ & 0.49 & 1.26 & 4.47 & 13.36 & 0.49 & 1.00 & 13.36 \\
    $e$ & 7.79 & 8.94 & 13.78 & 27.12 & 7.79 & 8.56 & 27.12 \\
    $\sigma$ & 5.69 & 5.88 & 8.48 & 20.41 & 6.33 & 6.38 & 20.41 \\
  \hline\rule[-1.5mm]{0mm}{6mm}%
    avg $e$ & \multicolumn{4}{c|}{17.45} & \multicolumn{3}{c|}{17.45} \\
    rms $\sigma$ & \multicolumn{4}{c|}{12.70} & \multicolumn{3}{c|}{12.74} \\
  \hline
  \end{tabular}
}
\end{table}

Table \ref{efficient5_2010L200} also shows the $x_\textbf{E}$- and
corner porfolios of the minimum-$|y|$ path over the efficient $x$-range
from $x_\textbf{E} = 0.49$ to $x_\textrm{max} = 13.36$. The
minimum-$|y|$ path and the efficient path are exactly the same from
$x=4.47$ to $x_\textrm{max}$, but the paths differ between
$x_\textbf{E}$ and $x=4.47$, the most substantial
$\sigma$-differences occurring near $x_\textbf{E}$.

The average value of $e$ over the two portfolio paths in Table
\ref{efficient5_2010L200} is just the average of the end values, 7.79 and
27.12. On the other hand, the average variance,
\[
  \avg v = \frac{1}{x_\textrm{max} - x_\textbf{E}}
  \int_{x_\textbf{E}}^{x_\textrm{max}} v(x)\, dx,
\]
and the root-mean-square risk,
$\rms \sigma = \sqrt{\avg v}$, depend on
the whole path. 

\vspace*{2.0ex}
\begin{rmk*}
Throughout this paper we use Markowitz's Critical Line Algorithm
as described in \cite{Niedermayer:2006uq} to compute minimum-variance
paths through portfolio simplices.
\end{rmk*}


\subsection{Eighteen emerging markets ETFs -- 2010}
\label{18emfunds}

Now let us consider a larger universe of securities---the 18 iShares
emerging markets ETFs that existed throughout 2010
\begin{compactenum}\label{emfunds}
  \item \ticker{BKF } -- iShares MSCI BRIC Index Fund
  \item \ticker{ECH } -- iShares MSCI Chile Investable Market Index Fund
  \item \ticker{EEM } -- iShares MSCI Emerging Markets Index Fund
  \item \ticker{EMIF} -- iShares S\&P Emerging Markets Infrastructure Index Fund
  \item \ticker{EPU } -- iShares MSCI All Peru Capped Index Fund
  \item \ticker{ESR } -- iShares MSCI Emerging Markets Eastern Europe Index Fund
  \item \ticker{EWM } -- iShares MSCI Malaysia Index Fund
  \item \ticker{EWT } -- iShares MSCI Taiwan Index Fund
  \item \ticker{EWW } -- iShares MSCI Mexico Investable Market Index Fund
  \item \ticker{EWY } -- iShares MSCI South Korea Index Fund
  \item \ticker{EWZ } -- iShares MSCI Brazil Index Fund
  \item \ticker{EZA } -- iShares MSCI South Africa Index Fund
  \item \ticker{FCHI} -- iShares FTSE China (HK Listed) Index Fund
  \item \ticker{FXI } -- iShares FTSE China 25 Index Fund
  \item \ticker{ILF } -- iShares S\&P Latin America 40 Index Fund
  \item \ticker{INDY} -- iShares S\&P India Nifty 50 Index Fund
  \item \ticker{THD } -- iShares MSCI Thailand Investable Market Index Fund
  \item \ticker{TUR } -- iShares MSCI Turkey Investable Market Index Fund
\end{compactenum}
Our input to \textbf{rtndecomp} will be the $200\times18$ matrix
$R=[r_{ij}]$ of normalized, linear, daily returns for the 18 emerging
markets funds listed above, over the last 200 market days of 2010.
The returns are normalized on 2010-12-31---they are daily
adjusted-closing-price differences divided by 2010-12-31 adjusted
closing prices. We will stick to the late-heavy weights $\bm{\omega}$ of
Figure \ref{weights-figure} and use $\rho=252$ market-days per year.

The $17\times18$ risk matrix $F$ corresponding to this example is not
displayed, but Table \ref{totalvar18_2010} summarizes how the total
variance of return is decomposed by $f_0$ and $F$, and Figure
\ref{rd_tikz18} shows the $XY$ and $EV$ planar representations of the
\textbf{rtndecomp} output.

\begin{table}[H]
  \centering
  \caption{\label{totalvar18_2010}%
    Decomposition of total variance -- 18 emerging market ETFs\\
    last 200 market-days of 2010 -- late-heavy weights
  }
  { \small
\begin{tabular}{|l|rr|}
\hline\rule{0mm}{4mm}%
  systemic variance ($n f_0^2$) & 1131 & 13.9\% \\
  productive variance ($\sum x^2$) & 1090 & 13.4\% \\
  major nonproductive variance ($\sum y^2$) & 4427 & 54.3\% \\
  other nonproductive variance & 1512 & 18.5\% \\
\hline\rule{0mm}{4mm}%
  total variance & 8160 & 100.0\% \\
\hline
\end{tabular}
  }
\end{table}

\begin{figure}[ht]
  \centering
  \caption{\label{rd_tikz18}
  Planar representations of return data -- 18 emerging markets ETFs\\ 
  last 200 market-days of 2010 -- late-heavy weights}

\begin{tikzpicture}[scale=0.34,>={angle 60}]

  \node at (-14.5,-2) {\parbox{5cm}{\centering%
    $XY$-plane\\ 
    grid = $5\times5$\\
    ~
  }};
  \node at (6,-2) {\parbox{5cm}{\centering%
    $EV$-plane\\
    $e = e_0 + e_F x$\\
    $v = f_0^2 + x^2 + \|\mathbf{y}\|^2$ 
  }};
\small
  \node at (-5,-30.5) {{\normalsize$\mathbf{p}_\textbf{E}$}
  = 39.5\% \ticker{ECH} + 13.0\% \ticker{EPU} + 41.6\% \ticker{EWM} + 5.9\% \ticker{EWT}};
  \draw[->] (10,-13) -- (10,-8) node[above] {$v$} ;

\begin{scope}[xshift=-7cm,yshift=-31cm]
  \draw[help lines] (-15,5) grid[step=5] (0,25);
  \draw[->] (0,5) -- (0,26.4) node[above] {$y$};
  \draw[->] (-16,5) -- (1.4,5) node[right] {$e$};
  \draw (-15,5) -- (-15,4.6) node[below=2] {-2.71};
  \draw (-10,5) -- (-10,4.6) node[below=2] {15.57};
  \draw (-5,5) -- (-5,4.6) node[below=2] {33.86};
  \draw (0,5) -- (0,4.6) node[below=2] {52.14};
  \node[above, rotate = 90] at (-15,10) {$x=-15$};
  \node[above] at (-12.5,5) {$y=5$};
  \node[above] at (-12.5,25) {$y=25$};
  \fill[obtainable1]
  (-12.4886,15.1321) --  
  (-5.8599,7.7519) --  
  (-3.3288,6.0908) --  
  (-0.0187,12.2987) --  
  (-8.9946,23.8857) --  
  cycle;
  \draw[connection1] (-8.2990,17.3696) -- (-10.5438,17.1920);  
  \draw[connection1] (-8.2990,17.3696) -- (-3.3288,6.0908);  
  \draw[connection1] (-8.2990,17.3696) -- (-9.2684,12.4610);  
  \draw[connection1] (-8.2990,17.3696) -- (-0.0187,12.2987);  
  \draw[connection1] (-8.2990,17.3696) -- (-7.9379,19.7779);  
  \draw[connection1] (-8.2990,17.3696) -- (-5.8599,7.7519);  
  \draw[connection1] (-8.2990,17.3696) -- (-4.3589,10.1846);  
  \draw[connection1] (-8.2990,17.3696) -- (-6.4031,13.5327);  
  \draw[connection1] (-8.2990,17.3696) -- (-5.1895,17.7841);  
  \draw[connection1] (-8.2990,17.3696) -- (-9.4531,18.6390);  
  \draw[connection1] (-8.2990,17.3696) -- (-4.8883,17.1922);  
  \draw[connection1] (-8.2990,17.3696) -- (-11.5798,14.7002);  
  \draw[connection1] (-8.2990,17.3696) -- (-12.4886,15.1321);  
  \draw[connection1] (-8.2990,17.3696) -- (-8.0982,16.5838);  
  \draw[connection1] (-8.2990,17.3696) -- (-8.2245,17.7093);  
  \draw[connection1] (-8.2990,17.3696) -- (-3.4212,13.5828);  
  \draw[connection1] (-8.2990,17.3696) -- (-8.9946,23.8857);  
  \draw[connection1] (-10.5438,17.1920) -- (-3.3288,6.0908);  
  \draw[connection1] (-10.5438,17.1920) -- (-9.2684,12.4610);  
  \draw[connection1] (-10.5438,17.1920) -- (-0.0187,12.2987);  
  \draw[connection1] (-10.5438,17.1920) -- (-7.9379,19.7779);  
  \draw[connection1] (-10.5438,17.1920) -- (-5.8599,7.7519);  
  \draw[connection1] (-10.5438,17.1920) -- (-4.3589,10.1846);  
  \draw[connection1] (-10.5438,17.1920) -- (-6.4031,13.5327);  
  \draw[connection1] (-10.5438,17.1920) -- (-5.1895,17.7841);  
  \draw[connection1] (-10.5438,17.1920) -- (-9.4531,18.6390);  
  \draw[connection1] (-10.5438,17.1920) -- (-4.8883,17.1922);  
  \draw[connection1] (-10.5438,17.1920) -- (-11.5798,14.7002);  
  \draw[connection1] (-10.5438,17.1920) -- (-12.4886,15.1321);  
  \draw[connection1] (-10.5438,17.1920) -- (-8.0982,16.5838);  
  \draw[connection1] (-10.5438,17.1920) -- (-8.2245,17.7093);  
  \draw[connection1] (-10.5438,17.1920) -- (-3.4212,13.5828);  
  \draw[connection1] (-10.5438,17.1920) -- (-8.9946,23.8857);  
  \draw[connection1] (-3.3288,6.0908) -- (-9.2684,12.4610);  
  \draw[connection1] (-3.3288,6.0908) -- (-0.0187,12.2987);  
  \draw[connection1] (-3.3288,6.0908) -- (-7.9379,19.7779);  
  \draw[connection1] (-3.3288,6.0908) -- (-5.8599,7.7519);  
  \draw[connection1] (-3.3288,6.0908) -- (-4.3589,10.1846);  
  \draw[connection1] (-3.3288,6.0908) -- (-6.4031,13.5327);  
  \draw[connection1] (-3.3288,6.0908) -- (-5.1895,17.7841);  
  \draw[connection1] (-3.3288,6.0908) -- (-9.4531,18.6390);  
  \draw[connection1] (-3.3288,6.0908) -- (-4.8883,17.1922);  
  \draw[connection1] (-3.3288,6.0908) -- (-11.5798,14.7002);  
  \draw[connection1] (-3.3288,6.0908) -- (-12.4886,15.1321);  
  \draw[connection1] (-3.3288,6.0908) -- (-8.0982,16.5838);  
  \draw[connection1] (-3.3288,6.0908) -- (-8.2245,17.7093);  
  \draw[connection1] (-3.3288,6.0908) -- (-3.4212,13.5828);  
  \draw[connection1] (-3.3288,6.0908) -- (-8.9946,23.8857);  
  \draw[connection1] (-9.2684,12.4610) -- (-0.0187,12.2987);  
  \draw[connection1] (-9.2684,12.4610) -- (-7.9379,19.7779);  
  \draw[connection1] (-9.2684,12.4610) -- (-5.8599,7.7519);  
  \draw[connection1] (-9.2684,12.4610) -- (-4.3589,10.1846);  
  \draw[connection1] (-9.2684,12.4610) -- (-6.4031,13.5327);  
  \draw[connection1] (-9.2684,12.4610) -- (-5.1895,17.7841);  
  \draw[connection1] (-9.2684,12.4610) -- (-9.4531,18.6390);  
  \draw[connection1] (-9.2684,12.4610) -- (-4.8883,17.1922);  
  \draw[connection1] (-9.2684,12.4610) -- (-11.5798,14.7002);  
  \draw[connection1] (-9.2684,12.4610) -- (-12.4886,15.1321);  
  \draw[connection1] (-9.2684,12.4610) -- (-8.0982,16.5838);  
  \draw[connection1] (-9.2684,12.4610) -- (-8.2245,17.7093);  
  \draw[connection1] (-9.2684,12.4610) -- (-3.4212,13.5828);  
  \draw[connection1] (-9.2684,12.4610) -- (-8.9946,23.8857);  
  \draw[connection1] (-0.0187,12.2987) -- (-7.9379,19.7779);  
  \draw[connection1] (-0.0187,12.2987) -- (-5.8599,7.7519);  
  \draw[connection1] (-0.0187,12.2987) -- (-4.3589,10.1846);  
  \draw[connection1] (-0.0187,12.2987) -- (-6.4031,13.5327);  
  \draw[connection1] (-0.0187,12.2987) -- (-5.1895,17.7841);  
  \draw[connection1] (-0.0187,12.2987) -- (-9.4531,18.6390);  
  \draw[connection1] (-0.0187,12.2987) -- (-4.8883,17.1922);  
  \draw[connection1] (-0.0187,12.2987) -- (-11.5798,14.7002);  
  \draw[connection1] (-0.0187,12.2987) -- (-12.4886,15.1321);  
  \draw[connection1] (-0.0187,12.2987) -- (-8.0982,16.5838);  
  \draw[connection1] (-0.0187,12.2987) -- (-8.2245,17.7093);  
  \draw[connection1] (-0.0187,12.2987) -- (-3.4212,13.5828);  
  \draw[connection1] (-0.0187,12.2987) -- (-8.9946,23.8857);  
  \draw[connection1] (-7.9379,19.7779) -- (-5.8599,7.7519);  
  \draw[connection1] (-7.9379,19.7779) -- (-4.3589,10.1846);  
  \draw[connection1] (-7.9379,19.7779) -- (-6.4031,13.5327);  
  \draw[connection1] (-7.9379,19.7779) -- (-5.1895,17.7841);  
  \draw[connection1] (-7.9379,19.7779) -- (-9.4531,18.6390);  
  \draw[connection1] (-7.9379,19.7779) -- (-4.8883,17.1922);  
  \draw[connection1] (-7.9379,19.7779) -- (-11.5798,14.7002);  
  \draw[connection1] (-7.9379,19.7779) -- (-12.4886,15.1321);  
  \draw[connection1] (-7.9379,19.7779) -- (-8.0982,16.5838);  
  \draw[connection1] (-7.9379,19.7779) -- (-8.2245,17.7093);  
  \draw[connection1] (-7.9379,19.7779) -- (-3.4212,13.5828);  
  \draw[connection1] (-7.9379,19.7779) -- (-8.9946,23.8857);  
  \draw[connection1] (-5.8599,7.7519) -- (-4.3589,10.1846);  
  \draw[connection1] (-5.8599,7.7519) -- (-6.4031,13.5327);  
  \draw[connection1] (-5.8599,7.7519) -- (-5.1895,17.7841);  
  \draw[connection1] (-5.8599,7.7519) -- (-9.4531,18.6390);  
  \draw[connection1] (-5.8599,7.7519) -- (-4.8883,17.1922);  
  \draw[connection1] (-5.8599,7.7519) -- (-11.5798,14.7002);  
  \draw[connection1] (-5.8599,7.7519) -- (-12.4886,15.1321);  
  \draw[connection1] (-5.8599,7.7519) -- (-8.0982,16.5838);  
  \draw[connection1] (-5.8599,7.7519) -- (-8.2245,17.7093);  
  \draw[connection1] (-5.8599,7.7519) -- (-3.4212,13.5828);  
  \draw[connection1] (-5.8599,7.7519) -- (-8.9946,23.8857);  
  \draw[connection1] (-4.3589,10.1846) -- (-6.4031,13.5327);  
  \draw[connection1] (-4.3589,10.1846) -- (-5.1895,17.7841);  
  \draw[connection1] (-4.3589,10.1846) -- (-9.4531,18.6390);  
  \draw[connection1] (-4.3589,10.1846) -- (-4.8883,17.1922);  
  \draw[connection1] (-4.3589,10.1846) -- (-11.5798,14.7002);  
  \draw[connection1] (-4.3589,10.1846) -- (-12.4886,15.1321);  
  \draw[connection1] (-4.3589,10.1846) -- (-8.0982,16.5838);  
  \draw[connection1] (-4.3589,10.1846) -- (-8.2245,17.7093);  
  \draw[connection1] (-4.3589,10.1846) -- (-3.4212,13.5828);  
  \draw[connection1] (-4.3589,10.1846) -- (-8.9946,23.8857);  
  \draw[connection1] (-6.4031,13.5327) -- (-5.1895,17.7841);  
  \draw[connection1] (-6.4031,13.5327) -- (-9.4531,18.6390);  
  \draw[connection1] (-6.4031,13.5327) -- (-4.8883,17.1922);  
  \draw[connection1] (-6.4031,13.5327) -- (-11.5798,14.7002);  
  \draw[connection1] (-6.4031,13.5327) -- (-12.4886,15.1321);  
  \draw[connection1] (-6.4031,13.5327) -- (-8.0982,16.5838);  
  \draw[connection1] (-6.4031,13.5327) -- (-8.2245,17.7093);  
  \draw[connection1] (-6.4031,13.5327) -- (-3.4212,13.5828);  
  \draw[connection1] (-6.4031,13.5327) -- (-8.9946,23.8857);  
  \draw[connection1] (-5.1895,17.7841) -- (-9.4531,18.6390);  
  \draw[connection1] (-5.1895,17.7841) -- (-4.8883,17.1922);  
  \draw[connection1] (-5.1895,17.7841) -- (-11.5798,14.7002);  
  \draw[connection1] (-5.1895,17.7841) -- (-12.4886,15.1321);  
  \draw[connection1] (-5.1895,17.7841) -- (-8.0982,16.5838);  
  \draw[connection1] (-5.1895,17.7841) -- (-8.2245,17.7093);  
  \draw[connection1] (-5.1895,17.7841) -- (-3.4212,13.5828);  
  \draw[connection1] (-5.1895,17.7841) -- (-8.9946,23.8857);  
  \draw[connection1] (-9.4531,18.6390) -- (-4.8883,17.1922);  
  \draw[connection1] (-9.4531,18.6390) -- (-11.5798,14.7002);  
  \draw[connection1] (-9.4531,18.6390) -- (-12.4886,15.1321);  
  \draw[connection1] (-9.4531,18.6390) -- (-8.0982,16.5838);  
  \draw[connection1] (-9.4531,18.6390) -- (-8.2245,17.7093);  
  \draw[connection1] (-9.4531,18.6390) -- (-3.4212,13.5828);  
  \draw[connection1] (-9.4531,18.6390) -- (-8.9946,23.8857);  
  \draw[connection1] (-4.8883,17.1922) -- (-11.5798,14.7002);  
  \draw[connection1] (-4.8883,17.1922) -- (-12.4886,15.1321);  
  \draw[connection1] (-4.8883,17.1922) -- (-8.0982,16.5838);  
  \draw[connection1] (-4.8883,17.1922) -- (-8.2245,17.7093);  
  \draw[connection1] (-4.8883,17.1922) -- (-3.4212,13.5828);  
  \draw[connection1] (-4.8883,17.1922) -- (-8.9946,23.8857);  
  \draw[connection1] (-11.5798,14.7002) -- (-12.4886,15.1321);  
  \draw[connection1] (-11.5798,14.7002) -- (-8.0982,16.5838);  
  \draw[connection1] (-11.5798,14.7002) -- (-8.2245,17.7093);  
  \draw[connection1] (-11.5798,14.7002) -- (-3.4212,13.5828);  
  \draw[connection1] (-11.5798,14.7002) -- (-8.9946,23.8857);  
  \draw[connection1] (-12.4886,15.1321) -- (-8.0982,16.5838);  
  \draw[connection1] (-12.4886,15.1321) -- (-8.2245,17.7093);  
  \draw[connection1] (-12.4886,15.1321) -- (-3.4212,13.5828);  
  \draw[connection1] (-12.4886,15.1321) -- (-8.9946,23.8857);  
  \draw[connection1] (-8.0982,16.5838) -- (-8.2245,17.7093);  
  \draw[connection1] (-8.0982,16.5838) -- (-3.4212,13.5828);  
  \draw[connection1] (-8.0982,16.5838) -- (-8.9946,23.8857);  
  \draw[connection1] (-8.2245,17.7093) -- (-3.4212,13.5828);  
  \draw[connection1] (-8.2245,17.7093) -- (-8.9946,23.8857);  
  \draw[connection1] (-3.4212,13.5828) -- (-8.9946,23.8857);  
  \draw[boundary1] (-0.0187,12.2987) -- (-8.9946,23.8857) -- (-12.4886,15.1321);
  \draw[very thick,dashed]
  (-12.4886,15.1321) -- (-5.8599,7.7519) -- (-3.3288,6.0908) -- (-0.0187,12.2987);
  \draw[very thick] (-12.4886,15.1321)
  -- (-11.2053,14.0676)
  -- (-9.1084,11.6453)
  -- (-8.5418,11.0351)
  -- (-6.2576,8.6110)
  -- (-5.3918,7.6324)
  -- (-5.0545,7.3707)
  -- (-4.9849,7.3172)
  -- (-4.8192,7.3014)
  -- (-1.6963,9.3508)
  -- (-1.5235,9.4765)
  -- (-0.0187,12.2987);
  \fill (-8.2990,17.3696) circle (0.22) node[right=2] {\ticker{EEM}};  
  \fill (-10.5438,17.1920) circle (0.22) node[left=3,above=1.5] {\ticker{BKF}};  
  \fill (-3.3288,6.0908) circle (0.22) node[below=1.5] {\ticker{ECH}};  
  \fill (-9.2684,12.4610) circle (0.22) node[right=1.5] {\ticker{EMIF}};  
  \fill (-0.0187,12.2987) circle (0.22) node[right=1.5] {\ticker{EPU}};  
  \fill (-7.9379,19.7779) circle (0.22) node[above=1.5] {\ticker{ESR}};  
  \fill (-5.8599,7.7519) circle (0.22) node[left=3,below=1.5] {\ticker{EWM}};  
  \fill (-4.3589,10.1846) circle (0.22) node[above=1.5] {\ticker{EWT}};  
  \fill (-6.4031,13.5327) circle (0.22) node[above=1.5] {\ticker{EWW}};  
  \fill (-5.1895,17.7841) circle (0.22) node[right=2,above=1.5] {\ticker{EWY}};  
  \fill (-9.4531,18.6390) circle (0.22) node[left=1,above=1.5] {\ticker{EWZ}};  
  \fill (-4.8883,17.1922) circle (0.22) node[right=1,below=1.5] {\ticker{EZA}};  
  \fill (-11.5798,14.7002) circle (0.22) node[right=1.5] {\ticker{FCHI}};  
  \fill (-12.4886,15.1321) circle (0.22) node[left=1.5] {\ticker{FXI}};  
  \fill (-8.0982,16.5838) circle (0.22) node[below=1.5] {\ticker{ILF}};  
  \fill (-8.2245,17.7093) circle (0.22) node[right=6,above=1.5] {\ticker{INDY}};  
  \fill (-3.4212,13.5828) circle (0.22) node[above=1.5] {\ticker{THD}};  
  \fill (-8.9946,23.8857) circle (0.22) node[above=1.5] {\ticker{TUR}};  
  \filldraw[fill=green,very thick] (-4.0110,7.8318) circle (0.22) node[above=2] {\textbf{E}};
  \draw[red] (-2.2,9.2) circle [x radius = 6.0cm,y radius = 3.8cm, rotate = 40];
\end{scope}  

\begin{scope}[xshift=13cm,yshift=-31cm]
  \draw[->] (-16,5) -- (1.4,5) node[right] {$e$};
  \draw (-15,5) -- (-15,4.6) node[below=2] {-2.71};
  \draw (-10,5) -- (-10,4.6) node[below=2] {15.57};
  \draw (-5,5) -- (-5,4.6) node[below=2] {33.86};
  \draw (0,5) -- (0,4.6) node[below=2] {52.14};
\end{scope}
  
\begin{scope}[xshift=13cm,yshift=-26cm,yscale=0.85]
  \fill[obtainable1]
  (-12.4886,11.9572) .. controls (-12.0608,10.8102) and (-11.6330,9.8543) ..
  (-11.2053,9.0896) .. controls (-10.5063,7.8400) and (-9.8074,6.7864) ..
  (-9.1084,5.9287) .. controls (-8.9195,5.6969) and (-8.7307,5.4763) ..
  (-8.5418,5.2667) .. controls (-7.7804,4.4218) and (-7.0190,3.7547) ..
  (-6.2576,3.2654) .. controls (-5.9690,3.0800) and (-5.6804,2.9239) ..
  (-5.3918,2.7972) .. controls (-5.2794,2.7478) and (-5.1670,2.7018) ..
  (-5.0545,2.6589) .. controls (-5.0313,2.6501) and (-5.0081,2.6414) ..
  (-4.9849,2.6329) .. controls (-4.9297,2.6127) and (-4.8744,2.5973) ..
  (-4.8192,2.5868) .. controls (-3.7782,2.3882) and (-2.7372,2.5733) ..
  (-1.6963,3.1420) .. controls (-1.6387,3.1735) and (-1.5811,3.2074) ..
  (-1.5235,3.2439) .. controls (-1.0219,3.5613) and (-0.5203,4.2727) ..
  (-0.0187,5.3778) .. controls (-0.9297,5.7783) and (-1.8407,6.5849) ..
  (-2.7517,7.7978) .. controls (-2.9749,8.3783) and (-3.1981,9.0525) ..
  (-3.4212,9.8206) .. controls (-3.6353,9.7589) and (-3.8493,9.7602) ..
  (-4.0634,9.8246) .. controls (-5.7071,12.7164) and (-7.3509,16.9306) ..
  (-8.9946,22.4671) .. controls (-10.1593,14.8749) and (-11.3239,11.3716) ..
  (-12.4886,11.9572);
  \draw[connection1] (-8.2990,9.8625) .. controls (-9.0473,9.9790) and (-9.7955,10.3588) .. (-10.5438,11.0018);  
  \draw[connection1] (-8.2990,9.8625) .. controls (-6.6423,5.6878) and (-4.9856,3.6676) .. (-3.3288,3.8019);  
  \draw[connection1] (-8.2990,9.8625) .. controls (-8.6221,8.3947) and (-8.9453,7.4182) .. (-9.2684,6.9330);  
  \draw[connection1] (-8.2990,9.8625) .. controls (-5.5389,6.9478) and (-2.7788,5.4530) .. (-0.0187,5.3778);  
  \draw[connection1] (-8.2990,9.8625) .. controls (-8.1787,10.2562) and (-8.0583,11.9235) .. (-7.9379,14.8644);  
  \draw[connection1] (-8.2990,9.8625) .. controls (-7.4860,6.5539) and (-6.6729,4.3974) .. (-5.8599,3.3932);  
  \draw[connection1] (-8.2990,9.8625) .. controls (-6.9857,7.2222) and (-5.6723,5.6168) .. (-4.3589,5.0463);  
  \draw[connection1] (-8.2990,9.8625) .. controls (-7.6671,8.3727) and (-7.0351,7.4985) .. (-6.4031,7.2399);  
  \draw[connection1] (-8.2990,9.8625) .. controls (-7.2625,9.5911) and (-6.2260,10.1981) .. (-5.1895,11.6835);  
  \draw[connection1] (-8.2990,9.8625) .. controls (-8.6837,10.2932) and (-9.0684,11.5424) .. (-9.4531,13.6103);  
  \draw[connection1] (-8.2990,9.8625) .. controls (-7.1621,9.2137) and (-6.0252,9.4600) .. (-4.8883,10.6015);  
  \draw[connection1] (-8.2990,9.8625) .. controls (-9.3926,9.3441) and (-10.4862,9.7347) .. (-11.5798,11.0344);  
  \draw[connection1] (-8.2990,9.8625) .. controls (-9.6955,9.6970) and (-11.0920,10.3952) .. (-12.4886,11.9572);  
  \draw[connection1] (-8.2990,9.8625) .. controls (-8.2321,9.4927) and (-8.1651,9.5533) .. (-8.0982,10.0442);  
  \draw[connection1] (-8.2990,9.8625) .. controls (-8.2742,9.7582) and (-8.2494,11.1422) .. (-8.2245,14.0145);  
  \draw[connection1] (-8.2990,9.8625) .. controls (-6.6731,7.8898) and (-5.0472,7.8758) .. (-3.4212,9.8206);  
  \draw[connection1] (-8.2990,9.8625) .. controls (-8.5309,11.3378) and (-8.7627,15.5393) .. (-8.9946,22.4671);  
  \draw[connection1] (-10.5438,11.0018) .. controls (-8.1388,6.1042) and (-5.7338,3.7042) .. (-3.3288,3.8019);  
  \draw[connection1] (-10.5438,11.0018) .. controls (-10.1186,9.0706) and (-9.6935,7.7143) .. (-9.2684,6.9330);  
  \draw[connection1] (-10.5438,11.0018) .. controls (-7.0354,7.5605) and (-3.5270,5.6858) .. (-0.0187,5.3778);  
  \draw[connection1] (-10.5438,11.0018) .. controls (-9.6752,10.8442) and (-8.8066,12.1317) .. (-7.9379,14.8644);  
  \draw[connection1] (-10.5438,11.0018) .. controls (-8.9825,6.9343) and (-7.4212,4.3980) .. (-5.8599,3.3932);  
  \draw[connection1] (-10.5438,11.0018) .. controls (-8.4822,7.4800) and (-6.4206,5.4949) .. (-4.3589,5.0463);  
  \draw[connection1] (-10.5438,11.0018) .. controls (-9.1636,8.8951) and (-7.7833,7.6412) .. (-6.4031,7.2399);  
  \draw[connection1] (-10.5438,11.0018) .. controls (-8.7590,9.8479) and (-6.9743,10.0751) .. (-5.1895,11.6835);  
  \draw[connection1] (-10.5438,11.0018) .. controls (-10.1802,11.1867) and (-9.8167,12.0562) .. (-9.4531,13.6103);  
  \draw[connection1] (-10.5438,11.0018) .. controls (-8.6586,9.5587) and (-6.7735,9.4253) .. (-4.8883,10.6015);  
  \draw[connection1] (-10.5438,11.0018) .. controls (-10.8891,10.2161) and (-11.2345,10.2270) .. (-11.5798,11.0344);  
  \draw[connection1] (-10.5438,11.0018) .. controls (-11.1920,10.5517) and (-11.8403,10.8701) .. (-12.4886,11.9572);  
  \draw[connection1] (-10.5438,11.0018) .. controls (-9.7286,10.1882) and (-8.9134,9.8690) .. (-8.0982,10.0442);  
  \draw[connection1] (-10.5438,11.0018) .. controls (-9.7707,10.5368) and (-8.9976,11.5410) .. (-8.2245,14.0145);  
  \draw[connection1] (-10.5438,11.0018) .. controls (-8.1696,8.2460) and (-5.7954,7.8523) .. (-3.4212,9.8206);  
  \draw[connection1] (-10.5438,11.0018) .. controls (-10.0274,12.0604) and (-9.5110,15.8821) .. (-8.9946,22.4671);  
  \draw[connection1] (-3.3288,3.8019) .. controls (-5.3087,2.9833) and (-7.2885,4.0270) .. (-9.2684,6.9330);  
  \draw[connection1] (-3.3288,3.8019) .. controls (-2.2254,2.4215) and (-1.1221,2.9468) .. (-0.0187,5.3778);  
  \draw[connection1] (-3.3288,3.8019) .. controls (-4.8652,3.9185) and (-6.4016,7.6060) .. (-7.9379,14.8644);  
  \draw[connection1] (-3.3288,3.8019) .. controls (-4.1725,2.3452) and (-5.0162,2.2090) .. (-5.8599,3.3932);  
  \draw[connection1] (-3.3288,3.8019) .. controls (-3.6722,2.5170) and (-4.0156,2.9318) .. (-4.3589,5.0463);  
  \draw[connection1] (-3.3288,3.8019) .. controls (-4.3536,3.1204) and (-5.3784,4.2664) .. (-6.4031,7.2399);  
  \draw[connection1] (-3.3288,3.8019) .. controls (-3.9490,3.6802) and (-4.5693,6.3074) .. (-5.1895,11.6835);  
  \draw[connection1] (-3.3288,3.8019) .. controls (-5.3703,3.9421) and (-7.4117,7.2116) .. (-9.4531,13.6103);  
  \draw[connection1] (-3.3288,3.8019) .. controls (-3.8487,3.3275) and (-4.3685,5.5940) .. (-4.8883,10.6015);  
  \draw[connection1] (-3.3288,3.8019) .. controls (-6.0791,3.1719) and (-8.8295,5.5827) .. (-11.5798,11.0344);  
  \draw[connection1] (-3.3288,3.8019) .. controls (-6.3821,3.2941) and (-9.4353,6.0125) .. (-12.4886,11.9572);  
  \draw[connection1] (-3.3288,3.8019) .. controls (-4.9186,3.6558) and (-6.5084,5.7365) .. (-8.0982,10.0442);  
  \draw[connection1] (-3.3288,3.8019) .. controls (-4.9607,3.4530) and (-6.5926,6.8572) .. (-8.2245,14.0145);  
  \draw[connection1] (-3.3288,3.8019) .. controls (-3.3596,2.7028) and (-3.3904,4.7091) .. (-3.4212,9.8206);  
  \draw[connection1] (-3.3288,3.8019) .. controls (-5.2174,4.1004) and (-7.1060,10.3221) .. (-8.9946,22.4671);  
  \draw[connection1] (-9.2684,6.9330) .. controls (-6.1851,5.0933) and (-3.1019,4.5749) .. (-0.0187,5.3778);  
  \draw[connection1] (-9.2684,6.9330) .. controls (-8.8249,7.8916) and (-8.3814,10.5353) .. (-7.9379,14.8644);  
  \draw[connection1] (-9.2684,6.9330) .. controls (-8.1322,4.8342) and (-6.9960,3.6542) .. (-5.8599,3.3932);  
  \draw[connection1] (-9.2684,6.9330) .. controls (-7.6319,5.1821) and (-5.9954,4.5532) .. (-4.3589,5.0463);  
  \draw[connection1] (-9.2684,6.9330) .. controls (-8.3133,6.2890) and (-7.3582,6.3913) .. (-6.4031,7.2399);  
  \draw[connection1] (-9.2684,6.9330) .. controls (-7.9088,7.0301) and (-6.5491,8.6136) .. (-5.1895,11.6835);  
  \draw[connection1] (-9.2684,6.9330) .. controls (-9.3300,8.0170) and (-9.3916,10.2427) .. (-9.4531,13.6103);  
  \draw[connection1] (-9.2684,6.9330) .. controls (-7.8084,6.8324) and (-6.3484,8.0552) .. (-4.8883,10.6015);  
  \draw[connection1] (-9.2684,6.9330) .. controls (-10.0389,7.1909) and (-10.8093,8.5580) .. (-11.5798,11.0344);  
  \draw[connection1] (-9.2684,6.9330) .. controls (-10.3418,7.4043) and (-11.4152,9.0790) .. (-12.4886,11.9572);  
  \draw[connection1] (-9.2684,6.9330) .. controls (-8.8783,7.2261) and (-8.4882,8.2631) .. (-8.0982,10.0442);  
  \draw[connection1] (-9.2684,6.9330) .. controls (-8.9204,7.4399) and (-8.5725,9.8004) .. (-8.2245,14.0145);  
  \draw[connection1] (-9.2684,6.9330) .. controls (-7.3193,5.6124) and (-5.3703,6.5749) .. (-3.4212,9.8206);  
  \draw[connection1] (-9.2684,6.9330) .. controls (-9.1771,8.7716) and (-9.0859,13.9496) .. (-8.9946,22.4671);  
  \draw[connection1] (-0.0187,5.3778) .. controls (-2.6584,5.9096) and (-5.2982,9.0718) .. (-7.9379,14.8644);  
  \draw[connection1] (-0.0187,5.3778) .. controls (-1.9657,3.3377) and (-3.9128,2.6761) .. (-5.8599,3.3932);  
  \draw[connection1] (-0.0187,5.3778) .. controls (-1.4654,3.6194) and (-2.9122,3.5089) .. (-4.3589,5.0463);  
  \draw[connection1] (-0.0187,5.3778) .. controls (-2.1468,4.5140) and (-4.2750,5.1347) .. (-6.4031,7.2399);  
  \draw[connection1] (-0.0187,5.3778) .. controls (-1.7423,5.1465) and (-3.4659,7.2484) .. (-5.1895,11.6835);  
  \draw[connection1] (-0.0187,5.3778) .. controls (-3.1635,5.8417) and (-6.3083,8.5859) .. (-9.4531,13.6103);  
  \draw[connection1] (-0.0187,5.3778) .. controls (-1.6419,5.2905) and (-3.2651,7.0318) .. (-4.8883,10.6015);  
  \draw[connection1] (-0.0187,5.3778) .. controls (-3.8724,5.2635) and (-7.7261,7.1490) .. (-11.5798,11.0344);  
  \draw[connection1] (-0.0187,5.3778) .. controls (-4.1753,5.3917) and (-8.3319,7.5848) .. (-12.4886,11.9572);  
  \draw[connection1] (-0.0187,5.3778) .. controls (-2.7118,5.4925) and (-5.4050,7.0479) .. (-8.0982,10.0442);  
  \draw[connection1] (-0.0187,5.3778) .. controls (-2.7540,5.4185) and (-5.4892,8.2974) .. (-8.2245,14.0145);  
  \draw[connection1] (-0.0187,5.3778) .. controls (-1.1529,4.4361) and (-2.2870,5.9171) .. (-3.4212,9.8206);  
  \draw[connection1] (-0.0187,5.3778) .. controls (-3.0107,6.6930) and (-6.0026,12.3894) .. (-8.9946,22.4671);  
  \draw[connection1] (-7.9379,14.8644) .. controls (-7.2453,8.6425) and (-6.5526,4.8188) .. (-5.8599,3.3932);  
  \draw[connection1] (-7.9379,14.8644) .. controls (-6.7449,9.2191) and (-5.5519,5.9465) .. (-4.3589,5.0463);  
  \draw[connection1] (-7.9379,14.8644) .. controls (-7.4263,10.3878) and (-6.9147,7.8463) .. (-6.4031,7.2399);  
  \draw[connection1] (-7.9379,14.8644) .. controls (-7.0218,11.4852) and (-6.1056,10.4249) .. (-5.1895,11.6835);  
  \draw[connection1] (-7.9379,14.8644) .. controls (-8.4430,12.3066) and (-8.9481,11.8886) .. (-9.4531,13.6103);  
  \draw[connection1] (-7.9379,14.8644) .. controls (-6.9214,11.7510) and (-5.9049,10.3301) .. (-4.8883,10.6015);  
  \draw[connection1] (-7.9379,14.8644) .. controls (-9.1519,11.0528) and (-10.3658,9.7761) .. (-11.5798,11.0344);  
  \draw[connection1] (-7.9379,14.8644) .. controls (-9.4548,11.2882) and (-10.9717,10.3191) .. (-12.4886,11.9572);  
  \draw[connection1] (-7.9379,14.8644) .. controls (-7.9913,11.4618) and (-8.0448,9.8550) .. (-8.0982,10.0442);  
  \draw[connection1] (-7.9379,14.8644) .. controls (-8.0335,12.2082) and (-8.1290,11.9249) .. (-8.2245,14.0145);  
  \draw[connection1] (-7.9379,14.8644) .. controls (-6.4324,9.5812) and (-4.9268,7.8999) .. (-3.4212,9.8206);  
  \draw[connection1] (-7.9379,14.8644) .. controls (-8.2902,14.2681) and (-8.6424,16.8023) .. (-8.9946,22.4671);  
  \draw[connection1] (-5.8599,3.3932) .. controls (-5.3596,2.9184) and (-4.8593,3.4695) .. (-4.3589,5.0463);  
  \draw[connection1] (-5.8599,3.3932) .. controls (-6.0410,3.4913) and (-6.2220,4.7735) .. (-6.4031,7.2399);  
  \draw[connection1] (-5.8599,3.3932) .. controls (-5.6364,4.3421) and (-5.4130,7.1055) .. (-5.1895,11.6835);  
  \draw[connection1] (-5.8599,3.3932) .. controls (-7.0576,4.4189) and (-8.2554,7.8246) .. (-9.4531,13.6103);  
  \draw[connection1] (-5.8599,3.3932) .. controls (-5.5360,3.9641) and (-5.2122,6.3668) .. (-4.8883,10.6015);  
  \draw[connection1] (-5.8599,3.3932) .. controls (-7.7665,3.9883) and (-9.6732,6.5353) .. (-11.5798,11.0344);  
  \draw[connection1] (-5.8599,3.3932) .. controls (-8.0694,4.2380) and (-10.2790,7.0927) .. (-12.4886,11.9572);  
  \draw[connection1] (-5.8599,3.3932) .. controls (-6.6060,4.0364) and (-7.3521,6.2534) .. (-8.0982,10.0442);  
  \draw[connection1] (-5.8599,3.3932) .. controls (-6.6481,4.3181) and (-7.4363,7.8585) .. (-8.2245,14.0145);  
  \draw[connection1] (-5.8599,3.3932) .. controls (-5.0470,3.4165) and (-4.2341,5.5590) .. (-3.4212,9.8206);  
  \draw[connection1] (-5.8599,3.3932) .. controls (-6.9048,5.3405) and (-7.9497,11.6984) .. (-8.9946,22.4671);  
  \draw[connection1] (-4.3589,5.0463) .. controls (-5.0403,4.5101) and (-5.7217,5.2413) .. (-6.4031,7.2399);  
  \draw[connection1] (-4.3589,5.0463) .. controls (-4.6358,5.8461) and (-4.9126,8.0585) .. (-5.1895,11.6835);  
  \draw[connection1] (-4.3589,5.0463) .. controls (-6.0570,5.4571) and (-7.7551,8.3118) .. (-9.4531,13.6103);  
  \draw[connection1] (-4.3589,5.0463) .. controls (-4.5354,5.0673) and (-4.7119,6.9190) .. (-4.8883,10.6015);  
  \draw[connection1] (-4.3589,5.0463) .. controls (-6.7659,5.0817) and (-9.1728,7.0777) .. (-11.5798,11.0344);  
  \draw[connection1] (-4.3589,5.0463) .. controls (-7.0688,5.3644) and (-9.7787,7.6680) .. (-12.4886,11.9572);  
  \draw[connection1] (-4.3589,5.0463) .. controls (-5.6053,5.0512) and (-6.8518,6.7172) .. (-8.0982,10.0442);  
  \draw[connection1] (-4.3589,5.0463) .. controls (-5.6475,5.4049) and (-6.9360,8.3943) .. (-8.2245,14.0145);  
  \draw[connection1] (-4.3589,5.0463) .. controls (-4.0464,4.1389) and (-3.7338,5.7304) .. (-3.4212,9.8206);  
  \draw[connection1] (-4.3589,5.0463) .. controls (-5.9042,5.9642) and (-7.4494,11.7711) .. (-8.9946,22.4671);  
  \draw[connection1] (-6.4031,7.2399) .. controls (-5.9986,7.2043) and (-5.5940,8.6855) .. (-5.1895,11.6835);  
  \draw[connection1] (-6.4031,7.2399) .. controls (-7.4198,7.9460) and (-8.4365,10.0695) .. (-9.4531,13.6103);  
  \draw[connection1] (-6.4031,7.2399) .. controls (-5.8982,7.1427) and (-5.3933,8.2632) .. (-4.8883,10.6015);  
  \draw[connection1] (-6.4031,7.2399) .. controls (-8.1287,6.9769) and (-9.8542,8.2417) .. (-11.5798,11.0344);  
  \draw[connection1] (-6.4031,7.2399) .. controls (-8.4316,7.2424) and (-10.4601,8.8149) .. (-12.4886,11.9572);  
  \draw[connection1] (-6.4031,7.2399) .. controls (-6.9681,7.4409) and (-7.5332,8.3757) .. (-8.0982,10.0442);  
  \draw[connection1] (-6.4031,7.2399) .. controls (-7.0103,7.2143) and (-7.6174,9.4725) .. (-8.2245,14.0145);  
  \draw[connection1] (-6.4031,7.2399) .. controls (-5.4092,5.4527) and (-4.4152,6.3130) .. (-3.4212,9.8206);  
  \draw[connection1] (-6.4031,7.2399) .. controls (-7.2670,8.8539) and (-8.1308,13.9296) .. (-8.9946,22.4671);  
  \draw[connection1] (-5.1895,11.6835) .. controls (-6.6107,10.0386) and (-8.0319,10.6809) .. (-9.4531,13.6103);  
  \draw[connection1] (-5.1895,11.6835) .. controls (-5.0891,9.5262) and (-4.9887,9.1656) .. (-4.8883,10.6015);  
  \draw[connection1] (-5.1895,11.6835) .. controls (-7.3196,9.5372) and (-9.4497,9.3208) .. (-11.5798,11.0344);  
  \draw[connection1] (-5.1895,11.6835) .. controls (-7.6225,9.8271) and (-10.0555,9.9183) .. (-12.4886,11.9572);  
  \draw[connection1] (-5.1895,11.6835) .. controls (-6.1591,9.4979) and (-7.1286,8.9514) .. (-8.0982,10.0442);  
  \draw[connection1] (-5.1895,11.6835) .. controls (-6.2012,9.7979) and (-7.2129,10.5749) .. (-8.2245,14.0145);  
  \draw[connection1] (-5.1895,11.6835) .. controls (-4.6001,8.2390) and (-4.0107,7.6180) .. (-3.4212,9.8206);  
  \draw[connection1] (-5.1895,11.6835) .. controls (-6.4579,11.3474) and (-7.7262,14.9420) .. (-8.9946,22.4671);  
  \draw[connection1] (-9.4531,13.6103) .. controls (-7.9315,10.6477) and (-6.4099,9.6448) .. (-4.8883,10.6015);  
  \draw[connection1] (-9.4531,13.6103) .. controls (-10.1620,10.9742) and (-10.8709,10.1156) .. (-11.5798,11.0344);  
  \draw[connection1] (-9.4531,13.6103) .. controls (-10.4649,11.2580) and (-11.4767,10.7070) .. (-12.4886,11.9572);  
  \draw[connection1] (-9.4531,13.6103) .. controls (-9.0015,12.0182) and (-8.5498,10.8295) .. (-8.0982,10.0442);  
  \draw[connection1] (-9.4531,13.6103) .. controls (-9.0436,11.2305) and (-8.6341,11.3653) .. (-8.2245,14.0145);  
  \draw[connection1] (-9.4531,13.6103) .. controls (-7.4425,9.1422) and (-5.4319,7.8790) .. (-3.4212,9.8206);  
  \draw[connection1] (-9.4531,13.6103) .. controls (-9.3003,13.2656) and (-9.1474,16.2179) .. (-8.9946,22.4671);  
  \draw[connection1] (-4.8883,10.6015) .. controls (-7.1188,8.7567) and (-9.3493,8.9011) .. (-11.5798,11.0344);  
  \draw[connection1] (-4.8883,10.6015) .. controls (-7.4217,8.9800) and (-9.9552,9.4319) .. (-12.4886,11.9572);  
  \draw[connection1] (-4.8883,10.6015) .. controls (-5.9583,9.1967) and (-7.0282,9.0109) .. (-8.0982,10.0442);  
  \draw[connection1] (-4.8883,10.6015) .. controls (-6.0004,8.9165) and (-7.1125,10.0542) .. (-8.2245,14.0145);  
  \draw[connection1] (-4.8883,10.6015) .. controls (-4.3993,7.5062) and (-3.9103,7.2459) .. (-3.4212,9.8206);  
  \draw[connection1] (-4.8883,10.6015) .. controls (-6.2571,11.2299) and (-7.6259,15.1852) .. (-8.9946,22.4671);  
  \draw[connection1] (-11.5798,11.0344) .. controls (-11.8827,11.0403) and (-12.1856,11.3479) .. (-12.4886,11.9572);  
  \draw[connection1] (-11.5798,11.0344) .. controls (-10.4193,9.2793) and (-9.2587,8.9492) .. (-8.0982,10.0442);  
  \draw[connection1] (-11.5798,11.0344) .. controls (-10.4614,9.7918) and (-9.3430,10.7852) .. (-8.2245,14.0145);  
  \draw[connection1] (-11.5798,11.0344) .. controls (-8.8603,8.1056) and (-6.1408,7.7011) .. (-3.4212,9.8206);  
  \draw[connection1] (-11.5798,11.0344) .. controls (-10.7181,11.0494) and (-9.8563,14.8603) .. (-8.9946,22.4671);  
  \draw[connection1] (-12.4886,11.9572) .. controls (-11.0251,9.8500) and (-9.5616,9.2123) .. (-8.0982,10.0442);  
  \draw[connection1] (-12.4886,11.9572) .. controls (-11.0672,10.3485) and (-9.6459,11.0343) .. (-8.2245,14.0145);  
  \draw[connection1] (-12.4886,11.9572) .. controls (-9.4661,8.6153) and (-6.4437,7.9031) .. (-3.4212,9.8206);  
  \draw[connection1] (-12.4886,11.9572) .. controls (-11.3239,11.3716) and (-10.1593,14.8749) .. (-8.9946,22.4671);  
  \draw[connection1] (-8.0982,10.0442) .. controls (-8.1403,9.1734) and (-8.1824,10.4969) .. (-8.2245,14.0145);  
  \draw[connection1] (-8.0982,10.0442) .. controls (-6.5392,7.4562) and (-4.9802,7.3817) .. (-3.4212,9.8206);  
  \draw[connection1] (-8.0982,10.0442) .. controls (-8.3970,11.2071) and (-8.6958,15.3481) .. (-8.9946,22.4671);  
  \draw[connection1] (-8.2245,14.0145) .. controls (-6.6234,9.1958) and (-5.0223,7.7978) .. (-3.4212,9.8206);  
  \draw[connection1] (-8.2245,14.0145) .. controls (-8.4812,12.8579) and (-8.7379,15.6754) .. (-8.9946,22.4671);  
  \draw[connection1] (-3.4212,9.8206) .. controls (-5.2790,9.2848) and (-7.1368,13.5003) .. (-8.9946,22.4671);  
  \draw[boundary1]
  (-0.0187,5.3778) .. controls (-0.9297,5.7783) and (-1.8407,6.5849) ..
  (-2.7517,7.7978) .. controls (-2.9749,8.3783) and (-3.1981,9.0525) ..
  (-3.4212,9.8206) .. controls (-3.6353,9.7589) and (-3.8493,9.7602) ..
  (-4.0634,9.8246) .. controls (-5.7071,12.7164) and (-7.3509,16.9306) ..
  (-8.9946,22.4671) .. controls (-10.1593,14.8749) and (-11.3239,11.3716) ..
  (-12.4886,11.9572);
  \draw[very thick,dashed]
  (-12.4886,11.9572) .. controls (-10.2790,7.0927) and (-8.0694,4.2380) ..
  (-5.8599,3.3932) .. controls (-5.0162,2.2090) and (-4.1725,2.3452) ..
  (-3.3288,3.8019) .. controls (-2.2254,2.4215) and (-1.1221,2.9468) ..
  (-0.0187,5.3778);
  \draw[very thick]
  (-12.4886,11.9572) .. controls (-12.0608,10.8102) and (-11.6330,9.8543) ..
  (-11.2053,9.0896) .. controls (-10.5063,7.8400) and (-9.8074,6.7864) ..
  (-9.1084,5.9287) .. controls (-8.9195,5.6969) and (-8.7307,5.4763) ..
  (-8.5418,5.2667) .. controls (-7.7804,4.4218) and (-7.0190,3.7547) ..
  (-6.2576,3.2654) .. controls (-5.9690,3.0800) and (-5.6804,2.9239) ..
  (-5.3918,2.7972) .. controls (-5.2794,2.7478) and (-5.1670,2.7018) ..
  (-5.0545,2.6589) .. controls (-5.0313,2.6501) and (-5.0081,2.6414) ..
  (-4.9849,2.6329) .. controls (-4.9297,2.6127) and (-4.8744,2.5973) ..
  (-4.8192,2.5868) .. controls (-3.7782,2.3882) and (-2.7372,2.5733) ..
  (-1.6963,3.1420) .. controls (-1.6387,3.1735) and (-1.5811,3.2074) ..
  (-1.5235,3.2439) .. controls (-1.0219,3.5613) and (-0.5203,4.2727) ..
  (-0.0187,5.3778);
  \fill (-8.2990,9.8625) circle (0.22) node[left=1.5] {\ticker{EEM}};  
  \fill (-10.5438,11.0018) circle (0.22) node[right=3,above=1.5] {\ticker{BKF}};  
  \fill (-3.3288,3.8019) circle (0.22) node[right=3,above=1.5] {\ticker{ECH}};  
  \fill (-9.2684,6.9330) circle (0.22) node[right=3,above=1.5] {\ticker{EMIF}};  
  \fill (-0.0187,5.3778) circle (0.22) node[right=2,above=1.5] {\ticker{EPU}};  
  \fill (-7.9379,14.8644) circle (0.22) node[above=1.5] {\ticker{ESR}};  
  \fill (-5.8599,3.3932) circle (0.22) node[above=1.5] {\ticker{EWM}};  
  \fill (-4.3589,5.0463) circle (0.22) node[above=1.5] {\ticker{EWT}};  
  \fill (-6.4031,7.2399) circle (0.22) node[above=1.5] {\ticker{EWW}};  
  \fill (-5.1895,11.6835) circle (0.22) node[above=1.5] {\ticker{EWY}};  
  \fill (-9.4531,13.6103) circle (0.22) node[left=2,above=1.5] {\ticker{EWZ}};  
  \fill (-4.8883,10.6015) circle (0.22) node[below=1.5] {\ticker{EZA}};  
  \fill (-11.5798,11.0344) circle (0.22) node[below=2,left=3] {\ticker{FCHI}};  
  \fill (-12.4886,11.9572) circle (0.22) node[left=3,above=1.5] {\ticker{FXI}};  
  \fill (-8.0982,10.0442) circle (0.22) node[right=3,above=1.5] {\ticker{ILF}};  
  \fill (-8.2245,14.0145) circle (0.22) node[right=4,below=1.5] {\ticker{INDY}};  
  \fill (-3.4212,9.8206) circle (0.22) node[right=3,above=1.5] {\ticker{THD}};  
  \fill (-8.9946,22.4671) circle (0.22) node[above=1.5] {\ticker{TUR}};  
  \draw[->] (-4.0110,2.5097) -- (-4.0110,0.8) -- (-0.0110,0.8);
  \node at (-1.8110,1.8) {efficient};
  \filldraw[fill=green,very thick] (-4.0110,2.5097) circle (0.22) node[left=8,below=-1.5] {\textbf{E}};

\end{scope} 

\end{tikzpicture}

\end{figure}
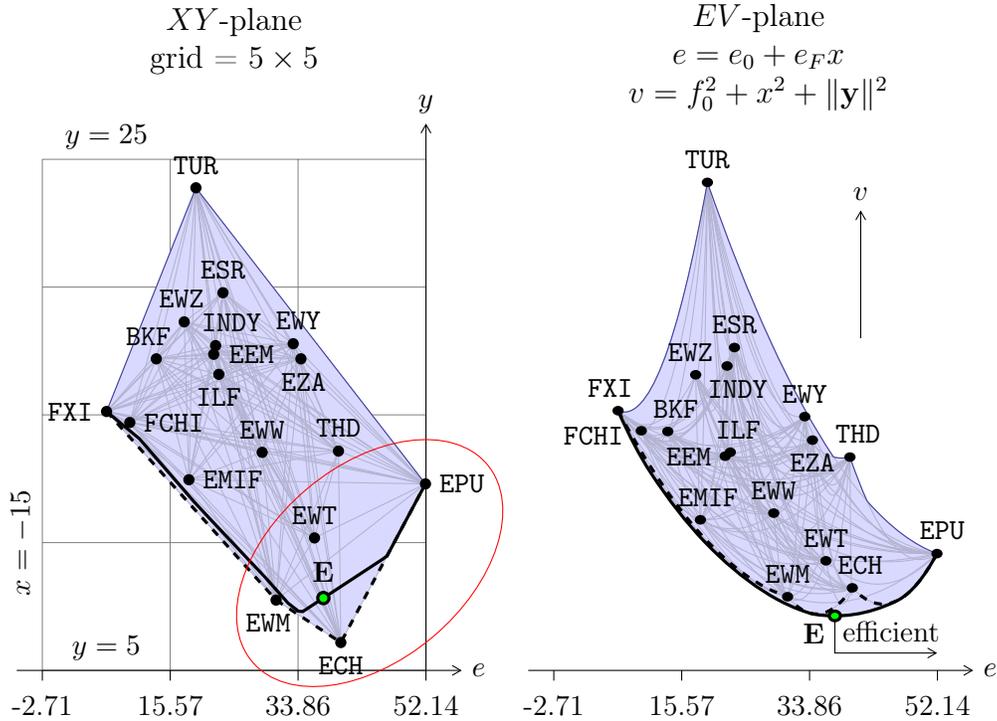

As in Section \ref{xy-plane5}, \textbf{E} is the image of the efficient
portfolio, $\mathbf{p}_\textbf{E}$, of absolute minimum variance. All
efficient portfolios from this 18 ETF universe are made up of the four
funds circled in red. These four funds have the least risk ($\sigma$) of
the eighteeen, and their expected returns are among the highest. This is
an unusual situation---where risk and return seem to be inversely
related.

As in Section \ref{xy-plane5}, the solid black path in either picture
corresponds to the set of minimum-variance portfolios. The path of
minimum-$|y|$ portfolios is dashed. The faint interior lines are
two-security-portfolio paths.

The efficient, minimum-$\|\mathbf{y}\|$ portfolio at
$x = x_\texttt{ECH} = -3.33$,
\[
\mathbf{p} =
  41.5\%~\texttt{ECH} + 24.0\%~\texttt{EPU} +
  29.3\%~\texttt{EWM} + 5.1\%~\texttt{EWT},
\]
has expected return $e = 39.97$ and risk $\sigma = 12.71$. On the other
hand, the single security \texttt{ECH} is the minimum-$|y|$ portfolio at
this value of $x$ (and $e$), but the risk of \texttt{ECH} is
$\sigma_\texttt{ECH}$ = 14.46, or 13.79\% more than the efficient value.
Apparently, in this case, the minimum-$|y|$ portfolio is not a good
approximation of the minimum-$\|\mathbf{y}\|$ portfolio. This
is apparent in Figure \ref{rd_tikz18}.


\subsubsection{The efficient four}\label{4emfunds}

Let us restrict our attention to the four emerging market funds,
\texttt{ECH}, \texttt{EPU}, \texttt{EPU}, \texttt{EPU}, that make up
the efficient portfolios of Figure \ref{rd_tikz18}.
Figure \ref{growth4_2010} shows how these funds grew in 2010, and Table
\ref{decomp4_2010L200} shows the output of \textbf{rtndecomp}
restricted to their returns over the last 200 market days of 2010.

\begin{figure}[ht]
  \centering
  \caption{\label{growth4_2010}%
    2010 adjusted closing prices of four emerging market ETFs\\
    prices normalized at 100 on 2010-12-31}
  \includegraphics{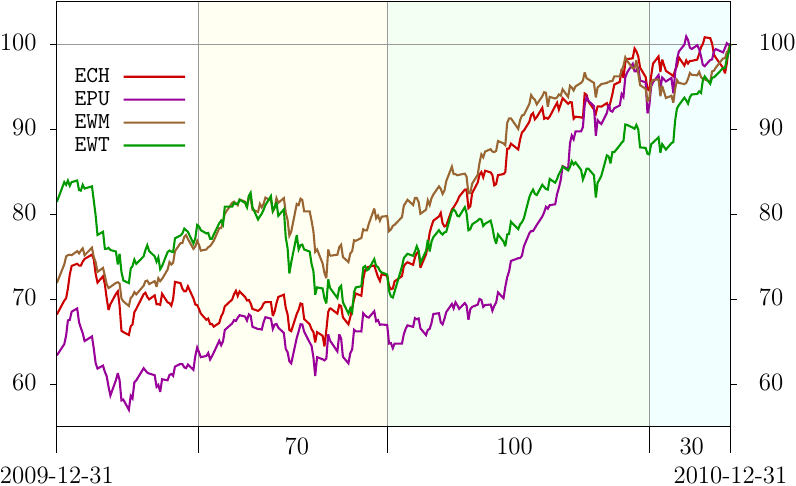}
\end{figure}

\begin{table}[H]
  \centering
  \caption{\label{decomp4_2010L200}%
    Decomposition of return data -- 4 emerging market ETFs\\
    last 200 market-days of 2010 -- late-heavy weights
  }
{ \small
  \newcolumntype{.}{D{.}{.}{2}}
  \newcolumntype{W}{|>{\columncolor{white}}c|}
$
\begin{array}{|c|....||r.|}
\cline{1-5}\rule{0mm}{4mm}
  \text{fund}
  & \multicolumn{1}{r}{\parbox{6.5ex}{\hfill\ticker{ECH}\hspace*{0.2ex}}}
  & \multicolumn{1}{r}{\parbox{7.0ex}{\hfill\ticker{EPU}\hspace*{0.2ex}}}
  & \multicolumn{1}{r}{\parbox{7.0ex}{\hfill\ticker{EWM}\hspace*{0.2ex}}}
  & \multicolumn{1}{r|}{\parbox{7.0ex}{\hfill\ticker{EWT}\hspace*{0.2ex}}} \\
\hline\rule[-1.5mm]{0mm}{6mm}
  E &   39.97 &  52.07 &  30.71 &  36.20
  & \multicolumn{2}{c|}{\widehat{V}_\text{T}} \\
\hline\rowcolor{productive}
  \multicolumn{1}{W}{\rule{0mm}{4mm}}
  & 1.45 & 8.51 & -3.94 & -0.74 & 91 & 31.0\% \\
\cline{2-7}
  \rowcolor{nonproductive}\multicolumn{1}{W}{\rule{0mm}{4mm}$F$}
  & -5.98 & 5.01 & 3.04 & 7.55 & 127 & 43.6\% \\
  \rowcolor{nonproductive}\multicolumn{1}{W}{}
  & -3.44 & 3.58 & 3.04 & -6.33 & 74 & 25.4\% \\
\hhline{|=|====#==|}\rule{0mm}{4mm}
  & \multicolumn{1}{r@{\hspace*{1.1ex}}}{50}
  & \multicolumn{1}{r@{\hspace*{1.1ex}}}{110}
  & \multicolumn{1}{r@{\hspace*{1.1ex}}}{34}
  & \multicolumn{1}{r@{\hspace*{1.1ex}}||}{98}
  & \multicolumn{1}{r}{292}
  & \multicolumn{1}{r|}{100\%\hspace*{-0.8ex}} \\
\cline{7-7} \raisebox{1.6ex}[0pt]{$\widehat{V}_\text{T}$}
  & 17.0\% & 37.8\% & 11.7\% & 33.5\%
  & \multicolumn{1}{r|}{100\%\hspace*{-0.5ex}}\\
\cline{1-6}
  \multicolumn{6}{c}{\rule{0mm}{5mm}
    \text{with}~f_0 = 12.62,~ e_0 = 37.47,~\text{and}~e_F = 1.717
  }
\end{array}
$
}
\end{table}

The $XY$ and $EV$ planar representations of Table \ref{decomp4_2010L200}
are shown in Figure \ref{rd_tikz4}.

\begin{figure}[ht]
  \centering
  \caption{\label{rd_tikz4}
  Planar representations of return data -- 4 emerging markets ETFs\\ 
  last 200 market-days of 2010 -- late-heavy weights}

\begin{tikzpicture}[scale=0.27,>={angle 60}]

  \node at (-14.5,-2) {\parbox{5cm}{\centering%
    $XY$-plane\\ 
    grid = $5\times5$\\
    ~
  }};
  \node at (8,-2) {\parbox{5cm}{\centering%
    $EV$-plane\\
    $e = e_0 + e_F x$\\
    $v = f_0^2 + x^2 + \|\mathbf{y}\|^2$ 
  }};
\small
  \node at (-3,-33) {{\normalsize$\mathbf{p}_\textbf{E}$}
  = 39.5\% \ticker{ECH} + 13.0\% \ticker{EPU} + 41.6\% \ticker{EWM} + 5.9\% \ticker{EWT}};
  \draw[->] (16,-22) -- (16,-17) node[above] {$v$} ;

\begin{scope}[xshift=-17.5cm,yshift=-18cm]
  \draw[help lines] (-5,-10) grid[step=5] (10,10);
  \draw[->] (-6,0) -- (12,0) node[right] {$x$};
  \draw[->] (-6,-10) -- (12,-10) node[right] {$e$};
  \draw[->] (0,-10) -- (0,12) node[above] {$y$};
  \draw (-5,-10) -- (-5,-10.5) node[below=2] {28.89};
  \draw (0,-10) -- (0,-10.5) node[below=2] {37.47};
  \draw (5,-10) -- (5,-10.5) node[below=2] {46.05};
  \draw (10,-10) -- (10,-10.5) node[below=2] {54.64};
  \fill[obtainable1]
  (1.4534,-5.9791) --  
  (8.5055,5.0135) --  
  (-0.7412,7.5475) --  
  (-3.9389,3.0360) --  
  cycle;
  \draw[connection1] (1.4534,-5.9791) -- (8.5055,5.0135);  
  \draw[connection1] (1.4534,-5.9791) -- (-3.9389,3.0360);  
  \draw[connection1] (1.4534,-5.9791) -- (-0.7412,7.5475);  
  \draw[connection1] (8.5055,5.0135) -- (-3.9389,3.0360);  
  \draw[connection1] (8.5055,5.0135) -- (-0.7412,7.5475);  
  \draw[connection1] (-3.9389,3.0360) -- (-0.7412,7.5475);  
  \draw[boundary1] (-2.1230,0)
  -- (1.4534,-5.9791)
  -- (8.5055,5.0135)
  -- (-0.7412,7.5475)
  -- (-3.9389,3.0360);
  \draw[very thick,dashed] (-3.9389,3.0360)
  -- (-2.1230,0.0000)
  -- (5.2892,-0.0000)
  -- (8.5055,5.0135);
  \draw[very thick] (-3.9389,3.0360)
  -- (-2.6049,0.8057)
  -- (-1.7218,-0.0000)
  -- (4.9315,0.0000)
  -- (5.2995,0.0161)
  -- (8.5055,5.0135);
  \fill (1.4534,-5.9791) circle (0.30) node[right=2,below=1.5] {\ticker{ECH}};  
  \fill (8.5055,5.0135) circle (0.30) node[above=1.5] {\ticker{EPU}};  
  \fill (-3.9389,3.0360) circle (0.30) node[left=1.5] {\ticker{EWM}};  
  \fill (-0.7412,7.5475) circle (0.30) node[left=5,above=1.5] {\ticker{EWT}};  
  \filldraw[fill=green] (0.0000,-0.0000) circle (0.30) node[above=1.5] {\textbf{E}};
\end{scope}  

\begin{scope}[xshift=6cm,yshift=-18cm]
  \draw[->] (-6,-10) -- (12,-10) node[right] {$e$};
  \draw (-5,-10) -- (-5,-10.5) node[below=2] {28.89};
  \draw (0,-10) -- (0,-10.5) node[below=2] {37.47};
  \draw (5,-10) -- (5,-10.5) node[below=2] {46.05};
  \draw (10,-10) -- (10,-10.5) node[below=2] {54.64};
\end{scope}

\begin{scope}[xshift=6cm,yshift=-25cm]
  \fill[obtainable1]
  (-3.9389,5.0968) .. controls (-3.4943,3.4066) and (-3.0496,2.1827) ..
  (-2.6049,1.4250) .. controls (-2.3105,0.9235) and (-2.0162,0.5968) ..
  (-1.7218,0.4447) .. controls (0.4959,-0.7009) and (2.7137,0.3669) ..
  (4.9315,3.6480) .. controls (5.0542,3.8295) and (5.1769,4.0253) ..
  (5.2995,4.2356) .. controls (6.3682,6.0670) and (7.4369,10.1708) ..
  (8.5055,16.5468) .. controls (7.6626,14.1552) and (6.8198,12.0920) ..
  (5.9769,10.3571) .. controls (3.7375,6.7644) and (1.4981,8.1901) ..
  (-0.7412,14.6342) .. controls (-1.4759,8.3641) and (-2.2106,4.9053) ..
  (-2.9453,4.2579) .. controls (-3.2765,4.4868) and (-3.6077,4.7664) ..
  (-3.9389,5.0968);
  \draw[connection1] (1.4534,7.4549) .. controls (3.8041,-0.5090) and (6.1548,2.5216) .. (8.5055,16.5468);  
  \draw[connection1] (1.4534,7.4549) .. controls (-0.3440,-0.9489) and (-2.1415,-1.7350) .. (-3.9389,5.0968);  
  \draw[connection1] (1.4534,7.4549) .. controls (0.7219,0.0418) and (-0.0097,2.4349) .. (-0.7412,14.6342);  
  \draw[connection1] (8.5055,16.5468) .. controls (4.3574,4.7767) and (0.2092,0.9600) .. (-3.9389,5.0968);  
  \draw[connection1] (8.5055,16.5468) .. controls (5.4233,6.4021) and (2.3410,5.7646) .. (-0.7412,14.6342);  
  \draw[connection1] (-3.9389,5.0968) .. controls (-2.8730,2.3580) and (-1.8071,5.5371) .. (-0.7412,14.6342);  
  \draw[boundary1]
  (8.5055,16.5468) .. controls (7.6626,14.1552) and (6.8198,12.0920) ..
  (5.9769,10.3571) .. controls (3.7375,6.7644) and (1.4981,8.1901) ..
  (-0.7412,14.6342) .. controls (-1.4759,8.3641) and (-2.2106,4.9053) ..
  (-2.9453,4.2579) .. controls (-3.2765,4.4868) and (-3.6077,4.7664) ..
  (-3.9389,5.0968);
  \draw[very thick,dashed]
  (-3.9389,5.0968) .. controls (-3.3336,2.7961) and (-2.7283,1.3593) ..
  (-2.1230,0.7865) .. controls (0.3477,-0.8282) and (2.8185,0.3156) ..
  (5.2892,4.2179) .. controls (6.3613,6.0405) and (7.4334,10.1502) ..
  (8.5055,16.5468);
  \draw[very thick]
  (-3.9389,5.0968) .. controls (-3.4943,3.4066) and (-3.0496,2.1827) ..
  (-2.6049,1.4250) .. controls (-2.3105,0.9235) and (-2.0162,0.5968) ..
  (-1.7218,0.4447) .. controls (0.4959,-0.7009) and (2.7137,0.3669) ..
  (4.9315,3.6480) .. controls (5.0542,3.8295) and (5.1769,4.0253) ..
  (5.2995,4.2356) .. controls (6.3682,6.0670) and (7.4369,10.1708) ..
  (8.5055,16.5468);
  \fill (1.4534,7.4549) circle (0.30) node[above=1.5] {\ticker{ECH}};  
  \fill (8.5055,16.5468) circle (0.30) node[above=1.5] {\ticker{EPU}};  
  \fill (-3.9389,5.0968) circle (0.30) node[above=1.5] {\ticker{EWM}};  
  \fill (-0.7412,14.6342) circle (0.30) node[above=1.5] {\ticker{EWT}};  
  \draw[->] (0.0000,0.0000) -- (0.0000,-1.8000) -- (6.0000,-1.8000);
  \node at (3.0000,-0.7200) {efficient};
  \filldraw[fill=green] (0.0000,0.0000) circle (0.30) node[left=8,below=-1.5] {\textbf{E}};
\end{scope} 

\end{tikzpicture}

\end{figure}
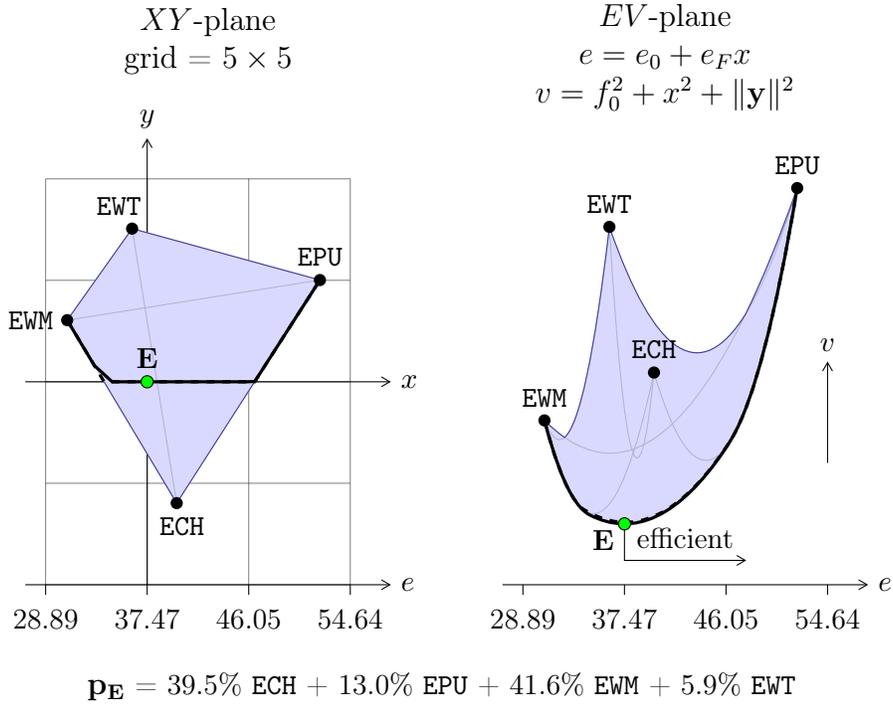

The minimum-$\|\mathbf{y}\|$ portfolio path in Table \ref{efficient4_2010L200} is
efficient in either the current four-fund universe or in the original
eighteen-emerging-market-fund universe. On the other hand the minimum
$|y|$-path is based on the four-fund $XY$ representation in Figure
\ref{rd_tikz4}. This minimum-$|y|$ path is an extremely close
approximation of the efficient path, with the maximum
$\sigma$-difference of less than 0.2\% occurring at $x = x_E = 0$.

\begin{table}[H]
  \centering
  \caption{\label{efficient4_2010L200}%
    Optimal portfolio paths -- 4 emerging market ETF universe \\
    last 200 days of 2010 -- late-heavy weights
  }
{ \small
  \begin{tabular}{|c|rrrr|rrr|}
  \cline{2-8}
  \multicolumn{1}{l|}{\rule{0mm}{4mm}}
    & \multicolumn{4}{c|}{minimum-$\|\mathbf{y}\|$ (efficient) path}
    & \multicolumn{3}{c|}{minimum-$|y|$ path} \\
  \cline{2-8}
  \multicolumn{1}{l|}{\rule{0mm}{2mm}}
    & & \multicolumn{2}{c}{corner} & & \multicolumn{1}{c}{$x_\textbf{E}$} & \multicolumn{1}{c}{corner} & \\
  \multicolumn{1}{l|}{}
    & \multicolumn{1}{c}{ \raisebox{1.2ex}[0pt]{\normalsize$\mathbf{p}_\textbf{E}$}}
    & \multicolumn{2}{c}{portfolios}
    & \multicolumn{1}{c|}{ \raisebox{1.2ex}[0pt]{\ticker{EPU}}}
    & \multicolumn{1}{c}{portf}
    & \multicolumn{1}{c}{portf}
    & \multicolumn{1}{c|}{ \raisebox{1.2ex}[0pt]{\ticker{EPU}}} \\
  \hline\rule[-1.5mm]{0mm}{6mm}%
    \ticker{ECH} & 0.395 & 0.464 &  0.455 & \multicolumn{1}{c|}{0}
      & 0.371 & 0.456 & \multicolumn{1}{c|}{0} \\
    \ticker{EPU} & 0.130 & 0.503 & 0.545 & 1.000
      & 0.156 & 0.544 & 1.000 \\
    \ticker{EWM} & 0.416 & \multicolumn{1}{c}{0} & \multicolumn{1}{c}{0} & \multicolumn{1}{c|}{0}
      & 0.473 & \multicolumn{1}{c}{0} & \multicolumn{1}{c|}{0} \\
    \ticker{EWT} & 0.059 & 0.033 & \multicolumn{1}{c}{0} & \multicolumn{1}{c|}{0}
      & \multicolumn{1}{c}{0} & \multicolumn{1}{c}{0} & \multicolumn{1}{c|}{0} \\
   \hline\rule[-1.5mm]{0mm}{6mm}%
    $x$ & 0.00 & 4.93 & 5.30 & 8.51 & 0.00 & 5.29 & 8.51 \\
    $e$ & 37.47 & 45.94 & 46.57 & 52.07 & 37.47 & 46.55 & 52.07 \\
    $\sigma$ & 12.62 & 13.55 & 13.70 & 16.42 & 12.64 & 13.69 & 16.42 \\
  \hline\rule[-1.5mm]{0mm}{6mm}%
    avg $e$ & \multicolumn{4}{.|}{44.77} & \multicolumn{3}{.|}{44.77} \\
    rms $\sigma$ & \multicolumn{4}{.|}{13.73-} & \multicolumn{3}{.|}{13.73+} \\
  \hline
  \end{tabular}
}
\end{table}


\subsection{Relative risk decomposition}
\label{relative-decomp}

The risk, $\sigma$, of an individual fund or portfolio depends only on
its periodic returns. However, the systemic, productive, and
nonproductive components of this risk depend on the universe of funds in
which the fund or portfolio resides. Table \ref{fund_universe_2010}
illustrates this dependence with a fund and a portfolio from the
universes we have considered. In this table each total risk, $\sigma$,
is the square root of the sum of the squares of its four component
risks.

\begin{table}[H]
  \centering
  \caption{\label{fund_universe_2010}%
    Decomposition of risk relative to fund universe\\
    last 200 days of 2010 -- late-heavy weights
  }
  { \small
    \newcolumntype{.}{D{.}{.}{2}}
\begin{tabular}{|l|.|.||.|.|}
\cline{2-5}
  \multicolumn{1}{l|}{\rule{0mm}{4mm}}
  & \multicolumn{2}{c||}{decomposition} 
  & \multicolumn{2}{c|}{decomposition} \\
  \multicolumn{1}{l|}{}
  & \multicolumn{2}{c||}{of \ticker{EEM}} 
  & \multicolumn{2}{c|}{of $\normalsize\mathbf{p}_\textbf{E}$} \\
\cline{2-5}
  \multicolumn{1}{l|}{\rule{0mm}{4mm}}
  & \multicolumn{1}{c|}{18 fund}
  & \multicolumn{1}{c||}{5 fund}
  & \multicolumn{1}{c|}{18 fund}
  & \multicolumn{1}{c|}{4 fund} \\
  \multicolumn{1}{l|}{}
  & \multicolumn{1}{c|}{universe}
  & \multicolumn{1}{c||}{universe}
  & \multicolumn{1}{c|}{universe}
  & \multicolumn{1}{c|}{universe} \\
\hline\rule{0mm}{4mm}%
  expected return ($e$) & 21.79 & 21.79 & 37.47 & 37.47 \\
\hline\rule{0mm}{4mm}%
  systemic risk ($f_0$) &  7.93 &  5.07 & 7.93 & 12.62 \\
  productive risk ($|x|$) & 8.30 & 9.81 & 4.01 & 0.00 \\
  major nonproductive risk ($|y|$) & 17.37 & 16.81 & 7.83 & 0.00\\
  other nonproductive risk & 2.96 &  6.13 & 4.37 & 0.00 \\
\hline\rule{0mm}{4mm}%
  total risk ($\sigma$) & 21.03 & 21.03 & 12.62 & 12.62 \\
\hline
\end{tabular}
  }
\end{table}


\subsection{2011 results}
\label{2011results}

So far we have restricted our examples to the last 200 market days of
2010 with adjusted closing prices and returns normalized at the closing
prices of that year. Late-heavy weights have been emphasized with the
idea that a strong performance in the latter part of 2010 should carry
over into 2011.

This did not turn out to be the case. Figures \ref{growth5_2011} and
\ref{growth4_2011} show how the five large ETFs of Section
\ref{5largeETFs} and the four emerging market ETFs of Section
\ref{4emfunds} performed over 2011. These are graphs of
daily adjusted closing prices. Again the prices have been normalized at 100
on 2010-12-31 so that notional portfolio proportions correspond to
investment portfolio proportions at 2010 closing prices.

\begin{figure}[H]
  \centering
  \caption{\label{growth5_2011}%
    2011 adjusted closing prices of five large ETFs\\
    prices normalized at 100 on 2010-12-31}
  \includegraphics{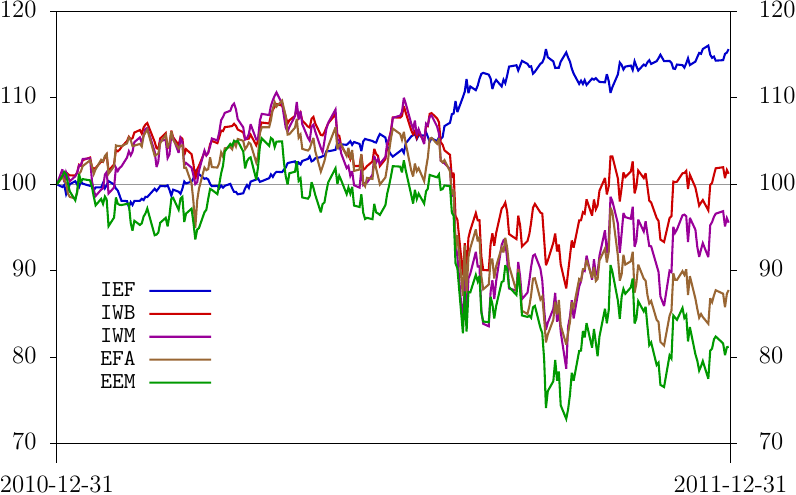}
\end{figure}

\begin{figure}[H]
  \centering
  \caption{\label{growth4_2011}%
    2011 adjusted closing prices of four emerging market ETFs\\
    prices normalized at 100 on 2010-12-31}
  \includegraphics{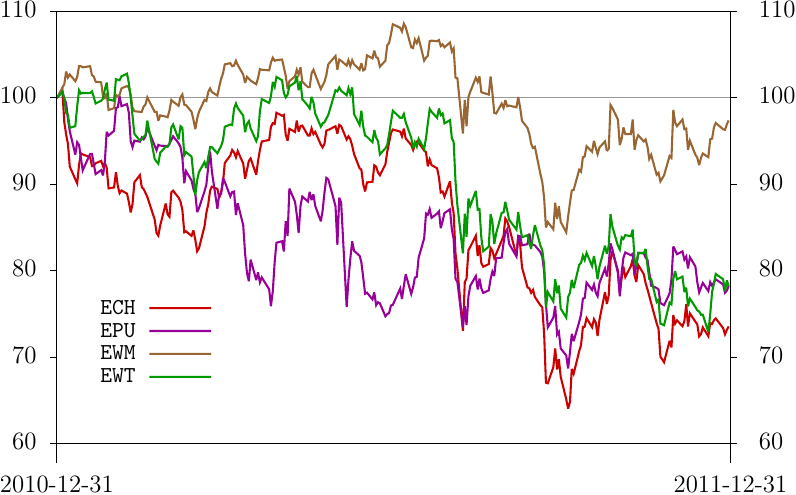}
\end{figure}

Tables \ref{decomp5_2011U252} and \ref{decomp4_2011U252} show the output
of \textbf{rtndecomp} as applied to this data. Here we have used the
full 252 markets-days of returns with uniform weighting. Now the
expected returns are the total returns of the respective
securities over the whole of 2011.

\begin{table}[H]
  \centering
  \caption{\label{decomp5_2011U252}%
    Decomposition of return data -- 5 large ETF universe\\
    the 252 market-days of 2011 -- uniform weights
  }
{ \small
  \newcolumntype{.}{D{.}{.}{2}}
  \newcolumntype{W}{|>{\columncolor{white}}c|}
$
\begin{array}{|c|.....||r.|}
\cline{1-6}\rule{0mm}{4mm}
  \text{fund}
  & \multicolumn{1}{r}{\parbox{6.5ex}{\hfill\ticker{IEF}\hspace*{0.2ex}}}
  & \multicolumn{1}{r}{\parbox{7.0ex}{\hfill\ticker{IWB}\hspace*{0.2ex}}}
  & \multicolumn{1}{r}{\parbox{7.0ex}{\hfill\ticker{IWM}\hspace*{0.2ex}}}
  & \multicolumn{1}{r}{\parbox{7.0ex}{\hfill\ticker{EFA}\hspace*{0.2ex}}}
  & \multicolumn{1}{r|}{\parbox{7.0ex}{\hfill\ticker{EEM}\hspace*{0.8ex}}} \\
\hline\rule[-1.5mm]{0mm}{6mm}
  E &   15.65 &  1.24 &  -4.43 &  -12.23 & -18.79
  & \multicolumn{2}{c|}{\widehat{V}_\text{T}} \\
\hline\rowcolor{productive}
  \multicolumn{1}{W}{\rule{0mm}{4mm}}
  & 1.59 & -6.29 & -9.39 & -13.66 & -17.26 & 615 & 20.8\% \\
\cline{2-8}
  \rowcolor{nonproductive}\multicolumn{1}{W}{\rule{0mm}{4mm}}
  & -6.91 & 21.05 & 27.09 & 22.92 & 21.30 & 2203 & 74.6\% \\
  \rowcolor{nonproductive}\multicolumn{1}{W}{\raisebox{1.5ex}[0pt]{$F$}}
  & -1.80 & 0.21 & -5.70 & 7.11 & -1.19 & 88 & 3.0\% \\
  \rowcolor{nonproductive}\multicolumn{1}{W}{}
  & -1.84 & 1.27 & -3.35 & -2.36 & 4.96 & 46 & 1.6\% \\
\hhline{|=|=====#==|}\rule{0mm}{4mm}
  & \multicolumn{1}{r@{\hspace*{1.1ex}}}{57}
  & \multicolumn{1}{r@{\hspace*{1.1ex}}}{484}
  & \multicolumn{1}{r@{\hspace*{1.1ex}}}{866}
  & \multicolumn{1}{r@{\hspace*{1.1ex}}}{768}
  & \multicolumn{1}{r@{\hspace*{1.1ex}}||}{777}
  & \multicolumn{1}{r}{2952}
  & \multicolumn{1}{r|}{100\%\hspace*{-0.8ex}} \\
\cline{8-8} \raisebox{1.6ex}[0pt]{$\widehat{V}_\text{T}$}
  & 1.9\% & 16.4\% & 29.3\% & 26.0\% & 26.3\%
  & \multicolumn{1}{r|}{100\%\hspace*{-0.5ex}}\\
\cline{1-7}
  \multicolumn{7}{c}{\rule{0mm}{5mm}
    \text{with}~f_0 = 4.62,~ e_0 = 12.74,~\text{and}~e_F = 1.827
  }
\end{array}
$
}
\end{table}

\begin{table}[H]
  \centering
  \caption{\label{decomp4_2011U252}%
    Decomposition of return data -- 4 emerging market ETF universe\\
    the 252 market-days of 2011 -- uniform weights
  }
{ \small
  \newcolumntype{.}{D{.}{.}{2}}
  \newcolumntype{W}{|>{\columncolor{white}}c|}
$
\begin{array}{|c|....||r.|}
\cline{1-5}\rule{0mm}{4mm}
  \text{fund}
  & \multicolumn{1}{r}{\parbox{6.5ex}{\hfill\ticker{ECH}\hspace*{0.2ex}}}
  & \multicolumn{1}{r}{\parbox{7.0ex}{\hfill\ticker{EPU}\hspace*{0.2ex}}}
  & \multicolumn{1}{r}{\parbox{7.0ex}{\hfill\ticker{EWM}\hspace*{0.2ex}}}
  & \multicolumn{1}{r|}{\parbox{7.0ex}{\hfill\ticker{EWT}\hspace*{0.2ex}}} \\
\hline\rule[-1.5mm]{0mm}{6mm}
  E &   -26.45 &  -21.78 &  -2.60 &  -21.78
  & \multicolumn{2}{c|}{\widehat{V}_\text{T}} \\
\hline\rowcolor{productive}
  \multicolumn{1}{W}{\rule{0mm}{4mm}}
  & -10.85 & -8.00 & 3.70 & -8.01 & 260 & 28.8\% \\
\cline{2-7}
  \rowcolor{nonproductive}\multicolumn{1}{W}{\rule{0mm}{4mm}$F$}
  & 4.02 & -19.66 & 3.58 & 7.08 & 466 & 51.7\% \\
  \rowcolor{nonproductive}\multicolumn{1}{W}{}
  & 9.09 & -1.70 & -0.84 & -9.46 & 176 & 19.5\% \\
\hhline{|=|====#==|}\rule{0mm}{4mm}
  & \multicolumn{1}{r@{\hspace*{1.1ex}}}{217}
  & \multicolumn{1}{r@{\hspace*{1.1ex}}}{453}
  & \multicolumn{1}{r@{\hspace*{1.1ex}}}{27}
  & \multicolumn{1}{r@{\hspace*{1.1ex}}||}{204}
  & \multicolumn{1}{r}{901}
  & \multicolumn{1}{r|}{100\%\hspace*{-0.8ex}} \\
\cline{7-7} \raisebox{1.6ex}[0pt]{$\widehat{V}_\text{T}$}
  & 24.0\% & 50.3\% & 3.0\% & 22.6\%
  & \multicolumn{1}{r|}{100\%\hspace*{-0.5ex}}\\
\cline{1-6}
  \multicolumn{6}{c}{\rule{0mm}{5mm}
    \text{with}~f_0 = 19.88,~ e_0 = -8.66,~\text{and}~e_F = 1.639
  }
\end{array}
$
}
\end{table}

It is quite easy to show that the expected expected returns in Tables
\ref{decomp5_2011U252} and \ref{decomp4_2011U252} are, in fact, total
returns for the year. In the general the expected return of a security
is given by
\begin{equation}\label{totalrtn1}
  e = \rho\,\sum_{i=1}^M \omega_i\,r_i
\end{equation}
with the $r_i~ (i=1,\ldots,M)$ being the successive periodic returns.
When $\rho = M$, $\omega_i=1/M$, and $r_i = (a_i - a_{i-1})/a_0$, with
the $a_i~ (i=0,1,\ldots,M)$ being successive adjusted closing prices for
the security, equation \eqref{totalrtn1} simplifies to
\begin{equation}\label{totalrtn2}
  e = \frac{a_M}{a_0}-1,
\end{equation}
which is, essentially by the definition of adjusted closing prices
(\cite{Norton:2010uq}), the total return of the security over the $M$
periods.


\section{Summary}\label{summary}

We have described an orthogonal decomposition of the space of ex post
periodic returns for a given universe of securities. The risk space,
which is orthogonal to the expected return axis, is decomposed into
systemic, productive, and principal nonproductive dimensions. Our
\textbf{rtndecomp} function accomplishes this decomposition. A technical
discussion and listing of this algorithm is given in the appendix.

In the examples of Section \ref{output-examples} we have emphasized the
two-dimensional $XY$-projection of periodic return data. The
minimum-$|y|$ path through the portfolio simplex is easily obtained from
the $XY$-projection of the data. The minimum-$|y|$ path can very closely
approximate the minimum-variance path for a small universe of
securities.

In the future we hope to develop a minimum-variance algorithm of the
form
\begin{align*}
   \textbf{function:}\quad
   P &= ~\textbf{minvar}(E, F)\\
   \text{with}\quad
   [E, F] &= ~\textbf{rtndecomp}(R, \bm{\omega}, \rho).
\end{align*}
Here $P=[\mathbf{p}_1,\ldots,\mathbf{p}_{n_P}]$ would contain successive
corner portfolios of the minimum-variance path through $\bm{\Delta}$ and
include single security portfolios at either end.

Given $[E, F] = \textbf{rtndecomp}(R, \bm{\omega}, \rho)$ and such a
\textbf{minvar} function, the minimum-$|y|$ path would be given by
$P_2=\textbf{minvar}(E, F(1 : 2, :))$. More generally, the
piecewise-linear paths through $\bm{\Delta}$ determined by the
portfolios in $P_k=\textbf{minvar}(E, F(1 : k, :))~ (k=2,\ldots,m)$
would approximate the minimum-variance path $P = P_m$ with successively
better approximations; moreover each portfolio in $P_k$ would contain at most
$k$ securities.


\appendix
\numberwithin{table}{section}
\numberwithin{equation}{section}
\numberwithin{figure}{section}


\newpage
\section{The flow of the rtndecomp algorithm}
\label{alg-descript}

Section \ref{octave-listing} gives the complete
\href{http://www.gnu.org/software/octave/}{GNU Octave} listing of the
\textbf{rtndecomp} algorithm. In this section we describe the ideas
behind specific sections of the listing. Certain simplifying assumptions
have been made in the interest of clarity. For example we only consider $\rho
= 1$ periods per unit time. Scaling for different $\rho$, e.g. $\rho =
252$ market-days per year, occurs at the end of the algorithm, as
described in Section \ref{scaling}.

This section is arranged in blocks. Each block summarizes a specific
section of the \textbf{rtndecomp} code. \\

\textbf{Initial setup}  (rtndecomp: 105 -- 111)\\
\begin{tabular}{@{~~}p{2.5in}@{\#~~}p{3.0in}}
  $ E = \bm{\omega}^T R;$ & $ \bm{\omega} > \mathbf{0}_M,~
    \bm{\omega}^T \mathbf{1}_M = \sum \bm{\omega} = 1. $ \\
  $ \bm{\beta} = \sqrt{\bm{\omega}};$ & $\|\bm{\beta}\|^2
    = \bm{\beta}^T\bm{\beta} = 1. $ \\
  $ Z = \diag(\bm{\beta}) R - \bm{\beta} E; $ & $ \bm{\beta}^T Z = 0,~
    V = Z^T Z$ (covariance matrix).
\end{tabular}

The linear isometry
\[ \mathbf{r}\mapsto\mathbf{x}=\diag(\bm{\beta})\mathbf{r},~
\bm{\beta}=\sqrt{\bm{\omega}} \]
converts the $\omega$-metric into the standard,
sum-of-squares-metric for $\mathds{R}^M:
\langle\mathbf{r},\mathbf{r}\rangle_\omega
= \mathbf{x}^T\mathbf{x}$.
The risk vectors $\mathbf{z}_j$ in the $M\times n$ matrix
$Z = [\mathbf{z}_1,\ldots,\mathbf{z}_n]$ are the isometric images
of the risk vectors $\mathbf{z}_j$ of \eqref{risk-vectors};
now the covariance matrix $V=[v_{jk}]$ of \eqref{cov1} is
given by $V=[\mathbf{z}_j^T\mathbf{z}_k]=Z^T Z$.\\

\textbf{QR factorization of $Z$}  (rtndecomp: 113 -- 121)\\
\begin{tabular}{@{~~}p{2.5in}@{\#~~}p{3.0in}}
  $ Z = Q F; $ & compact QR factorization.
\end{tabular}\\
\hspace*{3.0ex}(now
  $F \sim [\mathbf{z}_1,\mathbf{z}_2,\dots,\mathbf{z}_n]$)

The Octave code for the QR factorization actually reads\\
\hspace*{5.0ex}{\ttfamily[Q, F, J] = \textbf{qr}(Z, 0);}\\ The output
consists of an $M\times n$ matrix $Q$ with orthonormal columns,
an upper-triangular, $n\times n$ matrix $F$, and a permutation,
$J$, of the index sequence $[1,\ldots,n]$. These matrices satisfy
$Z(:, J) = Q F.$

This is QR factorization with column pivoting. At the end of the actual
\textbf{rtndecomp} algorithm, $F$-columns are returned to the initial
order with the replacement $F := \mbox{$F(:, J^{-1})$}$.
In this description we assume that $M \ge n$ and that $Z$ is
not rank deficient. Thus we can skip column pivoting and
start with the QR factorization $Z=QF$.

The columns of $Q$ make up an orthonormal basis for the range of $Z$ in
$\mathds{R}^M$. The columns $\mathbf{f}_1,\ldots,\mathbf{f}_n$ of the
upper-triangular $F$ are the coordinate vectors of the
$\mathbf{z}_1,\ldots,\mathbf{z}_n$ in $Z$ with respect to this basis. We
signify this situation with the notation $F \sim
[\mathbf{z}_1,\ldots,\mathbf{z}_n]$. It simply says that ``$F$ is the
matrix of coordinate vectors for $[\mathbf{z}_1,\ldots,\mathbf{z}_n]$
with respect to some orthonormal basis for the range of
$[\mathbf{z}_1,\ldots,\mathbf{z}_n]$.'' Then, regardless of the
orthonormal basis $Q$,~
$\mathbf{z}_j^T \mathbf{z}_k
= \mathbf{f}_j^T Q^TQ\,\mathbf{f}_k
= \mathbf{f}_j^T \mathbf{f}_k$
for $j,k=1,\ldots,n$, since $Q^TQ = I_n$.\\

\textbf{$Z$-flat tangent space}  (rtndecomp: 158 -- 163)\\
\begin{tabular}{@{~~}l}
  $ F(1, j) := F(1, j) - F(1, 1);\quad(j=2,\ldots,n)$
\end{tabular}\\
\hspace*{3.0ex}(now
  $F \sim [\mathbf{z}_1,\mathbf{z}_2-\mathbf{z}_1,\dots,
    \mathbf{z}_n-\mathbf{z}_1]$)

The matrix $F$ is still upper triangular, but now the
$\mathbf{z}$-vectors corresponding to the second through the last
columns of $F$ span the $Z$-flat tangent space, $\mathcal{T}(Z)$. \\

\textbf{Hessenberg QR via Givens}  (rtndecomp: 165 -- 175)
\begin{multline}\label{givens}
  \small
  \hspace*{2.0ex} {\normalsize F} = 
  \left[
  \begin{array}{c|cccc}
    \times & \times & \times & \times & \times \\
    0 & \times & \times & \times & \times \\
    0 & 0 & \times & \times & \times \\
    0 & 0 & 0 & \times & \times \\
    0 & 0 & 0 & 0 & \times \\
  \end{array}
  \right]
  \to
  \left[
  \begin{array}{c|cccc}
    \times & \times & \times & \times & \times \\
    \times & 0 & \times & \times & \times \\
    0 & 0 & \times & \times & \times \\
    0 & 0 & 0 & \times & \times \\
    0 & 0 & 0 & 0 & \times \\
  \end{array}
  \right]
  \to
  \left[
  \begin{array}{c|cccc}
    \times & \times & \times & \times & \times \\
    \times & 0 & \times & \times & \times \\
    \times & 0 & 0 & \times & \times \\
    0 & 0 & 0 & \times & \times \\
    0 & 0 & 0 & 0 & \times \\
  \end{array}
  \right]
  \\[1.0ex]
  \to
  \left[
  \begin{array}{c|cccc}
    \times & \times & \times & \times & \times \\
    \times & 0 & \times & \times & \times \\
    \times & 0 & 0 & \times & \times \\
    \times & 0 & 0 & 0 & \times \\
    0 & 0 & 0 & 0 & \times \\
  \end{array}
  \right]
  \to
  \left[
  \begin{array}{c|cccc}
    \times & \times & \times & \times & \times \\
    \times & 0 & \times & \times & \times \\
    \times & 0 & 0 & \times & \times \\
    \times & 0 & 0 & 0 & \times \\
    \pm f_0 & 0 & 0 & 0 & 0 \\
  \end{array}
  \right]
  = {\normalsize F}\qquad
\end{multline}
\\[1.0ex]
\hspace*{3.0ex}(still
  $F \sim [\mathbf{z}_1,\mathbf{z}_2-\mathbf{z}_1,\dots,
    \mathbf{z}_n-\mathbf{z}_1]$)

Here we apply a sequence of $n-1$ Givens rotations,
$G_1,\ldots,G_{n-1}$, to zero the subdiagonal elements of 
the upper Hessenburg submatrix $F(:,2 : n)$.
Then
\[
  F := G_{n-1}^T\cdots G_1^T F\quad\text{and}\quad
  Q := Q\, G_1\cdots G_{n-1}
\]
with $QF = [\mathbf{z}_1,\mathbf{z}_2-\mathbf{z}_1,\dots,
\mathbf{z}_n-\mathbf{z}_1]$ at either end of the sequence. This process
is described in Section 5.2.4 of \cite{Golub:1989cs}.

Looking at the last row of the final $F$ in \eqref{givens} we see that
$\mathbf{q}_n$ is orthogonal to the $Z$-flat tangent space:
$\mathbf{q}_n^T(\mathbf{z}_j - \mathbf{z}_1) = 0$ for $j=2,\ldots,n$.
It follows that $\mathbf{z}_0 = \mathbf{q}_n f_{n1}$ is the point on
the $Z$-flat that is closest to origin, with
$f_0=|f_{n1}|=\|\mathbf{z}_0\|$ being the systemic risk of the system.\\

\textbf{Extract systemic risk $f_0$}  (rtndecomp: 177 -- 189)\\
\begin{tabular}{@{~~}l}
  $f_0 = |f_{n1}|;$\\
  discard the last row of $F$:
\end{tabular}\\
$ \small
  \hspace*{2.0ex} {\normalsize F} =
  \left[
  \begin{array}{c|cccc}
    \times & \times & \times & \times & \times \\
    \times & 0 & \times & \times & \times \\
    \times & 0 & 0 & \times & \times \\
    \times & 0 & 0 & 0 & \times \\
    \pm f_0 & 0 & 0 & 0 & 0 \\
  \end{array}
  \right]
  \to
  \left[
  \begin{array}{c|cccc}
    \times & \times & \times & \times & \times \\
    \times & 0 & \times & \times & \times \\
    \times & 0 & 0 & \times & \times \\
    \times & 0 & 0 & 0 & \times \\
  \end{array}
  \right]
  = {\normalsize F}
$\\[0.5ex]
\hspace*{2.0ex}$m = n-1;\quad(m = \text{the number of rows of~} F)$
\\[1.0ex]
\hspace*{3.0ex}(now
  $F \sim [\mathbf{z}_1-\mathbf{z}_0,\mathbf{z}_2-\mathbf{z}_1,\dots,
    \mathbf{z}_n-\mathbf{z}_1]$)
    
Here the last column of the previous $Q$ is discarded so that the
remaining $m = n-1$ columns form an orthonormal basis for the $Z$-flat
tangent space, $\mathcal{T}(Z)$.\\

\textbf{$Z$-flat gradient of expected return}  (rtndecomp: 204 -- 223)\\
\begin{tabular}{@{~~}p{2.5in}ll}
  $\mathbf{g}^T F(:, 2 : n) = E(2 : n) - e_1;$
    & \# \textit{eflag} = \textbf{false} if solution is exact. \\
  $e_0 = e_1 - \mathbf{g}^T F(:, 1);~ e_F = \|\mathbf{g}\|;$
    & \# when \textit{eflag} = \textbf{false}.
\end{tabular}

If expected return is an affine function of (vector) risk, then
\[ [e_2-e_1,\ldots,e_n-e_1] = \mathbf{g}^T
   [\mathbf{z}_2-\mathbf{z}_1,\ldots\mathbf{z}_n-\mathbf{z}_1] \]
can be solved exactly
for $\mathbf{g}\in\mathcal{T}(Z)$. This will be the case (and
\textit{eflag} will be \textbf{false}) unless
$\mathbf{1}_M$ is parallel to the $R$-flat (Proposition
\ref{isomorph}).

In this discussion we will assume that $\mathbf{1}_M$ is not
parallel to the $R$-flat. Then
$\mathbf{u}_1 = \mathbf{g}/\|\mathbf{g}\|\in\mathcal{T}(Z)$
is direction of steepest increase in expected return, and
\begin{equation}\label{efuncf}
  e = e_0 + e_F\,\mathbf{u_1}^T\mathbf{z}
\end{equation}
holds for all $\mathbf{z}$ in the $Z$-flat, where $e_0 = e_1 -
\mathbf{g}^T\mathbf{z}_1$ and $e_F =
\|\mathbf{g}\|$.

In this block we solve for the coordinates of
$\mathbf{g}$ with respect to the current orthonormal basis, $Q =
[\mathbf{q}_1,\ldots,\mathbf{q}_m]$, for the tangent space
$\mathcal{T}(Z)$. Then $e_0$ and $e_F$ are computed from the coordinate
representation. In the following block, $\mathbf{g}\in\mathbf{R}^m$
denotes the coordinate vector corresponding to
$\mathbf{g}\in\mathcal{T}(Z)$.\\

\begin{minipage}{3.7in}
\textbf{Householder reflection}
  (rtndecomp: 225 -- 235) \\
\begin{tabular}{@{~~}p{3.5in}l}
  $H = I_m - \beta\mathbf{v}\mathbf{v}^T;~~
  \beta = 2/\|\mathbf{v}\|^2;~~
  H \mathbf{g}=\bm{\delta_1}\|\mathbf{g}\|;$ \\
  $F := H F;$ \\
  \hspace*{1.0ex}(still
  $F \sim [\mathbf{z}_1-\mathbf{z}_0,\mathbf{z}_2-\mathbf{z}_1,\dots,
    \mathbf{z}_n-\mathbf{z}_1]$) \\
  $F(:, j) := F(:, j) + F(:, 1);~~(j = 2, \ldots, n)$ \\
  \hspace*{1.0ex}(now
  $F \sim [\mathbf{z}_1-\mathbf{z}_0,\mathbf{z}_2-\mathbf{z}_0,\dots,
    \mathbf{z}_n-\mathbf{z}_0]$) \\
\end{tabular}
\end{minipage}
\begin{minipage}{1.5in}
\begin{tikzpicture}[scale=1.00,>={angle 60}]
\draw[thick,->] (0,0) -- (1,0) node[left=14,below=0]{$\bm{\delta}_1$};
\draw[thick,red,->] (0,0) -- (-0.850,1.472) node[above]{$\mathbf{g}$};
\draw[dashed] (-0.250, -0.433) -- (0.750,1.299);
\draw[thick,->] (0,0) -- (-0.500,0.866) node[left=5,below=7]{$H \bm{\delta}_1$};
\draw[<->] (-0.433+0.500,0.250+0.866) -- (0.433+0.500,-0.250+0.866);
\node at (0.8,2) {\parbox{2in}{\centering Householder\\reflection $H$}};
\end{tikzpicture}
\end{minipage}

The first basis vector, $\mathbf{q_1}$, of the current basis $Q$ has
coordinate vector $\bm{\delta}_1$, the first column of the $m\times m$
identity matrix $I_m$. If $H$ is the Househoulder reflection of
$\mathds{R}^m$ that maps $\mathbf{g}$ to $\bm{\delta}_1\|\mathbf{g}\|$,
then the first column of $QH$ is the direction of maximum increase in
expected return in $\mathcal{T}(Z)$. Thus, after the replacements $Q :=
QH$ and $F := HF$, we still have
$F \sim [\mathbf{z}_1-\mathbf{z}_0,\mathbf{z}_2-\mathbf{z}_1,\dots,
    \mathbf{z}_n-\mathbf{z}_1]$,
but now $e_F$ times the first coordinates of $F$ produces the corresponding
changes in expected return:
\[ e_F F(1, :) = [e_1-e_0,e_2-e_1,\ldots,e_n-e_1].\]
Then the replacements~
$F(:, j) := F(:, j) + F(:, 1)~(j = 2, \ldots, n)$~ result in
\begin{align*}
  F &\sim [\mathbf{z}_1-\mathbf{z}_0,\mathbf{z}_2-\mathbf{z}_0,\dots,
    \mathbf{z}_n-\mathbf{z}_0] \\
\intertext{and}
  E &= [e_1,\ldots,e_n] = e_0 + e_F F(1, :)
\end{align*}
with~ $V = n f_0^2 + F^T F.$\\

\textbf{Principal components of nonproductive risk}
(rtndecom: 237 -- 243) \\
\begin{tabular}{@{~~}p{2.5in}p{3.2in}}
  $ F(2 : m, :) = U \Sigma V^T; $ & \#~ compact singular value decomposition.\\
  $ \Sigma = \diag(\tau_2,\ldots,\tau_m); $ & \#~ the $\tau_i$ of Definition \ref{principal-npr},
  Section \ref{nonproductive-risk}.\\
  $ F(2 : m, :) := \Sigma V^T; $
\end{tabular}\\
\hspace*{3.0ex}(final
  $F \sim [\mathbf{z}_1-\mathbf{z}_0,\mathbf{z}_2-\mathbf{z}_0,\dots,
    \mathbf{z}_n-\mathbf{z}_0]$)

Entering this block the first row of $F$ consists of the coordinates of
$[\mathbf{z}_1-\mathbf{z}_0,\mathbf{z}_2-\mathbf{z}_0,\dots,
    \mathbf{z}_n-\mathbf{z}_0]$
in the productive risk direction. The remaining $m-1$ rows represent
the nonproductive risks.

The principal components of nonproductive risk are computed with the
code\\
\hspace*{5.0ex}{\ttfamily[U, S, V] = \textbf{svd}(F(2 : m, :), 0);}\hfill\\
The output matrices $U$ and $V$ (not to be confused with the covariance
matrix $V$) have orthonormal columns and dimensions $(m-1)\times(m-1)$
and $n\times(m-1)$, respectively. The principal nonproductive risks,
$\tau_2\ge\ldots\ge\tau_n>0$, of Definition \ref{principal-npr}
are contained in the diagonal matrix
$\Sigma$ ( = \texttt{S} in the \textbf{rtndecomp} code). Now the
replacements\\
\hspace*{5.0ex}$F(2:m, :):=\Sigma\,V^T,\quad Q(:, 2:m):=Q(:, 2:m)\,U$\\
maintain~
  $F \sim [\mathbf{z}_1-\mathbf{z}_0,\mathbf{z}_2-\mathbf{z}_0,\dots,
    \mathbf{z}_n-\mathbf{z}_0]$, but
organize the last $m-1$ rows of $F$ in the principal directions of
nonproductive risk, with row norms $\|F(i, :)\|=\tau_i$ for
$i=2,\dots,m$.\\

In this discussion we have described how the basis matrix $Q$ changes
from one block to another. It is always the case that $Q^TQ=I$, though
the size of the identity matrix diminishes from $n\times n$ as the
algorithm progresses. The actual \textbf{rtndecomp} algorithm
makes no attempt to keep track of the changing $Q$.


\newpage
\section{The Octave algorithm}
\label{octave-listing}

{\fontsize{11}{14}\selectfont

\begin{lstlisting}
#   Function: [E, F, f0, e0, eF] = rtndecomp(rtns, wgts, pput)
#---------------------------------------------------------------
#   Purpose
#     To decompose financial return data into orthogonal
#     "systemic", "productive", and "nonproductive"
#     risk-factors.
#   Input
#     rtns   - M by n matrix of periodic returns.
#     wgts   - M by 1 vector of positive weights or a scalar,
#              default: wgts = ones(M, 1) / M if wgts is a
#              scalar or rtns is the only input.
#     pput   - periods per unit time. default: pput = 1.
#              example: pput = 252 market-days/year.
#   Output
#     E      - 1 by n matrix of expected returns.
#     F      - m by n matrix of risk coefficients.
#              rank(F) = m unless F = zeros(1, n).
#     f0     - systemic risk (nonnegative).
#     e0     - systemic expected return.
#     eF     - return per unit change in the F(1, :)-direction.
#              (eF >= 0)
#   Global output 
#     eflag  - true iff the constant return vector, ones(M, 1),
#              parallels the returns flat or, said another way,
#                 ones(M, 1) = rtns * x
#              for some n-vector x with sum(x) = 0.
#   Variables and relationships
#     The 1 x n expected return matrix is
#       E = (wgts * pput)' * rtns.
#     The n x n covariance of return matrix is
#       V = Z' * diag(wgts * pput) * Z,
#     where Z is the M x n matrix of risk vectors
#       Z = rtns - ones(M, 1) * E / pput.
#     The output variables satisfy
#       1)  V = f0^2 + F' * F.
#       2)  E = e0 + eF * F(1, :) unless eflag is true;
#           then this relationship is approximate. However
#           mean(E) = mean(e0 + eF * F(1, :)) is always true.
#           We refer to norm(F(1, :)) as the productive risk
#           when eF is nonzero.
#       3)  The row norms,
#             tau(i) = norm(F(i, :)) (i = i0, .., m),
#           are the principal nonproductive risks, where
#             i0 = 2 if eF > 0, and i0 = 1 if eF = 0.
#           The nonproductive risks tau(i) decrease with
#           increasing i, and the corresponding row vectors
#           are pairwise orthogonal:
#             F(i, :)' * F(j, :) = 0 (i0 <= i < j <= m).
#
#---------------------------------------------------------------
#  Copyright (C) 2012  Vic Norton <mailto:vic@norton.name>
#
#  This GNU Octave program
#           http://www.gnu.org/software/octave/
#  is free software: you can redistribute it and/or modify it
#  under the terms of the GNU General Public License
#           http://www.gnu.org/copyleft/gpl.html
#  as published by the Free Software Foundation
#           http://www.fsf.org/
#  
#  This program is distributed in the hope that it will be
#  useful, but WITHOUT ANY WARRANTY; without even the implied
#  warranty of MERCHANTABILITY or FITNESS FOR A PARTICULAR
#  PURPOSE.  See the GNU General Public License for more
#  details.
#---------------------------------------------------------------

function [E, F, f0, e0, eF] = rtndecomp(rtns, wgts, pput)

  if !(nargin >= 1 || nargin <= 3)
    usage("[E, F, f0, e0, eF] = rtndecomp(rtns, wgts, pput)");
  endif
  
  [M, n] = size(rtns);
  global eflag;
  eps0 = eps * 1e2;           # precision
  
  ## check/set wgts & pput
  if nargin == 1 || isscalar(wgts)
    wgts = ones(M, 1) / M;
  else
    if !isvector(wgts)
      error("wgts must be a scalar or a vector");
    endif
    if length(wgts) != M
      error("wgts must have length M");
    endif
    if any(wgts <= 0)
      error("wgts must be positive");
    endif
    if rows(wgts) == 1
      wgts = wgts';
    endif
    wgts /= sum(wgts);
  endif
  if nargin == 3
    if pput < 1
      error("g must be 1 or greater");
    endif
  else
    pput = 1;
  endif
  sqrtpput = sqrt(pput);
  
  ## expected return matrix E and "risk" vectors Z for pput = 1
  E = wgts' * rtns;
  if nargout <= 1;
    E *= pput;
    return;
  endif
  sqrtwgts = sqrt(wgts);
  Z = diag(sqrtwgts) * rtns;
  Z -= sqrtwgts * E;          # V = Z' * Z = covariance matrix
  sigZ = norm(Z, "fro");      # square root of total variance
  
  ## QR factorization of Z with column pivoting
  [Q, F, J] = qr(Z, 0);       # Q is not used
  m = rows(F);
  epsZ = eps0 * sigZ;
  while m > 1 && norm(F(m, m : n)) <= epsZ
    F(m, :) = [];
    m -= 1;
  endwhile
  Jinv = [1 : n] * eye(n)(:, J)';
  
  ## finish up when m = 1 and all coefficients of F are the same
  if n == 1
    f0 = abs(F);
    if pput > 1
      E *= pput; f0 *= sqrtpput;
    endif
    e0 = E; eF = 0; F = 0; eflag = 0;
    return;
  elseif m == 1
    Fmin = min(F); Fmax = max(F);
    if Fmax - Fmin < eps0 * max(abs(Fmin), abs(Fmax))
      e0 = mean(E);
      f0 = abs((Fmin + Fmax)/2);
      eF = 0;
      F = zeros(1, n);
      ee = sum(E .* E);
      eflag = ee - n * e0 * e0 > eps0 * ee;
      if pput > 1
        E *= pput; e0 *= pput; f0 *= sqrtpput;
      endif
      return;
    endif
  endif
  
  ## At this point the columns of F are the coordinate vectors
  #  of the columns of Y = [y1, y2, ..., yn] = Z(:, J) with
  #  respect an orthonormal basis for the linear space, L(Z),
  #  spanned by the columns of Y or Z. We represent this
  #  situation with the notation
  #    F ~ [y1, y2, ..., yn].
  #  As the algorithm progesses the orthonomal basis for L(Z)
  #  changes. We do not keep track of the changing basis, but
  #  we do continue to note the elements of L(Z) represented by
  #  the coordinate-vector columns of F.
  
  ## E-flat & Z-flat tangent spaces
  e1 = E(J(1));               # base expected return
  y1 = Z(:, J(1));            # base risk vector
  nm1 = n - 1;
  B = E(J(2 : n)) - e1;       # differential expected returns
  F(1, 2 : n) -= F(1, 1);     # differential risks
  
  ## QR decomposition of Z-flat tangent space by Givens
  #  rotations
  for j = 2 : m
    jm1 = j - 1;
    [cs, sn] = givens(F(jm1, j), F(j, j));
    GV = [cs, sn; -sn, cs];   # Givens rotation
    F(jm1 : j, 1) = GV * F(jm1 : j, 1);
    F(jm1 : j, j : n) = GV * F(jm1 : j, j : n);
    F(j , j) = 0;
  endfor
  #    F ~ [y1, y2 - y1, ..., yn - y1]
  
  ## systemic risk f0
  if norm(F(m, m + 1 : n)) <= epsZ
    ## the Z-flat tangent space has dimension m - 1
    f0 = abs(F(m, 1));
    F(m, :) = [];
    m -= 1;
  else
    ## the Z-flat tangent space has dimension m
    f0 = 0;
  endif
  #    F ~ [y1 - y0, y2 - y1, ..., yn - y1]
  # where y0 = z0 is the point in the Z-flat that is closest
  # to the origin. 
  
  ## check for constant E
  Emin = min(E); Emax = max(E);
  if Emax - Emin < eps0 * max(abs(Emin), abs(Emax))
    ## E is constant: finish up
    F(:, 2 : n) += F(:, 1) * ones(1, n - 1);
    #    F ~ [y1 - y0, y2 - y0, ..., yn - y0]
    [U, S, V] = svd(F(:, Jinv), 0);
    F = S * V';
    E *= pput; e0 = mean(E);
    eF = 0; f0 *= sqrtpput; F *= sqrtpput;
    return;
  endif
  
  ## Z-flat direction of maximum increase
  #  in expected return
  g = (B / F(:, 2 : n))';     # E-gradient if eflag
  eflag = norm(g' * F(:, 2 : n) - B)/norm(B) > eps0;
  if !eflag  # E = e0 + eF * F(1, :) will be exact
    e0 = e1 - g' * F(:, 1);
    eF = norm(g);
    #    F ~ [y1 - y0, y2 - y1, ..., yn - y1]
  else       # E = e0 + eF * F(1, :) will be approximate
    F(:, 2 : n) += F(:, 1) * ones(1, n - 1);
    #    F ~ [y1 - y0, y2 - y0, ..., yn - y0]
    v = mean(F')';            # v = ymean - y0
    G = F - v * ones(1, n);
    #    G ~ [y1 - ymean, y2 - ymean, ..., yn - ymean]
    emean = mean(E);
    C = E(J) - emean;
    g = (C / G)';             # E-gradient if !eflag
    e0 = emean - g' * v;
    eF = norm(g);
  endif
  
  ## the Householder reflection, H g = (x, 0, ..., 0) 
  #  puts the productive risk coordinates in F(1, :)
  [v, b] = housh(g, 1, 0);
  signx = sign((g - (b * v) * (v' * g))(1));
  F = F - (b * v) * (v' * F);
  F(1, :) *= signx;
  if ~eflag
    #    F ~ [y1 - y0, y2 - y1, ..., yn - y1]
    F(:, 2 : n) += F(:, 1) * ones(1, n - 1);
  endif
  #    F ~ [y1 - y0, y2 - y0, ..., yn - y0]
  
  ## put the principal components of nonproductive risk
  #  in F(2 : m, :) 
  if m > 1
    [U, S, V] = svd(F(2 : m, :), 0);
    F(2 : m, :) = S * V';
  endif
  #    F ~ [y1 - y0, y2 - y0, ..., yn - y0]
  
  ## unpermute columns of F
  F = F(:, Jinv);
  #    F ~ [z1 - z0, z2 - z0, ..., zn - z0]
  
  ## scale these results if pput != 1
  if pput != 1
    E *= pput; e0 *= pput;
    eF *= sqrtpput; f0 *= sqrtpput; F *= sqrtpput;
  endif

endfunction
\end{lstlisting}
}


\clearpage

\end{document}